\pgfplotsset{/pgf/number format/use comma,compat=newest}
\newcommand\reallywidehat[1]{%
\savestack{\tmpbox}{\stretchto{%
  \scaleto{%
    \scalerel*[\widthof{\ensuremath{#1}}]{\kern.1pt\mathchar"0362\kern.1pt}%
    {\rule{0ex}{\textheight}}
  }{\textheight}%
}{2.4ex}}%
\stackon[-6.9pt]{#1}{\tmpbox}%
}
\tikzstyle{spring}=[thick,decorate,decoration={zigzag,pre length=0.1cm,post
\theoremstyle{plain} 
\newtheorem{thm}{Theorem}[section]
\newtheorem{prop}[thm]{Proposition} 
\theoremstyle{definition} 
\newtheorem{defn}{Definition}[section]
\newtheorem{lemma}{Lemma}[section]
\theoremstyle{remark} 
\newtheorem{rem}{Remark}[section]
\DeclareMathOperator\arctanh{arctanh}
\numberwithin{equation}{section}
\DeclareMathAlphabet{\mathsl}{OT1}{cmss}{m}{sl}
\SetMathAlphabet{\mathsl}{bold}{OT1}{cmss}{bx}{sl}
\newcommand{\s}{\ensuremath{\sigma}}
\begin{document}
\title[]{Limit theorems for the non-convex multispecies Curie-Weiss model}

%
%

\author[]{Francesco Camilli$^1$}
\address{Francesco Camilli}
\curraddr{}
\email{fcamilli@ictp.it}
\thanks{}

\author[]{Emanuele Mingione$^2$}
\address{Emanuele Mingione}
\curraddr{}
\email{emanuele.mingione2@unibo.it}
\thanks{}

\author[]{Godwin Osabutey$^2$\\ \\ \\ \tiny{$^1$Quantitative Life Sciences, International Centre for Theoretical
Physics, Trieste, Italy \\$^2$Department of Matematics, Alma Mater Studiorum - University of Bologna, Bologna Italy}
}
\address{Godwin Osabutey}
\curraddr{Department of Physics, Computer Science and Mathematics, University of Modena and Reggio Emilia, Modena, Italy}
\email{gosabutey@unimore.it}
\thanks{}

\subjclass[2000]{}

\keywords{Non-convex Curie-Weiss model, Central limit theorem, Ising model}

\date{\today}

\dedicatory{}

\begin{abstract}
We study the thermodynamic properties of the generalized non-convex multispecies Curie-Weiss model, where interactions among different types of particles (forming the species) are encoded in a generic matrix. For spins with a generic prior distribution, we compute the pressure in the thermodynamic limit using simple interpolation techniques. For Ising spins, we further analyze the fluctuations of the magnetization in the thermodynamic limit under the Boltzmann-Gibbs measure. It is shown that a central limit theorem holds for a rescaled and centered vector of species magnetizations, which converges to either a centered or non-centered multivariate normal distribution, depending on the rate of convergence of the relative sizes of the species.
\end{abstract}

\maketitle

\section{Introduction}

The Curie-Weiss model, also known as the mean-field Ising model, is one of the simplest models of magnetism that exhibits phase transitions \cite{Kac}. In this model, the spins  assume discrete binary values ($\pm 1$) and interact uniformly with one another. Due to its simplicity and analytical tractability, it has been applied in a variety of fields, including voting dynamics \cite{Kirsch_Langner_2014, Kirsch_2016, Gsänger_Hösel_MKM_2024} and social collective behavior \cite{Brock_Durlauf_2001, Blume_Durlauf_2003, Gsänger_Hösel_MKM_2024, Marsman_Tanis_BW_2019}. The multispecies extension of the Curie-Weiss model \cite{Contucci_Ghirlanda_2007, BurioniContucciFedeleVerniaVezzani2015, OpokuOsabuteyK2019, ContucciKO2022} has been proposed to capture the large-scale behavior of interacting systems involving multiple types of interacting particles, whose strength depends on which species  each belong to. 
These extensions were originally introduced in statistical physics as approximations of lattice models \cite{FroyenSH1976, Bidaux1986} and meta-magnets \cite{KinCohen1975, GalamYS1998}, which exhibit both ferromagnetic and antiferromagnetic interactions. From the mathematical physics perspective the free energy of a two-species ferromagnet was rigorously derived in \cite{Gallo2008} and further investigated in \cite{Genovese_Barra_2009, Barra_Genovese_Guerra_2011, G_Tantari_2016} with the Hamilton-Jacobi formalism. Beyond the computation of the free energy, the fluctuations of the order parameter of the multispecies Curie-Weiss model were initially discussed in \cite{Fedele_Contucci_2011}, under a  convexity assumption on the Hamiltonian  and using the same approach of \cite{Ellis_Newman_1978}. 
More recently stronger results have been proved using different approaches. Notably \cite{Löwe_Schubert_2018, Knöpfel_Löwe_Schubert_Sinulis_2020} have shown validity of central limit theorems (CLTs) and provided their convergence rate via Stein's method, while in \cite{Kirsch_Toth_2020,Fleermann_Kirsch_Toth_2022, Kirsch_Toth_2022} a moment generating functional approach has been used. We stress that in all the aforementioned literature the convexity assumption on the interaction matrix always plays a crucial role. An analogous convexity condition also appears in the context of disordered multispecies models where a formula for the free energy  is  known  only in the convex case  \cite{MSK_original,MSK_Panchenko}, for spherical models \cite{baik2020,Subag} and on the Nishimori line 
\cite{MSKNL, DBMNL, Nishiletter}. The non-convex case remains an open problem (see \cite{noi_deep2, mourrat200,Bimourrat} for partial results on the subject). Concerning models with random interactions, few fluctuation theorems for the order parameter are known: for the Sherrington-Kirkpatrick model the validity of CLTs is limited to high-temperature regions \cite{guerra2002central,guerra2002quadratic,Talagrand2003spin}, unless the model is on the Nishimori line \cite{camilli2023central}, while the multispecies case is studied here \cite{Dey}.

The aim of this work is to study the multispecies Curie Weiss model in full generality, in particular without any convexity assumption on the interaction matrix. 
The work is divided into two main parts. The first part examines the limiting free energy of the model for  arbitrary spin distributions supported on $[-1,1]$. Using a combination of interpolation methods and decoupling techniques, we derive a variational formula for the free energy. The second part  focuses on the asymptotic behavior of the vector of species magnetization in the case of Ising spins. The methods used in the latter are inspired by \cite{ MSB21, Contucci_Mingione_Osabutey_2024}. By generalizing these methods we demonstrate that the rescaled vector of species magnetization follows a standard CLT in the region of phase space where the order parameter concentrates on a single value. On the other hand, when concentration may occur at multiple points, a conditional CLT still applies.

The paper is structured as follows: Section 2 gives a description for the generalized multispecies Curie-Weiss model with a generic coupling matrix. The main results are presented in Section 3, followed by detailed proofs in Section 4. Section 5 concludes the paper and discusses future directions. The Appendix contains some technical results used in the proofs.

\section{Model Description and Definitions}

\tikzset{external/figure name/.add={}{MCW}}
Let $\rho$ be a probability measure supported on $[-1,1]$, and $K$ an integer representing the number of different species. Consider now a Hamiltonian system made of $N$ interacting spins, each labeled by an integer, $\sigma_i$, that lies in the set of indices $\Lambda=\{1,2,\dots,N\}$. To create a multispecies structure we divide the set of indices $\Lambda$ into $K$ disjoint subsets $\Lambda_{p}$ of cardinality $N_p \; \text{for} \; p\,=\,1,\ldots,K$, namely
\begin{equation}\label{densities}
\Lambda_{p}\cap\Lambda_{l}=\emptyset\;\; \forall\,p\neq l, \;  \; \sum_{p=1}^K |\Lambda_{p}|= N_1 + \cdots + N_K = N\,.
\end{equation}
For future convenience we also introduce the \emph{form factors}, or relative sizes ratios $(\alpha_p)_{p\leq K}=(N_p/N)_{p\leq K}$, that shall be collected into a diagonal $K\times K$ matrix $\boldsymbol{\alpha}=\text{diag}(\alpha_p)_{p\leq K}$. Naturally, one has $\sum_{p\leq K}{\alpha_p}=1$. In the following we also allow these ratios to depend on $N$, $\alpha_{N,p}$, and we shall call their limit $\alpha_p$:
\begin{equation}\label{alphaconvergence}
\lim_{N\to\infty }\alpha_{N,p}=\lim_{N\to\infty }\frac{N_p}{N}= \alpha_p ,\;\;p\leq K.
\end{equation}
However, with a little abuse of notation we keep writing only $\alpha_p$ instead of $\alpha_{N,p}$. 

Let us turn back to our Hamiltonian system. The configuration space is $[-1,1]^N=\Omega_N$. Now that the multispecies structure has been introduced, let the interaction be tuned by the symmetric matrix $\mathbf{J} \in \mathbb{R}^{K\times K}$ and the species specific external field $\mathbf{h}\in\mathbb{R}^K$. A schematic representation of the interaction network is displayed in Figure \ref{interaction scheme}.

\begin{figure}[h]
\centering
\begin{tikzpicture}[
roundnode/.style={circle, draw=green!60, fill=green!5, very thick, minimum size=7mm},
squarednode/.style={rectangle, draw=red!60, fill=red!5, very thick, minimum size=4mm},scale=1.2]
\filldraw[color=blue,fill=blue!10] (0,2) circle (1.25);
\filldraw[color=red,fill=red!10] (-2,-1) circle (1.25);
\filldraw[color=green,fill=green!10] (2,-1) circle (1.25);

\filldraw [gray] (-2.2,-1.8) circle (2pt);
\filldraw [gray] (-2.5,-0.2) circle (2pt);
\draw (-2.2,-1.8)--(-2.5,-0.2);

\filldraw [gray] (-0.5,1.5) circle (2pt);
\filldraw [gray] (0.5,2) circle (2pt);
\draw (-0.5,1.5)--(0.5,2);

\filldraw [gray] (2.2,-1.8) circle (2pt);
\filldraw [gray] (2.5,-0.2) circle (2pt);
\draw (2.2,-1.8)--(2.5,-0.2);

\draw (-2.2,-1.8)--(-0.5,1.5);
\draw (-2.2,-1.8)--(0.5,2);
\draw (-2.2,-1.8)--(2.2,-1.8);
\draw (-2.2,-1.8)--(2.5,-0.2);

\draw (-2.5,-0.2)--(-0.5,1.5) ;
\draw (-2.5,-0.2)--(0.5,2) ;
\draw (-2.5,-0.2)--(2.2,-1.8) ;
\draw (-2.5,-0.2)--(2.5,-0.2) ;

\draw (-0.5,1.5)--(2.2,-1.8) ;
\draw (-0.5,1.5)--(2.5,-0.2) ;

\draw (0.5,2)--(2.2,-1.8) ;
\draw (0.5,2)--(2.5,-0.2) ;

\node[align=left] at (1.75,2) {$\Lambda_{p}$};
\node[align=right] at (-3.75,-1) {$\Lambda_{l}$};
\node[align=left] at (3.75,-1) {$\Lambda_{r}$};

\node[align=left] at (-1.5,1) {$J_{pl}$};
\node[align=right] at (0,-2.25) {$J_{lr}$};
\node[align=left] at (1.75,1) {$J_{pr}$};
\end{tikzpicture}
\caption{Interaction scheme for the multispecies model.}\label{interaction scheme} 
\end{figure}
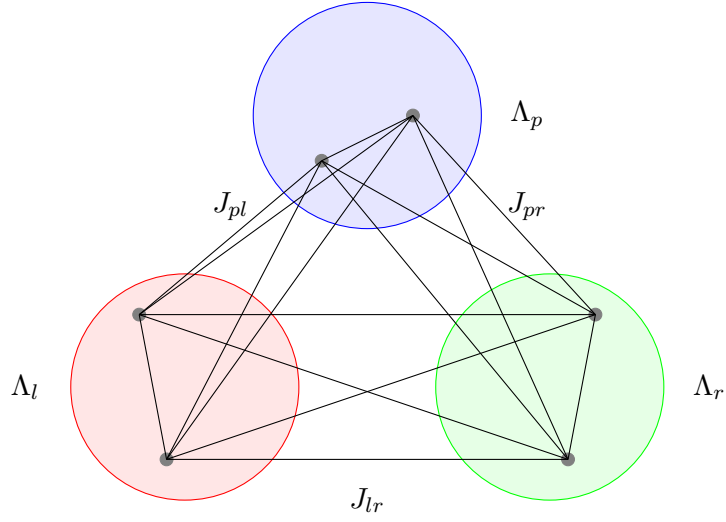

The model under study is thus defined as follows. Let $\sigma=(\sigma_i)_{i\in\Lambda}\in\Omega_N$, then for a given $(\mathbf{J},\mathbf{h})\in \mathbb{R}^{K\times K}\times\mathbb{R}^K $ the multispecies model is defined by the Hamiltonian:

\begin{align}\label{MCW_hamiltonian}
H_N(\sigma)=-\frac{1}{2N}\sum_{i,j=1}^NJ_{ij}\sigma_i\sigma_j-\sum_{i=1}^Nh_i\sigma_i=-\frac{N}{2}\sum_{p,l=1}^K m_p  \alpha_p J_{pl}\alpha_lm_{l}-N\sum_{p=1}^K\alpha_ph_pm_p  
\end{align}
where for each species $p$ the  magnetization density is:
\begin{align}\label{mag ms}
    m_p (\sigma)=\frac{1}{N_p}\sum_{i\in \Lambda_p}\sigma_i \,
\end{align}
and we denote by $\mathbf{m}_N=(m_p )_{p\leq K}\in[-1,1]^K$ the magnetization vector.
The Hamiltonian \eqref{MCW_hamiltonian} can now be rewritten as:
\begin{align}\label{MCW_hamiltonian_matrix}
    H_N=-\frac{N}{2}(\mathbf{m}_N,\Delta \mathbf{m}_N )-N(\tilde{\mathbf{h}},\mathbf{m}_N )
\end{align}
where $(\cdot, \cdot)$ denote scalar product in $\mathbb{R}^K$ and 
\begin{align}\label{defn delta}
    \Delta= \boldsymbol\alpha \mathbf{J} \boldsymbol\alpha \qquad  \text{and } \qquad \tilde{\mathbf{h}}=\boldsymbol \alpha \mathbf{h}.
\end{align}
The joint distribution of $\s$ is governed by a Boltzmann-Gibbs measure
\begin{equation}\label{Gibbs}
     \mathcal{G}_{N} (\sigma) := \frac{e^{-  H_N(\sigma)}}{Z_N} \prod_{i=1}^{N} d\rho(\sigma_i),
\end{equation}
where $Z_N=\int_{[-1,1]^N} e^{-  H_N(\sigma)} \prod_{i=1}^{N} d\rho(\sigma_i)$ is the partition function. Averages w.r.t.\ $\mathcal{G}_N$ will be denoted with $\omega_{N}(\cdot)$. The thermodynamic pressure of the system is given by:
\begin{align}\label{thermo pressure GCW}
p_N = \frac{1}{N} \log Z_N .
\end{align}

\section{Main Results}
In this section, we provide the variational formula for the large $N$ limit of the thermodynamic pressure of the generalized multispecies Curie-Weiss model \eqref{MCW_hamiltonian_matrix}. In addition, in the case of Ising spins, we give central limit theorems for the magnetization vector $\mathbf{m}_N$. We stress again that the coupling matrix $\mathbf{J}$ used throughout the work is an arbitrary symmetric real matrix.

\subsection{Thermodynamic limit of the   pressure per particle}
Our first result is a representation of the large $N$ limit of, $p_N$, \eqref{thermo pressure GCW} in terms of a variational problem in $\mathbb{R}^K$. Let us define the following variational function: 
\begin{equation}\label{varfunctional}
p_{var}(\Delta, \mathbf{h}; \mathbf{x})\equiv p_{var}(\mathbf{x})= -\frac{1}{2}(\mathbf{x},\Delta\mathbf{x})+\sum_{p=1}^K\alpha_p\log \int_{\mathbb{R}} d\rho(\s) \exp \left[\s\left(\sum_{l=1}^KJ_{pl}\alpha_lx_l+h_p\right) \right] \, .
\end{equation}
Now, let $O$ be the orthogonal matrix diagonalizing $\Delta$.
Then the following theorem holds:

\begin{thm}\label{MCW_GENERIC} For any $(\mathbf{J},\mathbf{h})\in \mathbb{R}^{K\times K}\times\mathbb{R}^K $ one has 
\begin{align}\label{solution_MCW_GENERIC}
\lim_{N\to\infty}p_N=\inf_{z_1,\dots,z_a}\sup_{z_{a+1},\dots,z_K}
    p_{var}(O\mathbf{z}) \,.
\end{align}
\end{thm}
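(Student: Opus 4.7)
The strategy is to linearize the quadratic form $\tfrac{N}{2}(\mathbf{m}_N,\Delta\mathbf{m}_N)$ in the Hamiltonian \eqref{MCW_hamiltonian_matrix} via the spectral decomposition $\Delta=ODO^T$, treating the positive- and negative-eigenvalue subspaces of $\Delta$ by complementary Gaussian identities. Write $D=\mathrm{diag}(\lambda_1,\dots,\lambda_K)$ with $\lambda_1,\dots,\lambda_a<0$ and $\lambda_{a+1},\dots,\lambda_K\ge 0$, and set $\mathbf{y}=O^T\mathbf{m}_N$, so that $(\mathbf{m}_N,\Delta\mathbf{m}_N)=\sum_i\lambda_iy_i^2$. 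The change of variables $\mathbf{x}=O\mathbf{z}$ will turn the negative-eigenvalue coordinates $z_1,\dots,z_a$ into the outer-infimum variables and the non-negative ones $z_{a+1},\dots,z_K$ into the inner-supremum variables of \eqref{solution_MCW_GENERIC}.

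For the upper bound I would combine, on the diagonalized form, the standard Hubbard--Stratonovich transform
\begin{equation*}
e^{\frac{N\lambda_i}{2}y_i^2}=\sqrt{\tfrac{N\lambda_i}{2\pi}}\int_{\mathbb{R}}dz_i\,e^{-\frac{N\lambda_i}{2}z_i^2+N\lambda_iz_iy_i}\qquad(i>a)
\end{equation*}
with the pointwise completing-the-square identity
\begin{equation*}
e^{\frac{N\lambda_j}{2}y_j^2}=\inf_{z_j\in\mathbb{R}}e^{-\frac{N\lambda_j}{2}z_j^2+N\lambda_jz_jy_j}\qquad(j\le a),
\end{equation*}
whose infimum is attained at $z_j=y_j$. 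Plugging these into $Z_N$, using $\mathbf{x}=O\mathbf{z}$ to rewrite $\sum_i\lambda_iz_iy_i=(\Delta\mathbf{x},\mathbf{m}_N)$, and factorizing the resulting spin-integral across species give, after swapping $\inf$ with $\int d\rho^{\otimes N}$ via $\int\inf\le\inf\int$,
\begin{equation*}
Z_N\le\inf_{z_1,\dots,z_a}C_N\int\prod_{i>a}dz_i\,\exp\!\big[N\,p_{var}(O\mathbf{z})\big].
\end{equation*}
A Laplace estimate of the Gaussian integral in $z_{a+1},\dots,z_K$ then yields $\limsup_N p_N\le\inf_{z_1,\dots,z_a}\sup_{z_{a+1},\dots,z_K}p_{var}(O\mathbf{z})$.

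The matching lower bound is the main obstacle, because the inf-identity above is intrinsically one-sided. My plan is first to establish the theorem in the positive semidefinite case $a=0$, where it reduces to $\lim_N p_N=\sup_{\mathbf{x}}p_{var}(\mathbf{x})$, by a Guerra--Toninelli interpolation: for any trial $\mathbf{x}^*\in\mathbb{R}^K$, the linear interpolation of $H_N$ towards its mean-field linearization at $\mathbf{x}^*$ has derivative $\partial_tp_N(t)=\tfrac{N}{2}\langle(\mathbf{m}_N-\mathbf{x}^*,\Delta(\mathbf{m}_N-\mathbf{x}^*))\rangle_t\ge 0$ by positive semidefiniteness of $\Delta$, yielding $p_N\ge p_{var}(\mathbf{x}^*)$ for every trial. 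For the general case I would decompose $\Delta=\Delta_+-\Delta_-$ through the spectral projections and, for any $\mathbf{x}_-$ supported in the negative eigenspace, use
\begin{equation*}
e^{-\frac{N}{2}(\mathbf{m}_N,\Delta_-\mathbf{m}_N)}=e^{\frac{N}{2}(\mathbf{x}_-,\Delta_-\mathbf{x}_-)-N(\mathbf{x}_-,\Delta_-\mathbf{m}_N)}\cdot e^{-\frac{N}{2}(\mathbf{m}_N-\mathbf{x}_-,\Delta_-(\mathbf{m}_N-\mathbf{x}_-))}
\end{equation*}
to factor $Z_N$ as $e^{\frac{N}{2}(\mathbf{x}_-,\Delta_-\mathbf{x}_-)}\tilde Z_N(\mathbf{x}_-)$ times the expectation of the remaining Gaussian damping factor under the convex reference model with interaction $\Delta_+$ and external field $\tilde{\mathbf{h}}-\Delta_-\mathbf{x}_-$. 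The hardest step is to pick $\mathbf{x}_-^*$ as a minimizer of $\sup_{\mathbf{x}_+}p_{var}$ along the negative eigenspace, so that the envelope/saddle condition identifies $\mathbf{x}_-^*$ with the projection of the convex Gibbs mean onto the negative eigenspace, and then to show by mean-field concentration that the Gaussian damping factor is $e^{o(N)}$; combined with the convex-case formula applied to $\tilde Z_N(\mathbf{x}_-^*)$, this provides the matching lower bound $\liminf_N p_N\ge\inf_{z_1,\dots,z_a}\sup_{z_{a+1},\dots,z_K}p_{var}(O\mathbf{z})$.
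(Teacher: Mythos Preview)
Your overall architecture is sound and genuinely different from the paper's. For the upper bound you use Hubbard--Stratonovich on the positive eigenspace together with the completing-the-square identity on the negative eigenspace, followed by Laplace; the paper instead proves a \emph{Hybrid Sum Rule} by interpolating only the $\Delta_-$ part and then linearizes the surviving $\Delta_+$ quadratic by a grid decomposition of the magnetization range. Both routes are valid; yours is arguably more direct once one accepts the Laplace step (note the edge case $\lambda_i=0$, where the corresponding $z_i$ simply drops out of $p_{var}(O\mathbf z)$, and a minor slip: $\partial_t p_N(t)$ in your Guerra--Toninelli paragraph should not carry an $N$).

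The substantive divergence is in the lower bound. The paper does \emph{not} split $\Delta$ at all for this direction: it interpolates the full model towards a linear (one-body) model, uses convexity of $t\mapsto p_N(t)$ to get $p_N\ge p_{var}(\mathbf z)+\tfrac12\,\omega_{\text{lin}}\big[(\mathbf m_N-\mathbf z,\Delta(\mathbf m_N-\mathbf z))\big]$, and then exploits that at the linear endpoint the Gibbs state is a \emph{product} measure. Choosing $\mathbf z=\bar{\mathbf z}$ a critical point of $p_{var}$ forces $\omega_{\text{lin}}(\mathbf m_N)-\bar{\mathbf z}\in\mathrm{Ker}(\Delta)$, and factorization gives the remainder as $O(1/N)$ with no concentration input whatsoever.

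Your plan, by contrast, hinges on showing that under the \emph{interacting} convex reference model with coupling $\Delta_+$ and field $\tilde{\mathbf h}-\Delta_-\mathbf x_-^*$, the magnetization concentrates and its projection on the negative eigenspace matches $\mathbf x_-^*$. This can be pushed through (Jensen reduces it to $\langle(\mathbf m_N-\mathbf x_-^*,\Delta_-(\mathbf m_N-\mathbf x_-^*))\rangle_{\tilde{}}\to0$, and the envelope condition at the saddle does link the convex model's mean-field equation to $\mathbf x_-^*$), but it is an extra layer of work and becomes delicate if the auxiliary convex model sits at a non-uniqueness point of its own variational problem. The paper's trick of interpolating all the way to a product state and evaluating the remainder there is both simpler and more robust, and you may want to adopt it for the lower half.
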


It is worth mentioning that the large $N$ limit of $p_N$ can be obtained through large deviations methods \cite{Eisele_Ellis_1988, Dembo_Zeitouni_2010, Opoku_Osabutey_2018}, while in this work we employ interpolation bounds.
We refer interested readers to \cite{Eisele_Ellis_1988}, where the phase diagram of the Curie-Weiss model with ferromagnetic interaction and generic compact spin distribution is analyzed.


\subsection{Fluctuations of the magnetization}
Here we state the fluctuations results of the species magnetization for the model with binary spins, i.e., $\sigma=(\s_i)_{i\leq N}\in \{-1,1\}^N$. Let us define 
\begin{equation}\label{def:functionf}
f(\mathbf{x}) = \frac{1}{2}(\mathbf{x},\Delta \mathbf{x}  )+(\tilde{\mathbf{h}},\mathbf{x} ) - ( \hat{\boldsymbol\alpha}, I(\mathbf{x}))\,,\quad \mathbf{x}\in[-1,1]^K
\end{equation}
where $I(\mathbf{x})=(I(x_p))_{p\leq K}\in\mathbb{R}^K$ with 
\begin{equation}\label{b entropy}
I(x)=\frac{1-x}{2}\log \left( \frac{1-x}{2}\right) +\frac{1+x}{2}\log\left(\frac{1+x}{2}\right) \,,\;\;x\in[-1,1] 
\end{equation}
and $\hat{\boldsymbol{\alpha}} $ is the vector associated to the diagonal matrix $\boldsymbol{\alpha}$. In the following, we assume  the convergence in \eqref{alphaconvergence} is fast enough. More precisely  setting $\boldsymbol\alpha_{N}=\text{diag}(\alpha_{N,p})_{p\leq K}$ we assume that 
\begin{equation}
\boldsymbol\alpha_{N} = \boldsymbol{\alpha}+ N^{-\theta}\text{diag}(\boldsymbol{\beta})
\end{equation}
where $\boldsymbol{\beta}=(\beta_p)_{p\leq K}$ for some $0<\beta_p<\infty$ and $\theta \in \left[\frac{1}{2}, \infty\right) $. The  fluctuations of the magnetization vector $\mathbf{m}_N$ depend on the global maximum point(s) of the function $f$ in \eqref{def:functionf}. A detailed analysis of the critical points and the maximizers  can be found  in \cite{Collet_2014, Gallo2008, Löwe_Schubert_2018} for $K=2$.  We mention that for general $K$ and  zero external fields, there is a  high temperature condition \cite{Knöpfel_Löwe_Schubert_Sinulis_2020, Kirsch_Toth_2022} that implies that the zero vector is the unique global maximum of $f$ and then the system has no spontaneous magnetization. The following statements contain central limit theorems for the vector of global species magnetization $\mathbf{m}_N=(m_1,...,m_K)$ with respect to the measure $\mathcal{G}_N$ \eqref{Gibbs}. Let us start with  the case of a unique non-degenerate global maximum. 

\begin{thm}\label{uniqness CLT}
Assume that  $(\mathbf{J},\mathbf{h})\in \mathbb{R}^{K\times K}\times\mathbb{R}^K $ are such that $f$ in \eqref{def:functionf} has a unique global maximizer  $\boldsymbol{\mu} = (\mu_r)_{r\leq K}$ with Hessian $\mathcal{H}_f(\boldsymbol{\mu})\prec 0$. Then we have the following convergence in law under the measure $\mathcal{G}_N$:
\begin{align}
    \left(\sqrt{N}\sqrt{\boldsymbol\alpha_{N}}(\mathbf{m}_N-\boldsymbol\mu)\right) \quad\xlongrightarrow[N \to\infty]{\mathcal{D}} \quad
    \begin{cases}
    \mathcal{N}(\boldsymbol\nu, \sqrt{\boldsymbol\alpha}\mathcal{H}^{-1}_{f}(\boldsymbol{\mu}) \sqrt{\boldsymbol\alpha})\; &\text{if }\theta = \frac{1}{2} \\
     \mathcal{N}(\mathbf{0}, \sqrt{\boldsymbol\alpha}\mathcal{H}^{-1}_{f}(\boldsymbol{\mu}) \sqrt{\boldsymbol\alpha})\; &\text{if }\theta > \frac{1}{2}
    \end{cases}
\end{align}
where $\boldsymbol\nu = -\sqrt{\boldsymbol\alpha}\; 
     \mathcal{H}_{f}^{-1}(\boldsymbol\mu) \,\boldsymbol\alpha \, \mathbf{J}\, \text{diag}(\boldsymbol{\mu})\boldsymbol{\beta}$.
\end{thm}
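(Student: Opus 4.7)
The plan is to compute the moment generating function of $\mathbf{Y}_N := \sqrt{N}\sqrt{\boldsymbol{\alpha}_N}(\mathbf{m}_N - \boldsymbol{\mu})$ by Laplace's method and match it, in the limit, with the MGF of the claimed Gaussian. The starting point is that since $H_N$ depends on $\sigma$ only through the species magnetizations and the spins are exchangeable within each species, the distribution of $\mathbf{m}_N$ under $\mathcal{G}_N$ has the explicit representation
\begin{equation}
\mathcal{G}_N(\mathbf{m}_N = \mathbf{x}) = \frac{1}{Z_N}\prod_{p=1}^K \binom{N_p}{N_p(1+x_p)/2}\exp\!\left[\frac{N}{2}(\mathbf{x},\Delta_N\mathbf{x}) + N(\tilde{\mathbf{h}}_N,\mathbf{x})\right]
\end{equation}
where $\mathbf{x}$ varies on the product lattice of spacings $2/N_p$. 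Stirling's formula turns the multinomial factor into $\exp[-N(\hat{\boldsymbol{\alpha}}_N,I(\mathbf{x}))]$ up to an explicit smooth prefactor, so the density takes the form $C_N\,\varphi_N(\mathbf{x})\,\exp[Nf_N(\mathbf{x})]$ with $f_N$ obtained from $f$ by the replacement $\boldsymbol{\alpha}\mapsto\boldsymbol{\alpha}_N$, and $\varphi_N$ uniformly controlled on compact subsets of $(-1,1)^K$.

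I then rescale via $\mathbf{y} = \sqrt{N}\sqrt{\boldsymbol{\alpha}_N}(\mathbf{x}-\boldsymbol{\mu})$ and Taylor expand $Nf_N$ around $\boldsymbol{\mu}$. Writing $f_N = f + N^{-\theta}f_1 + O(N^{-2\theta})$ for the first-order correction induced by $\boldsymbol{\alpha}_N = \boldsymbol{\alpha} + N^{-\theta}\mathrm{diag}(\boldsymbol{\beta})$ and using $\nabla f(\boldsymbol{\mu})=0$, I obtain
\begin{equation}
Nf_N\!\left(\boldsymbol{\mu}+\tfrac{1}{\sqrt{N}}\boldsymbol{\alpha}_N^{-1/2}\mathbf{y}\right) - Nf_N(\boldsymbol{\mu}) = \tfrac{1}{2}(\mathbf{y},\tilde{\mathcal{H}}_N\mathbf{y}) + N^{1/2-\theta}\bigl(\boldsymbol{\alpha}^{-1/2}\nabla f_1(\boldsymbol{\mu}),\mathbf{y}\bigr) + o(1)
\end{equation}
uniformly on compact sets in $\mathbf{y}$, with $\tilde{\mathcal{H}}_N \to \boldsymbol{\alpha}^{-1/2}\mathcal{H}_f(\boldsymbol{\mu})\boldsymbol{\alpha}^{-1/2}$. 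A direct computation, using the stationarity identity $\arctanh(\boldsymbol{\mu}) = \mathbf{J}\boldsymbol{\alpha}\boldsymbol{\mu}+\mathbf{h}$, collapses the gradient to $\nabla f_1(\boldsymbol{\mu}) = \boldsymbol{\alpha}\mathbf{J}\,\mathrm{diag}(\boldsymbol{\mu})\boldsymbol{\beta}$.

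The MGF $\mathbb{E}_{\mathcal{G}_N}\!\bigl[\exp((\mathbf{t},\mathbf{Y}_N))\bigr]$ is then evaluated by adding the source $(\mathbf{t},\mathbf{y})$ to the exponent, completing the square, and passing the resulting Gaussian lattice sum to a Riemann integral; the lattice spacing in $\mathbf{y}$ is $O(N^{-1/2})$, which is exactly what the rescaling is designed to achieve. When $\theta>1/2$ the linear term vanishes and the limit is a centered Gaussian with covariance $\sqrt{\boldsymbol{\alpha}}\,\mathcal{H}_f^{-1}(\boldsymbol{\mu})\sqrt{\boldsymbol{\alpha}}$. When $\theta=1/2$ the linear term is $O(1)$; completing the square shifts the mean by $-\tilde{\mathcal{H}}^{-1}\boldsymbol{\alpha}^{-1/2}\nabla f_1(\boldsymbol{\mu}) = -\sqrt{\boldsymbol{\alpha}}\mathcal{H}_f^{-1}(\boldsymbol{\mu})\boldsymbol{\alpha}\mathbf{J}\,\mathrm{diag}(\boldsymbol{\mu})\boldsymbol{\beta} = \boldsymbol{\nu}$, producing the non-centered Gaussian. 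Heuristically, $\boldsymbol{\nu}$ encodes the $O(N^{-1/2})$ drift of the true maximizer of $f_N$ away from the fixed point $\boldsymbol{\mu}$.

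The main obstacle is the tail estimate needed to restrict the Laplace analysis to a shrinking neighborhood of $\boldsymbol{\mu}$. Uniqueness and non-degeneracy of the maximum yield some $\eta>0$ with $f(\mathbf{x}) \le f(\boldsymbol{\mu}) - \eta$ outside any prescribed neighborhood, so the tail contribution is at most $e^{-N\eta/2}$, which easily absorbs the $e^{O(\sqrt{N})}$ growth of the source term. Crucially, the non-convexity of $\mathbf{J}$ (hence of $\Delta$) plays no role in this step: what is used is only the local negative-definiteness of $\mathcal{H}_f(\boldsymbol{\mu}) = \Delta - \mathrm{diag}(\alpha_p/(1-\mu_p^2))$, which is assumed. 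A minor technical point is that $I'$ diverges at $\pm 1$, so the neighborhood of $\boldsymbol{\mu}$ must be kept strictly inside $(-1,1)^K$; the boundary contribution is suppressed since the binomial weights themselves vanish there.
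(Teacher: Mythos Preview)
Your proposal is correct and follows essentially the same route as the paper: both compute the moment generating function of $\sqrt{N}\sqrt{\boldsymbol{\alpha}_N}(\mathbf{m}_N-\boldsymbol{\mu})$ by writing the law of $\mathbf{m}_N$ combinatorially, applying Stirling and a Laplace-type expansion near the maximizer, and killing the tail via uniqueness and non-degeneracy of $\boldsymbol{\mu}$. The only organizational difference is that the paper tracks the finite-$N$ maximizers $\boldsymbol{\mu}_N$ and $\boldsymbol{\mu}_{N,\mathbf{t}}$ of $f_N$ and of the tilted functional $f_{N,\mathbf{t}}$, expands $Z_{N,\mathbf{t}}/Z_N$ around those, and then obtains the shift $\boldsymbol{\mu}_N-\boldsymbol{\mu}$ by implicit differentiation in $\boldsymbol{\beta}$, whereas you expand directly around the fixed $\boldsymbol{\mu}$ and read off the drift from the first-order correction $\nabla f_1(\boldsymbol{\mu})=\boldsymbol{\alpha}\mathbf{J}\,\mathrm{diag}(\boldsymbol{\mu})\boldsymbol{\beta}$; these are equivalent bookkeepings of the same expansion.
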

When instead multiple global maxima coexists, a proper condition is sufficient to restore Gaussianity:
\begin{thm}\label{nonuniqness CLT}
Assume that $(\mathbf{J},\mathbf{h})\in \mathbb{R}^{K\times K}\times\mathbb{R}^K $ are such that $f$ in \eqref{def:functionf} has $n$ global maximizers $\boldsymbol{\mu}^1, \ldots, \boldsymbol{\mu}^n$ with Hessian $\mathcal{H}_f(\boldsymbol{\mu}^i)\prec 0$ for all $i=1,\ldots, n$. For any collection $(A_i)_{i\leq n}$  subsets  of $[-1,1]^K$
such that $\boldsymbol\mu^{i}\in \text{int}(A_i)$ and $f(\boldsymbol\mu^i) > f(\mathbf{x})$ for all $\mathbf{x} \in \text{cl}(A_i)\backslash \{\boldsymbol\mu^i\}$ then

\begin{align}
    \left(\sqrt{N}\sqrt{\boldsymbol\alpha_{N}}(\mathbf{m}_N-\boldsymbol\mu^i)\right) |\{\mathbf{m}_N\in A_i\}\quad\xlongrightarrow[N\to\infty]{\mathcal{D}} \quad
    \begin{cases}
    \mathcal{N}(\boldsymbol\nu ^i, \sqrt{\boldsymbol\alpha}\mathcal{H}^{-1}_{f}(\boldsymbol{\mu}^i) \sqrt{\boldsymbol\alpha})\; &\text{if }\theta = \frac{1}{2}\\
     \mathcal{N}(\mathbf{0}, \sqrt{\boldsymbol\alpha}\mathcal{H}^{-1}_{f}(\boldsymbol{\mu}^i) \sqrt{\boldsymbol\alpha})\; &\text{if }\theta > \frac{1}{2}
    \end{cases}
\end{align}
and $\boldsymbol\nu ^i = -\sqrt{\boldsymbol\alpha}\; 
     \mathcal{H}_{f}^{-1}(\boldsymbol\mu ^i) \,\boldsymbol\alpha \, \mathbf{J}\, \text{diag}(\boldsymbol{\mu}^i)\boldsymbol{\beta}\;$
for all $i=1,\ldots, n$.
\end{thm}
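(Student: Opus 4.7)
The plan is to reduce Theorem \ref{nonuniqness CLT} to the proof of Theorem \ref{uniqness CLT} by localizing the analysis to the neighborhood $A_i$. Since by assumption $\boldsymbol\mu^i$ is the unique strict maximizer of $f$ on $\text{cl}(A_i)$ and has negative definite Hessian, the restriction of $f$ to $A_i$ satisfies the hypotheses of Theorem \ref{uniqness CLT} locally. For any bounded continuous $g\colon\mathbb{R}^K\to\mathbb{R}$ one has
\begin{equation}
\mathbb{E}_{\mathcal{G}_N}\!\bigl[g(\sqrt{N}\sqrt{\boldsymbol\alpha_N}(\mathbf{m}_N-\boldsymbol\mu^i))\bigm|\mathbf{m}_N\in A_i\bigr]=\frac{\omega_N\bigl(g(\sqrt{N}\sqrt{\boldsymbol\alpha_N}(\mathbf{m}_N-\boldsymbol\mu^i))\,\mathbbm{1}_{A_i}(\mathbf{m}_N)\bigr)}{\omega_N(\mathbbm{1}_{A_i}(\mathbf{m}_N))},
\end{equation}
and I would treat numerator and denominator with the same Laplace-type asymptotics used for Theorem \ref{uniqness CLT}, with the domain of integration reduced to $A_i$.

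First I would write the induced law of $\mathbf{m}_N$ under $\mathcal{G}_N$ combinatorially: for Ising spins the number of configurations in species $p$ producing magnetization $m_p$ is a binomial coefficient, so Stirling's formula gives $\mathcal{G}_N(\mathbf{m}_N=\mathbf{m})\propto \exp[N\hat f_N(\mathbf{m})]$ with $\hat f_N$ converging uniformly to $f$. The prescribed expansion $\boldsymbol\alpha_N=\boldsymbol\alpha+N^{-\theta}\text{diag}(\boldsymbol\beta)$ enters both the quadratic part $\tfrac{1}{2}(\mathbf{m},\Delta_N\mathbf{m})$, with $\Delta_N=\boldsymbol\alpha_N\mathbf{J}\boldsymbol\alpha_N$, and the entropic part $-(\hat{\boldsymbol\alpha}_N,I(\mathbf{m}))$, and a direct computation shows that these corrections amount to an $O(N^{1-\theta})$ linear perturbation in the exponent once $f$ is Taylor-expanded to second order around $\boldsymbol\mu^i$. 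The change of variables $\mathbf{m}\mapsto\mathbf{y}=\sqrt{N}\sqrt{\boldsymbol\alpha_N}(\mathbf{m}-\boldsymbol\mu^i)$ then produces exactly the limiting covariance $\sqrt{\boldsymbol\alpha}\mathcal{H}_f^{-1}(\boldsymbol\mu^i)\sqrt{\boldsymbol\alpha}$ together with the drift $\boldsymbol\nu^i$ when $\theta=1/2$, while for $\theta>1/2$ that drift vanishes in the limit. The ratio structure of the conditional expectation makes the common Laplace prefactor $\exp[Nf(\boldsymbol\mu^i)]$ and the Gaussian normalization cancel between numerator and denominator.

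The main obstacle is the rigorous control of the contribution of magnetizations $\mathbf{m}\in A_i$ lying outside a small neighborhood of $\boldsymbol\mu^i$, i.e.\ the justification that only the Gaussian window around $\boldsymbol\mu^i$ contributes asymptotically. Here the hypothesis $f(\boldsymbol\mu^i)>f(\mathbf{x})$ for every $\mathbf{x}\in\text{cl}(A_i)\setminus\{\boldsymbol\mu^i\}$ is essential: combined with continuity of $f$ and compactness of $\text{cl}(A_i)$, it gives the uniform gap
\begin{equation}
\sup\{f(\mathbf{x}):\mathbf{x}\in\text{cl}(A_i),\,\|\mathbf{x}-\boldsymbol\mu^i\|\ge\varepsilon\}<f(\boldsymbol\mu^i)\qquad\text{for every }\varepsilon>0,
\end{equation}
from which one concludes that the sum of $\exp[N\hat f_N(\mathbf{m})]$ over $\mathbf{m}\in A_i$ with $\|\mathbf{m}-\boldsymbol\mu^i\|\ge\varepsilon$ is exponentially smaller than the contribution from $\|\mathbf{m}-\boldsymbol\mu^i\|<\varepsilon$, at a rate that kills any polynomial loss coming from $g$. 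Once this localization is in place, the Gaussian analysis in a ball of radius $N^{-1/2+\delta}$ around $\boldsymbol\mu^i$ proceeds exactly as in Theorem \ref{uniqness CLT}, and the $n$ cases corresponding to $i=1,\dots,n$ decouple because each $A_i$ only sees its own maximizer. No ingredient beyond the uniform gap estimate and the negative-definite Hessian at $\boldsymbol\mu^i$ is needed to complete the proof.
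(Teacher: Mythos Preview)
Your proposal is correct and follows essentially the same route as the paper: localize to $A_i$ using the uniform gap $\sup_{\text{cl}(A_i)\setminus B_\varepsilon(\boldsymbol\mu^i)}f<f(\boldsymbol\mu^i)$, then rerun the Gaussian/Laplace analysis of the unique-maximizer case inside that neighborhood, with the drift $\boldsymbol\nu^i$ arising from the $N^{-\theta}$ correction in $\boldsymbol\alpha_N$.

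The one organizational difference worth noting is that the paper works through the \emph{moment generating function} rather than bounded continuous test functions: it introduces the perturbed Hamiltonian $H_{N,\mathbf{t}}=H_N-\sqrt{N}(\mathbf{t},\sqrt{\boldsymbol\alpha_N}\mathbf{m}_N)$, proves an asymptotic expansion of the restricted partition function $Z_{N,\mathbf{t}}|_{A_i}$ (their Lemma~\ref{lemma ZN many max}), writes the conditional MGF as $e^{-\sqrt{N}(\mathbf{t},\sqrt{\boldsymbol\alpha_N}\boldsymbol\mu^i)}Z_{N,\mathbf{t}|A_i}/Z_{N|A_i}$, and reads off covariance and drift from the difference $f_{N,\mathbf{t}}(\boldsymbol\mu^i_{N,\mathbf{t}})-f_N(\boldsymbol\mu^i_N)$ via a separate lemma (their Lemma~\ref{generalized expansion pt}) together with the implicit-function computation of $\boldsymbol\mu^i_N-\boldsymbol\mu^i$. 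Your direct Taylor expansion of $\hat f_N$ at $\boldsymbol\mu^i$ (rather than at the shifted maximizer $\boldsymbol\mu^i_N$) accomplishes the same thing in one step, since $\nabla\hat f_N(\boldsymbol\mu^i)=O(N^{-\theta})$ becomes an $O(N^{1/2-\theta})$ linear term after rescaling; the two computations are equivalent reparametrizations of the same quadratic completion. The MGF route buys a clean algebraic identity (the ratio of partition functions), while your Portmanteau route avoids having to justify convergence of an unbounded test function.
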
 

\section{Proofs}
\subsection{Proof of Theorem \ref{MCW_GENERIC}}
Let us start by observating that any indefinite matrix can be decomposed into the sum of two semi-definite matrices:
\begin{align}
    \Delta=\Delta_++\Delta_-\quad\text{where }\Delta_+\geq 0,\,\Delta_-\leq 0 .
\end{align}
Thus, the scalar products seen so far split into two parts of definite signs, for example:
\begin{align}
    (\mathbf{m}_N,\Delta\mathbf{m}_N) =
    (\mathbf{m}_N,\Delta_+\mathbf{m}_N)+
    (\mathbf{m}_N,\Delta_-\mathbf{m}_N) .
\end{align}
Also, we assume to order the species and eigenvectors so that the diagonal parts are as follows:
\begin{align}
&\Delta^D=\text{diag}(\lambda_1,\dots,\lambda_a,\lambda_{a+1},\dots,\lambda_K)\quad\text{with }\lambda_1,\dots,\lambda_a\leq 0,\,\lambda_{a+1},\dots,\lambda_K\geq 0\\\label{eq:eigenvalue_ordering}
    &\Delta^D=\Delta_+^D+\Delta_-^D=\text{diag}(\lambda_1,\dots,\lambda_a,0,\dots,0)+\text{diag}(0,\dots,0,\lambda_{a+1},\dots,\lambda_K) .
\end{align}
Here, the superscript $D$ indicates a diagonal matrix. Observe also that the orthogonal matrix $O$ diagonalizes simultaneously $\Delta,\,\Delta_+,\,\text{and} \, \Delta_-$. We are now in position to prove the following Sum Rule:

\begin{prop}[Hybrid Sum Rule]\label{Hybrid Sum Rule}
For every $N$, and $\mathbf{x}\in\mathbb{R}^K$ the following holds true:
\begin{multline}\label{HSR}
p_N=-\frac{1}{2}(\mathbf{x},\Delta_-\mathbf{x})+\frac{1}{N}\log\int_{\mathbb{R}^K}  d\rho(\s) e^{\frac{\beta N}{2}(\mathbf{m}_N,\Delta_+\mathbf{m}_N)+\beta N ( \mathbf{\tilde h}+\Delta_-\mathbf{x},\mathbf{m}_N)} +\\
+\frac{1}{2}\int_0^1dt\,\omega_{N,t}\Big[(
 \mathbf{m}_N-\mathbf{x},\Delta_-(\mathbf{m}_N-\mathbf{x}) )\Big] \, .
\end{multline}
where $\omega_{N,t}$ is the Gibbs state induced by a suitable interpolating Hamiltonian.
\end{prop}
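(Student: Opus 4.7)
The plan is to prove the Hybrid Sum Rule via a standard interpolation that ``linearizes'' only the non-positive part $\Delta_-$ of the interaction, leaving $\Delta_+$ untouched, and then integrates the derivative of the pressure along the interpolation path.

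First I would record the algebraic identity behind the whole construction. For any $\mathbf{x}\in\mathbb{R}^K$ one has
\begin{equation*}
(\mathbf{m}_N,\Delta_-\mathbf{m}_N)=(\mathbf{m}_N-\mathbf{x},\Delta_-(\mathbf{m}_N-\mathbf{x}))+2(\mathbf{m}_N,\Delta_-\mathbf{x})-(\mathbf{x},\Delta_-\mathbf{x}),
\end{equation*}
so the quadratic form in $\mathbf{m}_N$ associated with $\Delta_-$ can be rewritten as a constant, plus a linear term in $\mathbf{m}_N$, plus a remainder quadratic in $\mathbf{m}_N-\mathbf{x}$. This is the decomposition that suggests using $t$ to switch the remainder on and off.

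Next I would introduce the interpolating Hamiltonian
\begin{equation*}
H_N(t,\sigma)=-\frac{N}{2}(\mathbf{m}_N,\Delta_+\mathbf{m}_N)-N(\tilde{\mathbf{h}},\mathbf{m}_N)-\frac{tN}{2}(\mathbf{m}_N-\mathbf{x},\Delta_-(\mathbf{m}_N-\mathbf{x}))-N(\mathbf{m}_N,\Delta_-\mathbf{x})+\frac{N}{2}(\mathbf{x},\Delta_-\mathbf{x}),
\end{equation*}
so that at $t=1$ the identity above returns the original Hamiltonian \eqref{MCW_hamiltonian_matrix}, while at $t=0$ the quadratic coupling mediated by $\Delta_-$ is replaced by an effective external field $\tilde{\mathbf{h}}+\Delta_-\mathbf{x}$ plus the constant $\tfrac{N}{2}(\mathbf{x},\Delta_-\mathbf{x})$. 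Denote by $Z_N(t)$ the associated partition function, by $\omega_{N,t}$ the corresponding Gibbs state, and set $\varphi(t)=\frac{1}{N}\log Z_N(t)$. Then $\varphi(1)=p_N$ and
\begin{equation*}
\varphi(0)=-\frac{1}{2}(\mathbf{x},\Delta_-\mathbf{x})+\frac{1}{N}\log\int d\rho(\sigma)\,e^{\frac{N}{2}(\mathbf{m}_N,\Delta_+\mathbf{m}_N)+N(\tilde{\mathbf{h}}+\Delta_-\mathbf{x},\mathbf{m}_N)},
\end{equation*}
which is exactly the first two terms on the right-hand side of \eqref{HSR}.

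Finally I would differentiate. By a direct computation,
\begin{equation*}
\varphi'(t)=\frac{1}{N}\,\omega_{N,t}\!\left[-\frac{\partial H_N(t)}{\partial t}\right]=\frac{1}{2}\,\omega_{N,t}\!\left[(\mathbf{m}_N-\mathbf{x},\Delta_-(\mathbf{m}_N-\mathbf{x}))\right],
\end{equation*}
and the fundamental theorem of calculus $p_N=\varphi(0)+\int_0^1\varphi'(t)\,dt$ yields \eqref{HSR}. Note that everything is an exact identity; no convexity, positivity, or concentration argument is used, which is why the sum rule works for an arbitrary symmetric $\mathbf{J}$. The only substantive point to be careful about is the bookkeeping in the definition of $H_N(t)$: one must package the terms so that the $t=0$ endpoint becomes a genuine linearized (``one-body'') model in the $\Delta_-$ directions while the $\Delta_+$ quadratic part is kept intact, since later this $\Delta_+$ part will be handled by a convex/Jensen-type argument to obtain the variational formula of Theorem \ref{MCW_GENERIC}. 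Once this packaging is fixed, the rest of the proof is just the one-line derivative calculation above.
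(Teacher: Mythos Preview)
Your proof is correct and follows essentially the same interpolation argument as the paper. The only cosmetic difference is that you absorb the constant $\tfrac{N}{2}(\mathbf{x},\Delta_-\mathbf{x})$ directly into $H_N(t)$, so it shows up in $\varphi(0)$, whereas the paper leaves it out of the Hamiltonian and recovers it by completing the square in $p_N'(t)$; since this constant does not affect the Gibbs state, your $\omega_{N,t}$ coincides with the paper's and the resulting identity is the same.
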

\begin{proof}
Let's introduce the following interpolating Hamiltonian:
\begin{align}\label{inter Hamiltonian}
    H_N(t)=-\frac{N}{2}(\mathbf{m}_N,\Delta_+ \mathbf{m}_N ) -\frac{Nt}{2}(\mathbf{m}_N,\Delta_- \mathbf{m}_N )-(1-t)(
    \Delta_-\mathbf{x},\mathbf{m}_N
    )-N (\mathbf{\tilde h},\mathbf{m}_N ).
\end{align}
The corresponding Boltzmann-Gibbs average is precisely $\omega_{N,t}$, and the interpolating pressure $p_N(t)$ has the following properties:
\begin{align}
    &p_N(1)=p_N \label{hybrid pn1}\\
    &p_N(0)=\frac{1}{N}\log\int_{\mathbb{R}^K} d\rho(\s) e^{\frac{\beta N}{2}(\mathbf{m}_N,\Delta_+\mathbf{m}_N)+\beta N (\mathbf{\tilde h}+\Delta_-\mathbf{x},\mathbf{m}_N)} \label{hybrid pn0} \,.
\end{align}
Completing the square easily proves that:
\begin{align*}
    p'_N(t)=-\frac{1}{N}\omega_{N,t}(H_N(t))=-\frac{1}{2}(\mathbf{x},\Delta_-\mathbf{x})+\frac{1
}{2}\omega_{N,t}\Big[(
    \mathbf{m}_N-\mathbf{x},\Delta_-(\mathbf{m}_N-\mathbf{x}))\Big] \,.
\end{align*}
The result follows from an application of the fundamental theorem of calculus.
\end{proof}

\begin{lemma}\label{MCW_upper_bound_nondisordered} 
Assume the eigenvalue ordering in \eqref{eq:eigenvalue_ordering}. Then for any $(\mathbf{J},\boldsymbol{h})\in \mathbb{R}^{K\times K}\times\mathbb{R}^K $ one has 
\begin{align}\label{upper_bound_nondisordered}
    p_N(\Delta,\mathbf{h})\leq \mathcal{O}\left(\frac{\log N}{N}\right) +
    \inf_{z_1\dots z_a}\sup_{z_{a+1}\dots z_K}p_{var}(\Delta,\mathbf{h};O\mathbf{z})\,.
\end{align}
\end{lemma}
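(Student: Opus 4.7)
The plan is to combine the Hybrid Sum Rule of Proposition \ref{Hybrid Sum Rule} with a Hubbard--Stratonovich linearization of the positive-semidefinite block and a Laplace-type upper bound on the resulting finite-dimensional integral. First, since $\Delta_-\leq 0$, the interpolating remainder $\tfrac{1}{2}\int_0^1 \omega_{N,t}[(\mathbf{m}_N-\mathbf{x},\Delta_-(\mathbf{m}_N-\mathbf{x}))]dt$ in \eqref{HSR} is non-positive, and dropping it yields, for every $\mathbf{x}\in\mathbb{R}^K$,
\[
p_N \;\leq\; -\tfrac{1}{2}(\mathbf{x},\Delta_-\mathbf{x}) + \tfrac{1}{N}\log\int d\rho(\sigma)\, e^{\tfrac{N}{2}(\mathbf{m}_N,\Delta_+\mathbf{m}_N)+N(\tilde{\mathbf{h}}+\Delta_-\mathbf{x},\mathbf{m}_N)}.
\]

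Next, I would pass to the eigenbasis of $\Delta$ and parameterize $\mathbf{x}=O\mathbf{z}_-$ with $\mathbf{z}_-=(z_1,\dots,z_a,0,\dots,0)$; this is without loss of generality, since $\Delta_-$ annihilates the positive eigenspace. Using $\lambda_j>0$ for $j>a$, apply the Hubbard--Stratonovich identity
\[
e^{\tfrac{N}{2}(\mathbf{m}_N,\Delta_+\mathbf{m}_N)} = \prod_{j>a}\sqrt{\tfrac{N\lambda_j}{2\pi}}\int d\mathbf{z}_+\, \exp\!\left[-\tfrac{N}{2}(\mathbf{z}_+,\Delta_+^D\mathbf{z}_+)+N(\Delta_+O\mathbf{z},\mathbf{m}_N)\right],
\]
where $\mathbf{z}_+=(z_{a+1},\dots,z_K)$ and $\mathbf{z}=(\mathbf{z}_-,\mathbf{z}_+)$. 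The key algebraic identity $\Delta_+O\mathbf{z}+\Delta_-\mathbf{x}=\Delta O\mathbf{z}$ fuses the two linear couplings into $\tilde{\mathbf{h}}+\Delta O\mathbf{z}$, so that by Fubini the $\sigma$-integral factorizes across species, producing $\exp[N\sum_p\alpha_p\log\int d\rho(s)\,e^{s(\sum_l J_{pl}\alpha_l(O\mathbf{z})_l+h_p)}]$. Collecting the quadratic pieces shows that the $-\tfrac{1}{2}(\mathbf{x},\Delta_-\mathbf{x})$ prefactor supplies exactly the $-\tfrac{1}{2}(\mathbf{z}_-,\Delta_-^D\mathbf{z}_-)$ contribution missing from $p_{var}(\Delta,\mathbf{h};O\mathbf{z})$, so after cancellation one arrives at
\[
p_N \;\leq\; \mathcal{O}\!\left(\tfrac{\log N}{N}\right) + \tfrac{1}{N}\log \int d\mathbf{z}_+\, e^{N\, p_{var}(\Delta,\mathbf{h};O\mathbf{z})},
\]
with the $\log N/N$ absorbing the Gaussian normalization $\prod_{j>a}\sqrt{N\lambda_j/(2\pi)}$.

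Finally, I would close the argument via a Laplace-type upper bound. Since $p_{var}(\Delta,\mathbf{h};O\mathbf{z})=-\tfrac{1}{2}(\mathbf{z}_+,\Delta_+^D\mathbf{z}_+)+\mathcal{O}(\|\mathbf{z}_+\|)$ as $\|\mathbf{z}_+\|\to\infty$ (the log-MGFs are Lipschitz in their argument because $\mathrm{supp}(\rho)\subseteq[-1,1]$), the integrand decays like $e^{-cN\|\mathbf{z}_+\|^2}$, with $c$ controlled by the smallest positive eigenvalue $\lambda_{a+1}$. Splitting into a large ball and its Gaussian tail yields
\[
\tfrac{1}{N}\log \int d\mathbf{z}_+\, e^{Np_{var}(O\mathbf{z})}\;\leq\; \sup_{z_{a+1},\dots,z_K}p_{var}(\Delta,\mathbf{h};O\mathbf{z})+\mathcal{O}\!\left(\tfrac{\log N}{N}\right),
\]
uniformly in $\mathbf{z}_-$, and taking the infimum over $(z_1,\dots,z_a)$ gives the claim. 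The main obstacle I anticipate is precisely this uniformity: one must ensure the quadratic decay coefficient stays bounded away from zero independently of the inf-parameters $\mathbf{z}_-$, and that the polynomial volume factor from the large ball contributes only subexponentially. Everything else reduces to algebraic bookkeeping relating the Hubbard--Stratonovich variables in the eigenbasis of $\Delta$ back to the variational functional $p_{var}$.
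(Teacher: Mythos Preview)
Your approach is correct and takes a genuinely different route from the paper. Both arguments start from the Hybrid Sum Rule and discard the non-positive $\Delta_-$-remainder, but to handle the remaining factor $e^{\frac{N}{2}(\mathbf{m}_N,\Delta_+\mathbf{m}_N)}$ the paper uses an elementary \emph{grid discretization}: it partitions $[-1,1]^K$ into hypercubes of side $\sim 1/N$, bounds $(\mathbf{m}_N-\mathbf{y}_\gamma,\Delta_+(\mathbf{m}_N-\mathbf{y}_\gamma))\leq C/N^2$ within each cell, replaces the quadratic by the linear form centered at the grid point $\mathbf{y}_\gamma$, and then passes to the sup over $\mathbf{y}$; the sum over $\sim N^K$ cells produces the $\mathcal{O}(\log N/N)$. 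You instead linearize via Hubbard--Stratonovich, which is algebraically slicker but forces you to control a Laplace integral over $\mathbb{R}^{K-a}$. Your uniformity worry is easily dispatched: since the log-MGF of $\rho$ is $1$-Lipschitz (as $\mathrm{supp}(\rho)\subseteq[-1,1]$), the first-order condition for the inner sup pins the maximizer $\mathbf{z}_+^*(\mathbf{z}_-)$ inside a ball of radius $\mathcal{O}(1/\lambda_{a+1})$ \emph{independently of} $\mathbf{z}_-$, so your ball-plus-Gaussian-tail split works with uniform constants. One small caveat: the ordering \eqref{eq:eigenvalue_ordering} allows $\lambda_j=0$ for some $j>a$, in which case the Gaussian identity degenerates; since $p_{var}(O\mathbf{z})$ is then flat in those coordinates, simply restrict the linearization to the strictly positive eigenspace. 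The paper's grid argument sidesteps both concerns by working on the compact cube from the outset, at the cost of a less transparent link between the linearizing variable and the eigenbasis of $\Delta$.
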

\begin{proof}
We define the grid of hypercubes $A^p_k=  \bigg[y^p_{k},y^p_{k}+\frac{2}{N_p}\bigg]$ where $y^p_k = -1+2(k-1)/N$ for $k\in\Lambda$, and $p$ is the species label. The vertices of this grid are identified by a multi-index $\gamma=(\gamma_1,\dots,\gamma_K)\in\Lambda^{K}$, and denoted $\mathbf{y}_\gamma$ in the following.

Furthermore, observe that:
\begin{multline}
    \frac{1}{N}\log\int_{\mathbb{R}^N}  d\rho(\s) e^{\frac{N}{2}(\mathbf{m}_N,\Delta_+\mathbf{m}_N)+N (\mathbf{\tilde h}+\Delta_-\mathbf{x},\mathbf{m}_N)}=\\
    =\frac{1}{N}\log\int_{\mathbb{R}^N}  d\rho(\s)\prod_{p=1}^K\sum_{k=1}^{N_p}\mathbbm{1}(m_{p}\in A_k^p)
    e^{\frac{N}{2}(\mathbf{m}_N,\Delta_+\mathbf{m}_N)+N (\mathbf{\tilde h}+\Delta_-\mathbf{x},\mathbf{m}_N)} \\
    =\frac{1}{N}\log \sum_{\gamma_1,\dots,\gamma_K=1}^{N_1,\dots,N_K}\int_{\mathbb{R}^{N}}d\rho(\sigma)\prod_{p=1}^K \mathbbm{1}(m_{p}\in A_{\gamma_p}^p)e^{\frac{N}{2}(\mathbf{m}_N,\Delta_+\mathbf{m}_N)+N (\mathbf{\tilde h}+\Delta_-\mathbf{x},\mathbf{m}_N)} \, .
\end{multline}
While the constraint $\prod_{p=1}^K \mathbbm{1}(m_{p}\in A_{\gamma_p}^p)$ is enforced, we have
\begin{align}
    0\leq \left(\mathbf{m}_N-\mathbf{y}_\gamma,\Delta_+(\mathbf{m}_N-\mathbf{y}_\gamma)\right) \leq\frac{C}{N^2}
\end{align}where $C>0$, and the multi-index $\gamma=(\gamma_1,\dots,\gamma_K)$. Using the bound above, we can write
\begin{align}
    \frac{1}{N}\log&\int_{\mathbb{R}^N}  d\rho(\s) e^{\frac{N}{2}(\mathbf{m}_N,\Delta_+\mathbf{m}_N) +N (\mathbf{\tilde h}+\Delta_-\mathbf{x},\mathbf{m}_N) }\leq \frac{C}{N^2}+\nonumber\\
    &+\frac{1}{N}\log \sum_{\gamma_1,\dots,\gamma_K=1}^{N_1,\dots,N_K}\int_{\mathbb{R}^{N}}d\rho(\sigma)\prod_{p=1}^K \mathbbm{1}(m_p \in A_{\gamma_p}^p)
    e^{-\frac{N}{2}(\mathbf{y}_\gamma,\Delta_+\mathbf{y}_\gamma) +N(\mathbf{m}_N,\Delta_+\mathbf{y}_\gamma) +N (\mathbf{\tilde h}+\Delta_-\mathbf{x},\mathbf{m}_N) }\nonumber\\
    &\leq \frac{C}{N^2}+\frac{1}{N}\log \sum_{\gamma_1,\dots,\gamma_K=1}^{N_1,\dots,N_K}\int_{\mathbb{R}^{N}}d\rho(\sigma)
    e^{-\frac{N}{2}(\mathbf{y}_\gamma,\Delta_+\mathbf{y}_\gamma) +N(\mathbf{m}_N,\Delta_+\mathbf{y}_\gamma) +N (\mathbf{\tilde h}+\Delta_-\mathbf{x},\mathbf{m}_N) }\nonumber\\
    & = \mathcal{O}\left(\frac{\log N}{N}\right)+\frac{1}{N}\log \sup_{\mathbf{y}}\int_{\mathbb{R}^{N}}d\rho(\sigma)
    e^{-\frac{N}{2}(\mathbf{y},\Delta_+\mathbf{y}) +N (\mathbf{\tilde h}+\Delta_-\mathbf{x} + \Delta_+\mathbf{y},\mathbf{m}_N) }\nonumber \\
    & = \mathcal{O}\left(\frac{\log N}{N}\right)+\frac{1}{N}\log \sup_{\mathbf{y}} e^{-\frac{N}{2}(\mathbf{y},\Delta_+\mathbf{y}) }
    \int_{\mathbb{R}^{N}}d\rho(\sigma) 
    e^{N \sum_{p=1}^K \alpha_p \frac{1}{N_p}\sum_{i\in \Lambda_p}\sigma_i ( h_p + \boldsymbol{\hat{\alpha}}^{-1} (\Delta_{-}\mathbf{x} + \Delta_+\mathbf{y}) _p) }\nonumber\\
      = \sup_{\mathbf{y}}&\left\{-\frac{(\mathbf{y},\Delta_+\mathbf{y}) }{2}+
    \sum_{p=1}^K \alpha_p \log \int_{\mathbb{R}} d\rho(\s) \exp\{\s[\mathbf{h} + \boldsymbol{\hat{\alpha}}^{-1}(\Delta_-\mathbf{x} +\Delta_+ \mathbf{y} )  ]_p \} \right\} + \mathcal{O}\left(\frac{\log N}{N} \right) \,.
\end{align}

Now, thanks to the decomposition of $\Delta$, there are some combinations of components of $\mathbf{x}$ and $\mathbf{y}$ that will never appear, i.e., those corresponding to the zero eigenvalues of $\Delta_+\, \text{and} \, \Delta_-$. Therefore, we have that:
\begin{align}
    &\Delta_+\mathbf{y}+\Delta_-\mathbf{x}=O\Delta^D_+\mathbf{y}^D+O\Delta^D_-\mathbf{x}^D=O\Delta^D \mathbf{z}^D=\Delta \mathbf{z}\\
    & \text{and} \quad \mathbf{z}^D=(x_1^D,\dots,x_a^D,y_{a+1}^D,\dots,y_K^D)
\end{align}
where $D$, when accompanying vectors, indicates vectors read in the basis of the eigenvectors of $\Delta$. Hence, the real degrees of freedom are only $K$. Now, by equation \eqref{hybrid pn0}, we have that: 
\begin{multline}
    p_N(0)\leq \mathcal{O}\left(\frac{\log N}{N}\right) +\sup_{y_{a+1}^D,\dots,y_K^D}p_{var}\left(\Delta_+,\boldsymbol{\hat\alpha}^{-1}\Delta_-\mathbf{x}+\mathbf{h};O\mathbf{y}^D\right)\\
    =\mathcal{O}\left(\frac{\log N}{N}\right) +\sup_{y_{a+1}^D,\dots,y_K^D}\left\{
    -\frac{1}{2}\left( O\mathbf{y}^D,\Delta_+O\mathbf{y}^D\right)
    +\sum_{p=1}^K\alpha_p\log\int_{\mathbb{R}} d\rho(\s) e^{\s (\boldsymbol{\hat\alpha}^{-1}\Delta O\mathbf{z}^D
    +\mathbf{h})_p }
    \right\} \, .
\end{multline}

Notice that this holds for any $x^D_1,\dots,x^D_a$ corresponding to the negative definite matrix. If we add the missing quadratic term from the derivative, which contains only $\Delta_-$, and optimize over the remaining degrees of freedom, we get:
\begin{align*}
    p_N \leq \mathcal{O}\left( \frac{\log N}{N}\right) +\inf_{z_1,\dots,z_a}\sup_{z_{a+1},\dots,z_K}
    p_{var}(O\mathbf{z}) \, .
\end{align*}
We removed $D$ because now $\mathbf{z}$ is a dummy variable.
\end{proof}

Now we need to find the other bound.\\

\begin{lemma}\label{MCW lowerbound} For any $(\mathbf{J},\boldsymbol{h})\in \mathbb{R}^{K\times K}\times\mathbb{R}^K $ one has 
\begin{align}\label{solution_nondisordered_MCW_GENERIC}
p_N \geq \inf_{z_1,\dots,z_a}\sup_{z_{a+1},\dots,z_K}
    p_{var}(O\mathbf{z}) + \mathcal{O}\left( \frac{1}{N} \right) \, .  
\end{align}
\end{lemma}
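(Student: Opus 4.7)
The proof rests on a two-parameter interpolation combined with a concentration estimate. Introduce
\begin{align*}
H_N(t;\mathbf{x},\mathbf{y})=-\frac{Nt}{2}(\mathbf{m}_N,\Delta\mathbf{m}_N)-(1-t)N(\Delta_+\mathbf{y}+\Delta_-\mathbf{x},\mathbf{m}_N)-N(\tilde{\mathbf{h}},\mathbf{m}_N),
\end{align*}
so that $H_N(1;\cdot,\cdot)=H_N$ and $H_N(0;\cdot,\cdot)$ is linear in $\mathbf{m}_N$. Differentiating the associated interpolating pressure $p_N(t)$ in $t$ and completing the square separately on the positive and negative eigenspaces of $\Delta$ produces the identity
\begin{align*}
p_N=p_{var}(O\mathbf{z})+R_+(\mathbf{y})+R_-(\mathbf{x}),
\end{align*}
where $\mathbf{z}$ is constructed in the eigenbasis of $\Delta$ by $z_k^D=y_k^D$ for $k>a$ and $z_k^D=x_k^D$ for $k\leq a$ (so that $\Delta_+\mathbf{y}+\Delta_-\mathbf{x}=\Delta O\mathbf{z}$), and
\begin{align*}
R_+(\mathbf{y})&=\tfrac{1}{2}\int_0^1\omega_{N,t}\bigl[(\mathbf{m}_N-\mathbf{y},\Delta_+(\mathbf{m}_N-\mathbf{y}))\bigr]dt\geq 0,\\
R_-(\mathbf{x})&=\tfrac{1}{2}\int_0^1\omega_{N,t}\bigl[(\mathbf{m}_N-\mathbf{x},\Delta_-(\mathbf{m}_N-\mathbf{x}))\bigr]dt\leq 0.
\end{align*}
At $t=0$ the exponent is linear in $\mathbf{m}_N$, so the integral factorizes into one-dimensional spin integrals per species, recovering the log-MGF term of $p_{var}$.

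Let $\mathbf{z}^*$ attain $\inf_{z_1,\dots,z_a}\sup_{z_{a+1},\dots,z_K} p_{var}(O\mathbf{z})$, which exists by coercivity ($-\tfrac12 \lambda_k z_k^2\to+\infty$ in the convex $z_1,\dots,z_a$ directions, while the concave quadratic dominates the linearly-growing log-MGF in the $z_{a+1},\dots,z_K$ directions). By the envelope theorem $\mathbf{z}^*$ is a critical point of $p_{var}$, equivalently $\boldsymbol{\mu}^*:=O\mathbf{z}^*$ is a mean-field fixed point
\begin{align*}
\mu_p^{*}=\frac{\int \sigma\, e^{\sigma[(\mathbf{J}\boldsymbol{\alpha}\boldsymbol{\mu}^*)_p+h_p]}\,d\rho(\sigma)}{\int e^{\sigma[(\mathbf{J}\boldsymbol{\alpha}\boldsymbol{\mu}^*)_p+h_p]}\,d\rho(\sigma)}.
\end{align*}
Pick $(\mathbf{x},\mathbf{y})$ satisfying the matching condition $\Delta_+\mathbf{y}+\Delta_-\mathbf{x}=\Delta\boldsymbol{\mu}^*$ (determined uniquely on the non-kernel eigenspaces of $\Delta$), so that $O\mathbf{z}=\boldsymbol{\mu}^*$ and $p_{var}(O\mathbf{z})$ equals the minimax value.

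The essential step is to prove $|R_-(\mathbf{x})|=O(1/N)$. Under the matching condition the self-consistency equation of the interpolating Gibbs measure $\omega_{N,t}$ is solved by $\boldsymbol{\mu}^*$ simultaneously for every $t\in[0,1]$, so a standard second-moment bound for mean-field Gibbs measures (or a log-Sobolev/Brascamp--Lieb inequality adapted to the interpolation) gives $\omega_{N,t}[|\mathbf{m}_N-\boldsymbol{\mu}^*|^2]=O(1/N)$ uniformly in $t$, and hence $|R_-(\mathbf{x})|\leq\tfrac{\|\Delta_-\|}{2}\int_0^1\omega_{N,t}[|\mathbf{m}_N-\mathbf{x}|^2]\,dt=O(1/N)$. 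Dropping the non-negative $R_+(\mathbf{y})$ then yields
\begin{align*}
p_N\geq p_{var}(O\mathbf{z}^*)+O(1/N)=\inf_{z_1,\dots,z_a}\sup_{z_{a+1},\dots,z_K}p_{var}(O\mathbf{z})+O(1/N),
\end{align*}
as claimed.

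The main obstacle is the uniform-in-$t$ concentration estimate for $\omega_{N,t}$ around $\boldsymbol{\mu}^*$ in the non-convex regime: several mean-field fixed points may compete for the Gibbs mass, and one must ensure it does not migrate across attractors during the interpolation. This can be handled by combining the already-established upper bound (Lemma \ref{MCW_upper_bound_nondisordered}) at the same fixed point, which caps the free energy and forces $\omega_{N,t}$ to concentrate near minimax realizers, with a direct entropic/contraction estimate exploiting the convexity of $p_{var}$ in the variables $z_1,\ldots,z_a$ along which $\boldsymbol{\mu}^*$ is the global minimizer of $\sup_{z_{a+1},\dots,z_K}p_{var}(O\mathbf{z})$.
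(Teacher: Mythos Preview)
Your sum rule identity $p_N=p_{var}(O\mathbf{z})+R_+(\mathbf{y})+R_-(\mathbf{x})$ is correct, and dropping $R_+\geq 0$ is fine. The genuine gap is the control of $R_-$. You need
\[
\int_0^1 \omega_{N,t}\bigl[(\mathbf{m}_N-\mathbf{x},(-\Delta_-)(\mathbf{m}_N-\mathbf{x}))\bigr]\,dt=\mathcal{O}(1/N),
\]
i.e.\ concentration of $\mathbf{m}_N$ around $\boldsymbol{\mu}^*$ under the \emph{interacting} Gibbs measures $\omega_{N,t}$ for every $t\in(0,1]$. Your justification (``standard second-moment bound'', ``log-Sobolev/Brascamp--Lieb'') does not apply here: for $t>0$ the measure $\omega_{N,t}$ carries the full quadratic interaction $\tfrac{t}{2}(\mathbf{m}_N,\Delta\mathbf{m}_N)$ and is not a product measure, so there is no LLN for free. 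Worse, in the non-convex regime the model may have several coexisting pure states; then $\mathbf{m}_N$ under $\omega_{N,1}$ is distributed over all of them and $\omega_{N,1}[|\mathbf{m}_N-\boldsymbol{\mu}^*|^2]$ is of order $1$, not $1/N$. Your observation that $\boldsymbol{\mu}^*$ solves the self-consistency equation for every $t$ only says it is a critical point of the interpolating variational problem; it does not force the Gibbs mass to sit there. The proposed fix via Lemma~\ref{MCW_upper_bound_nondisordered} and an unspecified ``entropic/contraction estimate'' is too vague to close this.

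The paper avoids the whole difficulty with one stroke: instead of the fundamental theorem of calculus, it uses the \emph{convexity} of $t\mapsto p_N(t)$ (second derivative is a variance) to replace the integrated remainder by its value at the \emph{linear} endpoint, where the Gibbs measure is a product over spins. There the computation is elementary: factorization gives $\omega_{\text{prod}}[(m_p-\bar z_p)^2]=\omega_{\text{prod}}[(m_p-\bar z_p)]^2+\mathcal{O}(1/N)$, and the stationarity of $\bar{\mathbf{z}}$ forces $\omega_{\text{prod}}(\mathbf{m}_N)-\bar{\mathbf{z}}\in\mathrm{Ker}(\Delta)$, so the whole quadratic form is $\mathcal{O}(1/N)$. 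No uniform-in-$t$ concentration is ever needed. This is the missing idea in your argument: trade the sum rule for the tangent-line inequality so that the only Gibbs measure you ever have to analyze is the trivially factorized one.
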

\begin{proof}
For any $\mathbf{z}\in\mathbb{R}^K$ consider the  interpolating Hamiltonian:
\begin{align}\label{inter Hamiltonian g}
    H_N(t)= -\frac{N}{2}(\mathbf{m}_N,\Delta\mathbf{m}_N )
    +\frac{Nt}{2} (\mathbf{m}_N-\mathbf{z},\Delta(\mathbf{m}_N-\mathbf{z}))-N(\mathbf{\tilde h},\mathbf{m}_N )\,.
\end{align}
The corresponding pressure $p_N(t)$ has the following properties:
\begin{align}
    &p_N(0)=p_N \label{hyb pn1}\\
    &p_N(1 )= p_{var}(\Delta,\mathbf{h};\mathbf{z})\\
   & p_N'(t)=-\frac{1}{2}\omega_{N,t}\left[(\mathbf{m}_N-\mathbf{z},\Delta(\mathbf{m}_N-\mathbf{z}))\right] \, .
\end{align}
Using the convexity of the interpolating pressure $p_N(t)$, one has $p_N(1)\leq p_N(0)+p'_N(0)$ and then  for every $\mathbf{z}\in\mathbb{R}^K$, we obtain:
\begin{align}
    p_N \geq
    \frac{1}{2}\omega_{N,0}[(
    \mathbf{m}_N-\mathbf{z},\Delta(\mathbf{m}_N-\mathbf{z})
    )]+p_{var}(\Delta,\mathbf{h};\mathbf{z}).
\end{align}
Now suppose that $\mathbf{z}$ equal to a  critical point $\mathbf{\bar{z}}^D=O^{-1}\mathbf{\bar{z}}\;$ of $\;p_{var}$, then:  
\begin{multline}\label{critical}
    \frac{\partial}{\partial \mathbf{z}^D}p_{var}(O\mathbf{\bar{z}}^D)
    =O\frac{\partial}{\partial\mathbf{z}}p_{var}(\mathbf{\bar{z}})
    =O\left[-\Delta\mathbf{\bar{z}}+ \frac{\Delta\int_{\mathbb{R}} \sigma \exp \s( \boldsymbol{\alpha}^{-1}\Delta\mathbf{\bar{z}}+\mathbf{h}) d\rho(\s)}{\int_{\mathbb{R}}\exp \tau ( \boldsymbol{\alpha}^{-1}\Delta\mathbf{\bar{z}}+\mathbf{h}) d\rho(\tau)} \right]=0 
\end{multline}where the above exponentials are applied component-wise, and division is also component-wise. Criticality then implies from \eqref{critical} that
\begin{align}
\mathbf{\bar{z}}- \frac{\int_{\mathbb{R}} \sigma \exp \s( \boldsymbol{\alpha}^{-1}\Delta\mathbf{\bar{z}}+\mathbf{h}) d\rho(\s)}{\int_{\mathbb{R}}\exp \tau ( \boldsymbol{\alpha}^{-1}\Delta\mathbf{\bar{z}}+\mathbf{h}) d\rho(\tau)}\in\text{Ker}(\Delta).
\end{align}
Observe that the above can also be recast as
\begin{align*}
    &\mathbf{\bar{z}}-\omega_{N,0}(\mathbf{m}_N)\in\text{Ker}(\Delta).
\end{align*}
Let us now expand the quadratic form and separate the diagonal terms from the off-diagonal ones:
\begin{multline}
\omega_{N,0}[(
    \mathbf{m}_N-\mathbf{\bar{z}},\Delta(\mathbf{m}_N-\mathbf{\bar{z}})
    )]=\sum_{p\neq l,1}^K \Delta_{pl} \omega_{N,0}[(m_p -\bar{z}_p)] \omega_{N,0}[(m_{N_l}-\bar{z}_l)]+\\
    +\sum_{p=1}^K \Delta_{pp} \omega_{N,0}[(m_p -\bar{z}_p)^2] \,.
\end{multline}

Now, notice that
\begin{multline}
\omega_{N,0}[(m_p -\bar{z}_p)^2] = \omega_{N,0}[(m_p )^2]-2\omega_{N,0}[m_p ]\bar{z}_p+(\bar{z}_p)^2\\
=\frac{1}{N^2_p}\left(\sum_{i\in \Lambda_p}\omega_{N,0}[\sigma^2_i]+\sum_{i\neq j} \omega_{N,0}[\sigma_i\sigma_j]\right)-2\omega_{N,0}[m_p ]\bar{z}_p+(\bar{z}_p)^2 \, .
\end{multline}
For $i\neq j$ the measure $\omega_{N,0}$ factorizes, and naturally $\sum_{i\in\Lambda_p}\sigma_i^2 \leq N_p$. Therefore
\begin{equation}
\omega_{N,0}[(m_p -\bar{z}_p)^2] = \omega^2_{N,0}[(m_p -\bar{z}_p)] +\mathcal{O}(N^{-1}).
\end{equation}
The previous finally implies
 \begin{equation}
    \omega_{N,0}\Big[(
    \mathbf{m}_N-\mathbf{\bar{z}},\Delta(\mathbf{m}_N-\mathbf{\bar{z}})
    )\Big]=\mathcal{O}\left( \frac{1}{N}\right) +
    (\omega_{N,0}(\mathbf{m}_N -\mathbf{\bar{z}}),\Delta\omega_{N,0}(\mathbf{m}_N -\mathbf{\bar{z}}))\,.
\end{equation}
The last term is zero because $\omega_{N,0}(\mathbf{m}_N -\mathbf{\bar{z}})$ lies in the kernel of $\Delta$.  
\end{proof}

The proof of Therorem \ref{MCW_GENERIC} thus follows from Lemma \ref{MCW_upper_bound_nondisordered} and Lemma \ref{MCW lowerbound}.

\subsection{Proof of Theorem \ref{uniqness CLT}}
The proof of the CLT in the case of Ising spins, namely $\rho=\frac{1}{2}(\delta_{-1}+\delta_{-1})$  is based on a careful control of the asymptotic expansion of a  partition function $Z_N$ using the methods of \cite{Contucci_Mingione_Osabutey_2024, MSB21}. For any integer $N$ and $x\in[-1,1]$ we define the quantity

\begin{equation}
A_{N}(x) =\mathrm{card}\Big\{\s^{}\in \{-1,1\}^{N}: \frac{1}{N}\sum_{i=1}^N\sigma_i =x\Big\} = \binom{N}{ \frac{N(1+x)}{2}} .
\end{equation}
Let's state some useful bounds on $A_N$, the proof is standard and can be found in \cite{Talagrand2003spin}.

\begin{lemma}\label{configuration count}
For any $x\in[-1,1]$ the following inequality holds:
\begin{equation}\label{boundAN}
\dfrac{1}{C\sqrt{N}}e^{-NI(x)}\leq A_N(x) \leq e^{-NI(x)}
\end{equation}
where $C$ is a universal constant and, 
\begin{equation}\label{binaryentropy}
I(x)=\frac{1-x}{2}\log\left( \frac{1-x}{2}\right) +\frac{1+x}{2}\log\left( \frac{1+x}{2}\right)   .
\end{equation}
Moreover for any $x\in(-1,1)$ one has 
\begin{equation}\label{stirlings}
A_N(x)= \sqrt{\frac{2}{\pi N(1-x^2)}} 
 \exp{\left( -N I(x) \right) } \cdot \left(1 + \mathcal{O}\left(N^{-1}\right)\right).
\end{equation}
\end{lemma}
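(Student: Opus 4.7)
The lemma is a classical estimate on the central binomial--type coefficient $A_N(x)=\binom{N}{N(1+x)/2}$, and the three statements naturally split into (i) a clean exponential upper bound, (ii) a sharp Stirling asymptotic, and (iii) a lower bound following from (ii) after a crude estimate near the endpoints. I would organise the proof accordingly.

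For the upper bound in \eqref{boundAN}, the plan is to extract $A_N(x)$ as a single term in a binomial expansion whose total mass is $1$. Set $p=(1+x)/2$ and $q=(1-x)/2$, so that $p,q\in[0,1]$ with $p+q=1$. The binomial theorem gives
\begin{equation*}
1=(p+q)^N=\sum_{k=0}^N\binom{N}{k}p^{k}q^{N-k}\geq \binom{N}{N(1+x)/2}\,p^{N(1+x)/2}q^{N(1-x)/2}.
\end{equation*}
A direct computation shows $\log\!\bigl(p^{N(1+x)/2}q^{N(1-x)/2}\bigr)=NI(x)$, so rearranging yields $A_N(x)\leq e^{-NI(x)}$ for every $x$ such that $N(1+x)/2\in\{0,1,\dots,N\}$ (and trivially $A_N(x)=0$ otherwise, in which case the bound also holds).

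For the sharp asymptotic \eqref{stirlings}, I would apply Stirling's expansion $n!=\sqrt{2\pi n}\,(n/e)^n\bigl(1+\mathcal{O}(n^{-1})\bigr)$ to each of the three factorials $N!$, $(N(1+x)/2)!$, and $(N(1-x)/2)!$ appearing in the binomial coefficient. The exponential factors combine to produce exactly $e^{-NI(x)}$, while the square-root prefactors collapse, after an elementary computation, to $\sqrt{2/(\pi N(1-x^2))}$; the remainders multiply to give a factor $1+\mathcal{O}(N^{-1})$ uniformly on compact subsets of $(-1,1)$. The only subtle bookkeeping is keeping the error term uniform and checking that $N(1\pm x)/2$ is treated as an integer (which is where the hypothesis implicit in the definition of $A_N(x)$ enters).

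The lower bound in \eqref{boundAN} then follows from \eqref{stirlings} on any closed subinterval of $(-1,1)$, since $\sqrt{2/(\pi N(1-x^2))}$ is bounded below by a constant multiple of $1/\sqrt{N}$ there. Near the endpoints $x=\pm 1$ the asymptotic \eqref{stirlings} degenerates, so the small extra work is to handle $x$ close to $\pm 1$ separately: either $N(1\pm x)/2=0$, in which case $A_N(x)=1$ and $I(x)=0$, and the bound is trivial; or one uses the monotonicity of $\binom{N}{k}$ in $k$ up to the median together with $\binom{N}{k}\geq \binom{N}{k}p^k q^{N-k}$ suitably adjusted. This endpoint analysis is the only mildly delicate step, but it is classical and, as noted in the statement, a detailed presentation can be found in \cite{Talagrand2003spin}, which the authors cite; my plan is therefore to give the binomial argument for the upper bound, invoke Stirling for the asymptotic, and remark that the lower bound with universal constant $C$ is then immediate, pointing to the reference for the endpoint cleanup.
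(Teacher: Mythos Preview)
Your sketch is correct and follows the standard route (binomial identity for the upper bound, Stirling for the asymptotic, Robbins--type bounds or the asymptotic itself for the lower bound). Note, however, that the paper does not give its own proof of this lemma: it simply states that ``the proof is standard and can be found in \cite{Talagrand2003spin}'' and moves on. So there is nothing to compare against beyond confirming that your argument is the classical one the authors have in mind. One minor remark on your endpoint discussion for the lower bound: since the prefactor $\sqrt{2/(\pi N(1-x^2))}$ is bounded \emph{below} by $\sqrt{2/(\pi N)}$ for all $x\in(-1,1)$, the only genuine issue is controlling the Stirling error when $N(1\pm x)/2$ is small; the cleanest fix is to use the explicit Robbins bounds $\sqrt{2\pi n}(n/e)^n e^{1/(12n+1)}\le n!\le \sqrt{2\pi n}(n/e)^n e^{1/(12n)}$, valid for all integers $n\ge 1$, which immediately give the lower bound with a universal constant without any separate endpoint analysis.
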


Let us start by dividing the configuration space
$\{-1,1\}^N$ into microstates of equal local magnetization.  For a given species $l\leq K$ with spins  configuration $\s^{(l)}$, the local magnetization $m_l$ takes values in  $S_{l}=\{-1+\frac{2n}{N_l},n=0,\ldots, N_l\}$, with $|S_{l}|= (N_l+1)$. The possible values of the magnetization vector $\mathbf{m}_N=(m_{l})_{l\leq K} \in S_N=\bigtimes_{l=1}^K S_l$.  
Hence  the partition function rewrites as:
\begin{equation}\label{partition fnx}
Z_N = \sum_{\mathbf{x}\in S_{N}} \prod_{l=1}^K A_{N_l}(x_{l}) \exp{\left( - H_N(\mathbf{x}) \right) } \, .
\end{equation}
Here, $A_{N_l}(x_{l})$ counts the number of all possible configurations of $\s^{(l)} \in \{-1,1\}^{N_l}$ that share the same magnetization $x_l$. From Lemma \ref{configuration count} one can obtain the following bound for the pressure, by substituting \eqref{boundAN} into \eqref{partition fnx}:
\begin{equation}\label{bound pressure}
-\frac{1}{N}\left( \log{C} + \frac{1}{2}\sum_{l=1}^K\log N_l\right)  + \max_{\mathbf{x}} f_N(\mathbf{x}) \leq p_N \leq \frac{1}{N} \sum_{l=1}^K\ \log(N_l+1) + \max_{\mathbf{x}} f_N(\mathbf{x})
\end{equation}
where
\begin{equation}\label{f_N}
f_N(\mathbf{x}) 
 = \frac{1}{2}\sum_{p,l=1}^K\alpha_{N, l} J_{pl}\alpha_{N, p}x_px_l + \sum_{p=1}^K\alpha_{N, p}h_px_p - \sum_{p=1}^K\alpha_{N, p} I(x_p) .
\end{equation}Therefore
\begin{equation}\label{tlimitpre}
 \lim_{N\to \infty} p_N =\max_{\mathbf{x}\in[-1,1]^K} f(\mathbf{x})
 \end{equation}
where 
\begin{equation}\label{functionf}
f(\mathbf{x}) = \lim_{N\to \infty} f_N(\mathbf{x}) = \frac{1}{2}(\mathbf{x},\Delta \mathbf{x}  )+(\tilde{\mathbf{h}},\mathbf{x} ) - (\boldsymbol{\hat\alpha}, I(\mathbf{x}))
\end{equation}
and $I(\mathbf{x})=(I(x_l))_{l\leq K}$ is defined in \eqref{binaryentropy}. The  stationarity conditions are:
\begin{equation}\label{Kgroup MFE}
    x_l = \tanh{\left( h_l + \sum_{p=1}^K\alpha_pJ_{lp}x_p \right) } \quad \quad \text{for} \quad l=1,...,K. 
\end{equation}

The solutions of the fixed point equation \eqref{Kgroup MFE} identify the stationary points of $f$ among which we are interested in the ones that reach the supremum.  There can be more than one  global maximizer depending on the parameters  $(\mathbf{J}, \mathbf{h})$ of the model. Here we assume that $(\mathbf{J}, \mathbf{h})$ are such that there exists a unique maximizer  $\boldsymbol{\mu}=(\mu^{(l)})_{l\leq K}$ of $f$ with negative definite Hessian, i.e., $\mathcal{H}_{f}(\boldsymbol{\mu})\prec 0$. We also stress that $f$ is smooth around $\boldsymbol{\mu}$ since the latter belongs to  the interior of  $[-1,1]^K$ for any choice of $(\mathbf{J},\mathbf{h})$ (see  Lemma \eqref{maximizer mu}).

The main idea is to derive  the asymptotic of the partition function for a perturbed system, where a small external field is added. For any $\mathbf{t}\in\mathbb{R}^K$ we define:
\begin{equation}\label{H_nt}
H_{N,\mathbf{t}}(\mathbf{m}_N)= H_{N}(\mathbf{m}_N)- 
\sqrt{N}\left(\mathbf{t},\sqrt{\boldsymbol{\alpha}_N}\, \mathbf{m}_N\right)
\end{equation}
where $\sqrt{\boldsymbol{\alpha}_N} = \text{diag}(\sqrt{N_p/N})_{p\leq K}$, and $H_{N}(\mathbf{m}_N)$ is the unperturbed Hamiltonian defined in \eqref{MCW_hamiltonian_matrix} with $Z_{N,\mathbf{t}}$ as the associated partition function. We also define the function:
\begin{equation}\label{f_nt alpha}
f_{N,\mathbf{t}}(\mathbf{x})= f_{N}(\mathbf{x})+ \frac{1}{\sqrt{N}}\left(\mathbf{t}, \sqrt{\boldsymbol\alpha_N}\mathbf{x}\right) \, .
\end{equation}
Let us notice from equation \eqref{f_nt alpha} that  $f_{N,\mathbf{t}}$ and all its partial derivatives with respect to $\mathbf{x}$ at any order converge uniformly to the one of  $f$ for any $\mathbf{x}$ in the interior of $[-1,1]^K$. 
Therefore by Lemma \ref{uniform convergence}, for $N$ large enough   $f_{N,\mathbf{t}}$ has a unique maximizer $\boldsymbol{\mu}_{N,\mathbf{t}} = (\mu_{N,\mathbf{t}}^{(l)})_{l\leq K} \in \mathbb{R}^K$ that converge to $\boldsymbol{\mu}$ and $\mathcal{H}_{f_{N,\mathbf{t}}}(\boldsymbol{\mu}_{N,\mathbf{t}})\prec 0$.

Next we will show that the magnetization vector concentrates around $\boldsymbol{\mu}_{N,\mathbf{t}}$ with overwhelming probability with respect to the Boltzmann-Gibbs measure induced by $H_{N,\mathbf{t}}$, i.e., $\mathcal{G}_{N,\mathbf{t}}$. For $\delta>0$ we denote by $ B_{N,\delta}$ a poly-interval centered at $ \boldsymbol\mu_{N,\mathbf{t}}$ with each coordinate $\mu_{N,\mathbf{t}}^{(l)}$ as the center and $N^{-\frac{1}{2}+\delta}_l$ as the length of the interval along the $l${th} dimension, namely:
\begin{equation}\label{polyinterval}
    B_{N,\delta} = \left\{ \mathbf{x} \in \mathbb{R}^K : \; \left|x_l - \mu_{N,\mathbf{t}}^{(l)}\right| < N^{-\frac{1}{2}+\delta}_l, \;\forall \;l\in \{1,\ldots,K\} \right\} \, .
\end{equation}

\begin{lemma}\label{lemma conc uniq}
Assume that $f(\mathbf{x})$ has a unique global maximizer $\boldsymbol\mu$ with $\mathcal{H}_f(\boldsymbol\mu)\prec 0$. Then for $N$ large enough $f_{N,\mathbf{t}}$ has a unique maximizer $\boldsymbol\mu_{N,\mathbf{t}}\xrightarrow{N\to\infty} \boldsymbol\mu$ and $\mathcal{H}_{f_{{N,\mathbf{t}}}}(\boldsymbol\mu_{N,\mathbf{t}})\prec 0$. Moreover  for $\delta\in(0,\frac{1}{2K+4})$ we have that
\begin{equation}\label{conc1}
\mathcal{G}_{N,\mathbf{t}}(\mathbf{m}_N\in B_{N,\delta}^c({\boldsymbol\mu}_{N,\mathbf{t}})) \leq \exp\bigg\{\frac{1}{2}N^{2\delta}\lambda_{N,\mathbf{t}}\bigg\}\mathcal{O}\left( N^{\frac{3K}{2}}\right) .
\end{equation}
where $\lambda_{N,\mathbf{t}}<0$ and the partition function \eqref{partition fnx} can be expanded as:
\begin{equation}\label{part_fnx exp}
Z_{N,\mathbf{t}} = \frac{e^{N f_{N,\mathbf{t}}(\boldsymbol\mu_{N,\mathbf{t}})}}{\sqrt{\det{\left(-\mathcal{H}_{{f}_{N,\mathbf{t}}} (\boldsymbol\mu_{N,\mathbf{t}})\right)} \prod_{l=1}^K(1-(\mu_{N,\mathbf{t}}^{(l)})^2)}} \cdot \bigg( 1+O\left(N^{-\frac{1}{2}+(K+2)\delta}\right) \bigg).
\end{equation}
\end{lemma}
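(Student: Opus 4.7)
The plan is to derive both claims together via a discrete Laplace-type analysis of the partition function written in the form \eqref{partition fnx}. The steps I would carry out are: (i) establish existence, uniqueness and non-degeneracy of $\boldsymbol\mu_{N,\mathbf{t}}$; (ii) compute the Gaussian asymptotic of the contribution of $\mathbf{x}\in B_{N,\delta}\cap S_N$ to obtain \eqref{part_fnx exp}; (iii) bound the complementary tail on $B_{N,\delta}^c\cap S_N$ to obtain \eqref{conc1}. Step (i) follows from an implicit function theorem argument applied to $\nabla f_{N,\mathbf{t}}(\mathbf{x})=0$: since $\boldsymbol\mu$ is a non-degenerate maximizer of $f$ with $\mathcal{H}_f(\boldsymbol\mu)\prec 0$ and $f_{N,\mathbf{t}} = f_N + N^{-1/2}(\mathbf{t},\sqrt{\boldsymbol\alpha_N}\mathbf{x})$ converges to $f$ together with all its derivatives uniformly on compact subsets of the interior of $[-1,1]^K$, the non-degenerate critical point $\boldsymbol\mu$ perturbs smoothly to a unique critical point $\boldsymbol\mu_{N,\mathbf{t}}$ in a neighborhood of $\boldsymbol\mu$ with negative definite Hessian, and global uniqueness is preserved because the strict global maximality gap of $\boldsymbol\mu$ for $f$ dominates the $O(N^{-1/2})$ perturbation.

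For step (ii), on $\mathbf{x}\in B_{N,\delta}\cap S_N$ I would insert the sharp Stirling formula \eqref{stirlings} for each $A_{N_l}(x_l)$, which together with the algebraic identity $-H_{N,\mathbf{t}}(\mathbf{x}) - \sum_l N_l I(x_l) = N f_{N,\mathbf{t}}(\mathbf{x})$ turns the summand into $\prod_l \sqrt{2/(\pi N_l(1-x_l^2))}\,e^{Nf_{N,\mathbf{t}}(\mathbf{x})}(1+O(N^{-1}))$. A Taylor expansion of $f_{N,\mathbf{t}}$ around its unique maximizer,
\begin{equation*}
f_{N,\mathbf{t}}(\mathbf{x}) = f_{N,\mathbf{t}}(\boldsymbol\mu_{N,\mathbf{t}}) + \tfrac{1}{2}\bigl(\mathbf{x}-\boldsymbol\mu_{N,\mathbf{t}},\mathcal{H}_{f_{N,\mathbf{t}}}(\boldsymbol\mu_{N,\mathbf{t}})(\mathbf{x}-\boldsymbol\mu_{N,\mathbf{t}})\bigr) + R_N(\mathbf{x}),
\end{equation*}
with $|R_N(\mathbf{x})| = O(\|\mathbf{x}-\boldsymbol\mu_{N,\mathbf{t}}\|^3) = O(N^{-3/2+3\delta})$ on $B_{N,\delta}$, reduces the sum to a Riemann approximation of a multivariate Gaussian integral over $B_{N,\delta}$ on the lattice of step $2/N_l$ in the $l$-th direction. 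A linear rescaling $\mathbf{u}=\sqrt{N\boldsymbol\alpha_N}(\mathbf{x}-\boldsymbol\mu_{N,\mathbf{t}})$ converts this into a standard Gaussian integral whose evaluation produces the leading-order expression of \eqref{part_fnx exp}, the relative error $O(N^{-1/2+(K+2)\delta})$ being assembled from the cubic Taylor remainder $NR_N$, the $O(N^{-1})$ Stirling correction, and the Riemann-sum-to-integral discretization over a window of diameter $N^{-1/2+\delta}$.

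For step (iii), using the crude Stirling upper bound $A_{N_l}(x_l)\leq e^{-N_l I(x_l)}$ from \eqref{boundAN} gives uniformly $\prod_l A_{N_l}(x_l) e^{-H_{N,\mathbf{t}}(\mathbf{x})} \leq e^{Nf_{N,\mathbf{t}}(\mathbf{x})}$. For $\mathbf{x}\in B_{N,\delta}^c$ inside a fixed small neighborhood of $\boldsymbol\mu_{N,\mathbf{t}}$, the quadratic Taylor bound and negative definiteness of the Hessian give $f_{N,\mathbf{t}}(\mathbf{x}) - f_{N,\mathbf{t}}(\boldsymbol\mu_{N,\mathbf{t}}) \leq \tfrac{1}{2}\lambda_{N,\mathbf{t}} N^{-1+2\delta}$, where $\lambda_{N,\mathbf{t}}<0$ is the largest eigenvalue of $\mathcal{H}_{f_{N,\mathbf{t}}}(\boldsymbol\mu_{N,\mathbf{t}})$; outside such a neighborhood, global uniqueness of $\boldsymbol\mu$ provides a uniform strictly negative gap $f(\mathbf{x})-f(\boldsymbol\mu)\leq -c$ for some $c>0$, easily absorbed into the same exponential for $N$ large. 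Since $|S_N \cap B_{N,\delta}^c| \leq |S_N| = O(N^K)$, the numerator of $\mathcal{G}_{N,\mathbf{t}}(\mathbf{m}_N\in B_{N,\delta}^c)$ is bounded by $O(N^K) \cdot e^{Nf_{N,\mathbf{t}}(\boldsymbol\mu_{N,\mathbf{t}}) + \tfrac{1}{2}N^{2\delta}\lambda_{N,\mathbf{t}}}$; dividing by the lower bound $Z_{N,\mathbf{t}}\gtrsim N^{-K/2}e^{Nf_{N,\mathbf{t}}(\boldsymbol\mu_{N,\mathbf{t}})}$ extracted from (ii) produces precisely the claimed $O(N^{3K/2})$ prefactor.

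The main obstacle is the error bookkeeping in step (ii). The window size $N^{-1/2+\delta}$ is chosen to balance two competing requirements: the cubic Taylor remainder integrated against the leading Gaussian contributes $N\cdot O(N^{-3/2+3\delta}) = O(N^{-1/2+3\delta})$, while the window must still contain enough lattice sites (of order $\prod_l N_l^{1/2+\delta}$) for the Riemann sum to faithfully approximate the Gaussian integral with negligible boundary error. The restriction $\delta \in (0, 1/(2K+4))$ is what ensures that all of these contributions—together with the $x$-dependence of the Stirling prefactor $\sqrt{1/(1-x_l^2)}$ and the $K$-fold accumulation of $\delta$-type exponents from the multivariate Riemann correction—combine into the stated relative error $O(N^{-1/2+(K+2)\delta})$ that vanishes in the limit.
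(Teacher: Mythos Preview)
Your proposal is correct and follows essentially the same approach as the paper: crude and sharp Stirling bounds \eqref{boundAN}--\eqref{stirlings}, third-order Taylor expansion of $f_{N,\mathbf{t}}$ around $\boldsymbol\mu_{N,\mathbf{t}}$, Riemann-sum approximation of the discrete Laplace integral, and a counting bound on $|S_N|$. The only cosmetic difference is the order of your steps (ii) and (iii): the paper first proves the concentration bound \eqref{conc1} using only the crude lower Stirling estimate on a single term near the maximizer to bound the denominator (rather than the full Laplace asymptotic), and then invokes \eqref{conc1} to justify restricting $Z_{N,\mathbf{t}}$ to $B_{N,\delta}$ before carrying out the Laplace computation; your ordering works equally well since the Laplace computation on $B_{N,\delta}\cap S_N$ already furnishes the lower bound $Z_{N,\mathbf{t}}\gtrsim N^{-K/2}e^{Nf_{N,\mathbf{t}}(\boldsymbol\mu_{N,\mathbf{t}})}$ you need for (iii).
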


\begin{proof}

The fact that for $N$ large enough 
$f_{N,\mathbf{t}}$ has a unique non degenrate global maximizers follows easily by Lemma \ref{uniform convergence}. For the second part of the statement one has 
\begin{equation}\label{prob_m*}
\begin{split}
    \mathcal{G}_{N,\mathbf{t}}(\mathbf{m}_N\in B_{N,\delta}^c) &= \dfrac{\sum_{\mathbf{x}\in S_N\cap B_{N,\delta}^c}\prod_{l=1}^K A_{N_l}(x_{l}) \exp{\left( - H_{N,\mathbf{t}}(\mathbf{x}) \right) }}{\sum_{\mathbf{x}\in S_N}\prod_{l=1}^K A_{N_l}(x_{l}) \exp{\left( - H_{N,\mathbf{t}}(\mathbf{x}) \right) }}\\
    & \leq \frac{C \prod_{l=1}^K \sqrt{N_l}(N_l+1)\sup_{\mathbf{x}\in B_{N,\delta}^c}e^{Nf_{N,\mathbf{t}}(\mathbf{x})}}{\sup_{\mathbf{x}\in [-1,1]^K}e^{Nf_{N,\mathbf{t}}(\mathbf{x})}}\\
    & = \exp{\left\{N\left(\sup_{\mathbf{x}\in B_{N,\delta}^c}f_{N,\mathbf{t}}(\mathbf{x}) - f_{N,\mathbf{t}}({\boldsymbol\mu}_{N,\mathbf{t}}) \right)\right\}} \mathcal{O}\left(  N^{\frac{3K}{2}}\right) .
\end{split}
\end{equation}

The $\sup$ in the denominator of the inequality appearing in the second line has been extended over $[-1,1]^K$ instead of $S_N$ due to the fact that $f_{N,\mathbf{t}}$ is Lipschitz continuous away from the boundaries of $[-1,1]^K$. The difference between  $\sup_{\mathbf{x}\in S_N}f_{N,\mathbf{t}}(\mathbf{x})$ and $\sup_{\mathbf{x}\in [-1,1]^K}f_{N,\mathbf{t}}(\mathbf{x})$ is small due to the Lipschitz property of $f_{N,\mathbf{t}}$ and bounded by a constant due to the mesh size of $S_N$. Let us consider the simplest case where $K=1$. Then the set $S_N = S_1$ and $S_{1}=\left\{-1+\frac{2n}{N_1},n=0,\ldots, N_1\right\}$ with mesh size $\frac{2}{N_1}$ and $\mathbf{t} \in \mathbb{R}^K = t$. Now, the Lipschitz continuity of $f_{N,t}$ means there exists a constant $L$ such that: $\left| f_{N,t}(x) - f_{N,t}(y)\right|\leq L|x-y|$ for all $x,y$ away from $\pm 1$. Given that the maximum distance between any point $x\in S_N$ and the nearest point in $(-1,1)$ is at most $\frac{2}{N_1}$, the difference between the supremum of $f_{N,t}$ over $S_N$ and over $(-1,1)$ can be bounded by: $\left| \sup_{x\in S_N} f_{N,t}(x) - \sup_{x\in (-1,1)} f_{N,t}(x)\right|\leq \frac{2L}{N_1}$. Thus, the difference is small and goes to zero as $N_1 \to \infty$. This justifies substituting the sum over the discrete set $S_N$ with the supremum over $[-1,1]$, with the error being controlled by the mesh size and the Lipschitz constant $L$. 

We claim that in the last equality of \eqref{prob_m*}, the supremum  of $f_{N,\mathbf{t}}$ restricted to $B_{N,\delta}^c$ is attained at some $\mathbf{y}_{N,\mathbf{t},\delta}$, lying on  the boundary of $B_{N,\delta}$. A formal  proof can be obtained through a straighforward generalization of  Lemma B.11 in \cite{mukherjee2021variational}. In fact, if  $\mathbf{x}\in B_{N,\delta}^c$ then at least one of the coordinates $x_l$ is outside $\left(\mu_{N,\mathbf{t}}^{(l)}-N_l^{-\frac{1}{2}+\delta}, \mu_{N,\mathbf{t}}^{(l)}+N
_l^{-\frac{1}{2}+\delta}\right)$. The function $f_{N,\mathbf{t}}$  decreases outside $B_{N,\delta}$, and hence the supremum must be on the boundary. 
Bearing this in mind we have: 
\begin{equation}
|\mathbf{y}_{N,\mathbf{t},\delta}- \boldsymbol\mu_{N,\mathbf{t}}|^2\leq \sum_{l\leq K} N_l^{-1+2\delta}=\sum_{l\leq K} (N\alpha_{N, l})^{-1+2\delta }= N^{-1+2\delta } \sum_{l\leq K} (\alpha_{N, l})^{-1+2\delta }=
N^{-1+2\delta } c_{\alpha_N,\delta}
\end{equation}
and, hence, using a Taylor expansion up to third order with Lagrange type remainder:
\begin{multline}\label{taylor conc1}
f_{N,\mathbf{t}}\left( \mathbf{y}_{N,\mathbf{t},\delta} \right)  =  f_{N,\mathbf{t}}(\boldsymbol{\mu}_{N,\mathbf{t}})+\frac{1}{2}\left( (\mathbf{y}_{N,\mathbf{t},\delta}- \boldsymbol\mu_{N,\mathbf{t}})\mathcal{H}_{f_{N,\mathbf{t}}}(\boldsymbol\mu_{N,\mathbf{t}}),(\mathbf{y}_{N,\mathbf{t},\delta}- \boldsymbol\mu_{N,\mathbf{t}})\right) + \\
+ \frac{1}{6} \sum_{l,p,s=1}^K \frac{\partial^3 f_{N,\mathbf{t}}}{\partial x_l \partial x_p \partial x_s}(\vartheta_{N,\mathbf{t}}) (\mathbf{y}_{N,\mathbf{t},\delta} - \boldsymbol{\mu}_{N,\mathbf{t}})_l (\mathbf{y}_{N,\mathbf{t},\delta} - \boldsymbol{\mu}_{N,\mathbf{t}})_p (\mathbf{y}_{N,\mathbf{t},\delta} - \boldsymbol{\mu}_{N,\mathbf{t}})_s\\
\leq \,f_{N,\mathbf{t}}(\boldsymbol{\mu}_{N,\mathbf{t}}) + \frac{1}{2} N^{-1+2\delta} \underbrace{c_{\alpha_N,\delta}}_{>1}\lambda_{N,\mathbf{t}} + \frac{1}{6} \sum_{l,p,s=1}^K \left|\frac{\partial^3 f_{N,\mathbf{t}}}{\partial x_l \partial x_p \partial x_s}(\vartheta_{N,\mathbf{t}})\right| \cdot N^{-3/2 + 3\delta} (c_{\alpha_N,\delta})^{3/2} \\
\leq \,f_{N,\mathbf{t}}(\boldsymbol{\mu}_{N,\mathbf{t}}) + \frac{1}{2} N^{-1+2\delta}\lambda_{N,\mathbf{t}} + \mathcal{O}\left( N^{-3/2 + 3\delta}\right) 
\end{multline}
where $\vartheta_{N,\mathbf{t}}$ is an intermediate point of the segment $[\boldsymbol{\mu}_{N,\mathbf{t}},\mathbf{y}_{N,\mathbf{t},\delta}]$, $\lambda_{N,\mathbf{t}}<0$ is the largest eigenvalue of $\mathcal{H}_{f_{N,\mathbf{t}}}$ and in the third term of the last inequality of \eqref{taylor conc1} we have used that $\alpha_{N, l} \to \alpha_l$ as $N\to \infty$. Hence, \eqref{conc1} follows from \eqref{prob_m*} and \eqref{taylor conc1}.

Let's now begin with the proof of the asymptotic expansion of the partition function $Z_{N,\mathbf{t}}$ \eqref{part_fnx exp}. Observe that the concentration results, equation \eqref{conc1}, implies that almost all the contribution to $Z_{N,\mathbf{t}}$ comes from spin configurations having magnetization in a vanishing neighbourhood of the maximizer $\boldsymbol\mu_{N,\mathbf{t}}$, i.e., 
$\mathcal{G}_{N,\mathbf{t}}(\mathbf{m}_N\in B_{N,\delta})= 1-\mathcal{O}(e^{-cN^{2\delta}})$ for some $c>0$. Hence,
\begin{equation}\label{partition fnx exp}
Z_{N,\mathbf{t}} =\left(1+\mathcal{O}\left(e^{-cN^{2\delta}}\right)\right) \sum_{\mathbf{x}\in S_N\cap B_{N,\delta}} \zeta_{N,\mathbf{t}}(\mathbf{x})\,,\quad
\zeta_{N,\mathbf{t}}(\mathbf{x}) := \prod_{l=1}^K 
\binom{N_l}{\frac{N_l(1+x_l)}{2}} \exp\left( -H_{N,\mathbf{t}}(\mathbf{x})\right) .
\end{equation}

Following the same argument of \cite{Contucci_Mingione_Osabutey_2024,MSB21}, we approximate the sum in \eqref{partition fnx exp} by an integral using Lemma \ref{appendix riemann} over the set $B_{N,\delta}$ with shrinking interval containing the unique vector of global maximizer of $f_{N,\mathbf{t}}$ which are elements of $S_N$:
\begin{multline}\label{diff zeta}
    \Bigg|\int_{B_{N,\delta}} \zeta_{N,\mathbf{t}}(\mathbf{x}) d\mathbf{x} - \frac{2^K}{\prod_{l=1}^K N_l}\sum_{\mathbf{x}\in S_{N}\cap B_{N,\delta} } \zeta_{N,\mathbf{t}}(\mathbf{x}) \Bigg| 
    \leq N^{\left( {\frac{1}{2}+\delta}\right)  \left( K-1\right) }\cdot N^{-\frac{1}{2}+\delta}\cdot N^{-K}\sup_{\mathbf{x}\in B_{N,\delta}}|\nabla\zeta_{N,t}(\mathbf{x})|\\
= \mathcal{O}\left( N^{\left( {-\frac{1}{2}+\delta}\right)  \left( K+1\right) }\right) \zeta_{N,\mathbf{t}}(\boldsymbol\mu_{N,\mathbf{t}})
\end{multline}
where $\sup_{\mathbf{x}\in B_{N,\delta}}|\nabla\zeta_{N,\mathbf{t}}(\mathbf{x})|$ is bounded in Lemma \ref{sup derivative zeta'}. Now, following from \eqref{diff zeta} and applying the results of Lemma \ref{appendix laplace} to approximate the integral, we have that:

\begin{equation}
\begin{split}
\sum_{\mathbf{x}\in S_{N}\cap B_{N,\delta} } & \zeta_{N,\mathbf{t}}(\mathbf{x}) = \frac{\prod_{l=1}^K N_l}{2^K} \int_{B_{N,\delta}} \zeta_{N,\mathbf{t}}(\mathbf{x}) d\mathbf{x} + \mathcal{O}\left( N^{\left( {-\frac{1}{2}+\delta}\right)  \left( K+1\right)  +K}\right)  \zeta_{N,\mathbf{t}}(\boldsymbol\mu_{N,\mathbf{t}})\cr 
=& \frac{\prod_{l=1}^K N_l}{2^K} \sqrt{\frac{2^K}{\pi^K\prod_{l=1}^KN_l(1-(\mu_{N,\mathbf{t}}^{(l)})^2)}}\sqrt{\frac{(2\pi)^K}{\prod_{l=1}^KN_l\det{\left(-\mathcal{H}_{f_{N,\mathbf{t}}} (\boldsymbol\mu_{N,\mathbf{t}})\right)} }} e^{N f_{N,\mathbf{t}}(\boldsymbol\mu_{N,\mathbf{t}})}\cr
&\cdot \bigg( 1+\mathcal{O}\left(N^{-1/2+(K+2)\delta}\right) \bigg) \cr
&= \frac{e^{N f_{N,\mathbf{t}}(\boldsymbol\mu_{N,\mathbf{t}})}}{\sqrt{\det{\left(-\mathcal{H}_{{f}_{N,\mathbf{t}}} (\boldsymbol\mu_{N,\mathbf{t}})\right)} \prod_{l=1}^K(1-(\mu_{N,\mathbf{t}}^{(l)})^2)}} \cdot \bigg( 1+\mathcal{O}\left(N^{-\frac{1}{2}+(K+2)\delta}\right) \bigg)
\end{split}
\end{equation}
for  $\delta\in(0,\frac{1}{2K+4})$. Hence, the partition function becomes:
\begin{equation}\label{part_fnx asymp exp}
Z_{N,\mathbf{t}} = \frac{e^{N f_{N,\mathbf{t}}(\boldsymbol\mu_{N,\mathbf{t}})}}{\sqrt{\det{\left(-\mathcal{H}_{{f}_{N,\mathbf{t}}} (\boldsymbol\mu_{N,\mathbf{t}})\right)} \prod_{l=1}^K(1-(\mu_{N,\mathbf{t}}^{(l)})^2)}} \cdot \bigg( 1+\mathcal{O}\left(N^{-\frac{1}{2}+(K+2)\delta}\right) \bigg).
\end{equation}
This completes the proof.
\end{proof}

Now we are ready to proof Theorem \ref{uniqness CLT}. In order to approximate the distribution of the scaled difference between the vector of global species magnetization and the limiting global maximizers, we will compute the  limiting  moment generating function of $\mathbf{m}_N=(m_{l})_{l\leq K}$ for some $\mathbf{t}\in \mathbb{R}^K$ using the expanded form of the partition function in \eqref{part_fnx asymp exp}: 
\begin{multline}\label{mgf}
\mathbb{E}\bigg[e^{\sqrt{N}( \mathbf{t},\sqrt{\boldsymbol{\alpha}_N} (\mathbf{m}_N-\boldsymbol\mu))} \bigg] = e^{-\sqrt{N}( \mathbf{t},\sqrt{\boldsymbol{\alpha}_N} \boldsymbol\mu)}\int_{\mathbb{R}^K} e^{\sqrt{N}( \mathbf{t},\sqrt{\boldsymbol{\alpha}_N} \mathbf{m}_N)} \mathcal{G}_{N}(\mathbf{m}_N) d\mathbf{m}_N \\
= e^{-\sqrt{N}( \mathbf{t},\sqrt{\boldsymbol{\alpha}_N} \boldsymbol\mu)}\frac{Z_{N,\mathbf{t}}}{Z_N} \, .
\end{multline} 

From the last equality on the right of \eqref{mgf} and using \eqref{part_fnx asymp exp}:

\begin{equation}\label{mgf partition}
\begin{split}
 \frac{Z_{N,\mathbf{t}}}{Z_N} \sim &\frac{\exp{(N \max_{\mathbf{x}} f_{N,\mathbf{t}}(\mathbf{x}))}}{\exp{(N \max_{\mathbf{x}} f_N(\mathbf{x}))}}\cr
 =& \exp{\left( N[f_{N}(\boldsymbol\mu_{N,\mathbf{t}})  - f_{N}(\boldsymbol\mu_{N})] + \sqrt{N} \left( \mathbf{t}, \sqrt{\boldsymbol\alpha_N} \boldsymbol\mu_{N,\mathbf{t}}\right) \right) }
\end{split}
\end{equation}
where $\sim$ means equality up to $(1+o(1))$.
In the last equality above, we used that $\boldsymbol\mu_{N,\mathbf{t}}$  and $\boldsymbol\mu_{N}$ are unique global maximizers of $f_{N,\mathbf{t}}(\mathbf{x})$  and $f_{N}(\mathbf{x})$ respectively. Now, following from Lemma \ref{generalized expansion pt}, equation \eqref{mgf partition} becomes:
\begin{equation}
\frac{Z_{N,\mathbf{t}}}{Z_N} \sim \exp{\left( -\frac{1}{2} \left( \mathbf{t},  \sqrt{\boldsymbol\alpha_N}\mathcal{H}_{f_N}^{-1} (\boldsymbol\mu_N) \sqrt{\boldsymbol\alpha_N}  \mathbf{t}  \right) + \sqrt{N} \left( \mathbf{t}, \sqrt{\boldsymbol\alpha_N} \boldsymbol\mu_{N}\right)  \right) } .
\end{equation}

Therefore,

\begin{equation}\label{clt limit expectation}
\begin{split}
\mathbb{E}\bigg[e^{\sqrt{N}( \mathbf{t},\sqrt{\boldsymbol{\alpha}_N} (\mathbf{m}_N-\boldsymbol\mu))} \bigg] \sim & e^{-\sqrt{N} \left( \mathbf{t}, \sqrt{\boldsymbol\alpha_N} \boldsymbol\mu\right)} \cdot e^{\sqrt{N} \left( \mathbf{t}, \sqrt{\boldsymbol\alpha_N} \boldsymbol\mu_{N}\right)} \cdot e^{- \frac{1}{2}\left( \mathbf{t}, \sqrt{\boldsymbol\alpha_{N}}\mathcal{H}^{-1}_{f_N}(\boldsymbol\mu_N) \sqrt{\boldsymbol\alpha_{N}} \mathbf{t}  \right) }\\
= & e^{\sqrt{N}\Big[\mathbf{t} \sqrt{\boldsymbol\alpha_N} \left( \boldsymbol\mu_N - \boldsymbol\mu \right)  \Big]} \cdot e^{- \frac{1}{2}\left( \mathbf{t}, \sqrt{\boldsymbol\alpha_{N}}\mathcal{H}^{-1}_{f_N}(\boldsymbol\mu_N) \sqrt{\boldsymbol\alpha_{N}} \mathbf{t}  \right) }.
\end{split}
\end{equation}
Now, from \eqref{f_N} we have that
\begin{equation}\label{mu tanh vec}
\boldsymbol\mu_N = \tanh{\left(  \mathbf{J} \boldsymbol\alpha_N \boldsymbol\mu_N + \mathbf{h} \right) } \,.
\end{equation}

Let $\boldsymbol{\beta}=(\beta_p)_{p\leq K}$  with  $|\beta_p|<\infty$, and $\theta \in \left[\frac{1}{2}, \infty\right) $. Assume that $\boldsymbol{\alpha}_N\equiv\boldsymbol{\alpha}(\boldsymbol{\beta})= \boldsymbol{\alpha}+ N^{-\theta}\text{diag}(\boldsymbol{\beta})$. If we set $\mathbf{b}_N = \mathbf{J} \boldsymbol\alpha_N \boldsymbol\mu_N + \mathbf{h}$, then 
\begin{equation}
\frac{\partial \boldsymbol{\mu}_{N}}{\partial \boldsymbol{\beta}} = \frac{\partial \boldsymbol{\mu}_{N}}{\partial \mathbf{b}_N} \cdot \frac{\partial \mathbf{b}_N}{\partial \boldsymbol{\beta}}
\end{equation}
where 
\begin{equation}
\frac{\partial \boldsymbol{\mu}_{N}}{\partial \mathbf{b}_N}= \text{diag}(1-\mu_{N,l}
^2)_{l\leq K} \quad \text{and} \quad \frac{\partial \mathbf{b}_N}{\partial \boldsymbol{\beta}} = \mathbf{J}\, \frac{\partial \boldsymbol\alpha_{N}}{\partial \boldsymbol{\beta}} \boldsymbol{\mu}_{N} + \mathbf{J}\,\boldsymbol{\alpha}_N \frac{\partial \boldsymbol{\mu}_{N}}{\partial \boldsymbol{\beta}} \;.
\end{equation}
Let, $\mathbf{M} = \text{diag}(1-\mu_{N,l}
^2)_{l\leq K}$ then
\begin{equation} \label{derv mu_N beta}
\frac{\partial \boldsymbol{\mu}_{N}}{\partial \boldsymbol{\beta}} = \mathbf{M} \left[\mathbf{J}\, \frac{\partial \boldsymbol\alpha_{N}}{\partial \boldsymbol{\beta}} \boldsymbol{\mu}_{N} + \mathbf{J}\,\boldsymbol{\alpha}_N \frac{\partial \boldsymbol{\mu}_{N}}{\partial \boldsymbol{\beta}}\right]
\end{equation}which entails
\begin{equation}
\frac{\partial \boldsymbol{\mu}_{N}}{\partial \boldsymbol{\beta}} \Big[\mathbf{M}^{-1} - \mathbf{J}\,\boldsymbol{\alpha}_N \Big] = N^{-\theta} \, \mathbf{J}\, \text{diag}(\boldsymbol{\mu}_{N}).
\end{equation}
Since $\mathcal{H}_{f_N}(\boldsymbol\mu_N) = - \Big[\mathbf{M}^{-1} - \mathbf{J}\,\boldsymbol{\alpha}_N \Big]\cdot \boldsymbol{\alpha}_N$, hence, \eqref{derv mu_N beta} yields:
\begin{equation}
\begin{split}
\frac{\partial \boldsymbol{\mu}_{N}}{\partial \boldsymbol{\beta}} =& \Big[\mathbf{M}^{-1}-\mathbf{J}\boldsymbol{\alpha}_N  \Big]^{-1} N^{-\theta} \, \mathbf{J}\, \text{diag}(\boldsymbol{\mu}_{N})= -N^{-\theta}\, \mathcal{H}_{f_N}^{-1}(\boldsymbol\mu_N) \, \boldsymbol\alpha_N \, \mathbf{J}\, \text{diag}(\boldsymbol{\mu}_{N}).
\end{split}
\end{equation}
Finally, a Taylor's expansion of $\boldsymbol\mu_N$ around $\boldsymbol{\beta}=\mathbf{0}$ gives:
\begin{equation}\label{taylor mu_n}
\boldsymbol{\mu}_N - \boldsymbol{\mu} = -N^{-\theta}\,    \mathcal{H}_{f}^{-1}(\boldsymbol\mu) \,\boldsymbol\alpha \, \mathbf{J}\, \text{diag}(\boldsymbol{\mu})\boldsymbol{\beta} +  \mathcal{O}\left( {N^{-2\theta}} \right) .
\end{equation}
Therefore, Theorem \ref{uniqness CLT} follows from \eqref{taylor mu_n} and taking the limit as $N\to \infty$ of \eqref{clt limit expectation}.

\subsection{Proof of Theorem \ref{nonuniqness CLT}}
We will now address the case where $f$ reaches its maximum in more than one point.

\begin{lemma}\label{lemma ZN many max}
Suppose $f(\mathbf{x})$ has $n$ global maximizers $\boldsymbol{\mu}^i$ for $i\leq n$ such that $\mathcal{H}_{f}({\boldsymbol\mu}^i)\prec 0$.
For each $i \leq n$  let $A_i\subset [-1,1]^K$  be a poly-interval such that $\boldsymbol{\mu}^i\in \text{int}(A_i)$ is the unique maximizer of $f$ on $\text{cl}(A_i)$. Then, for each $i\leq n$ and for $N$ large enough, $f_{N,\mathbf{t}}$ has a unique global maximizer ${\boldsymbol\mu}_{N,\mathbf{t}}^i \to \boldsymbol\mu$
 on $A_i$ with $\mathcal{H}_{f_{N,\mathbf{t}}}({\boldsymbol\mu}_{N,\mathbf{t}}^i)<0$.  Moreover  for $\delta\in\left( 0, \frac{1}{2K+4}\right) $ one has
\begin{equation}\label{conc2}
\mathcal{G}_{N,\mathbf{t}}(\mathbf{m}_N\in B_{N,\delta}^c({\boldsymbol\mu}_{N,\mathbf{t}}^i) |\mathbf{m}_N \in A_i)= \exp\bigg\{\frac{1}{2}N^{2\delta}\lambda^i_{N,\mathbf{t}}\bigg\}\mathcal{O}\left( N^{\frac{3K}{2}}\right) 
\end{equation}
where $\lambda^i_{N,\mathbf{t}}<0$ is the largest eigenvalue of $\mathcal{H}_{{f}_{N,\mathbf{t}}} (\boldsymbol\mu_{N,\mathbf{t}}^i)$ for ${\boldsymbol\mu}_{N,\mathbf{t}}^i \in int (A_i)$ and 

\begin{equation}\label{conc3}
\mathcal{G}_{N,\mathbf{t}}\left( \mathbf{m}_N\in B_{N,\delta, n}^c \right) = \exp\bigg\{\frac{1}{2}N^{2\delta} \max_{1\leq i\leq n}\lambda^i_{N,\mathbf{t}}\bigg\}\mathcal{O}\left( N^{\frac{3K}{2}}\right) 
\end{equation}
where $B_{N,\delta, n}^c = \bigcup_{i\leq n} B_{N,\delta}^c({\boldsymbol\mu}_{N,\mathbf{t}}^i). $
The partition function restricted to the interval $A_i$, can be expanded as: \begin{equation}\label{z mi expansion}
\begin{split}
    Z_{N,\mathbf{t}}\big|_{A_i}&= \sum_{\mathbf{x}\in S_N\cap A_i} \prod_{l=1}^K A_{N_l}(x_l) \exp{N(f_{N,\mathbf{t}}(\mathbf{x}))} \\
    &= \dfrac{e^{Nf_{N,\mathbf{t}}(\boldsymbol\mu_{N,\mathbf{t}}^i)}}{\sqrt{\det{\left(-\mathcal{H}_{{f}_{N,\mathbf{t}}} (\boldsymbol\mu_{N,\mathbf{t}}^i)\right)} \prod_{l=1}^K(1-(\mu_{N,\mathbf{t}}^{i,(l)})^2)}}\cdot \left(  1+O\left( N^{-1/2+(K+2)\delta} \right)  \right) .
\end{split}
\end{equation}
\end{lemma}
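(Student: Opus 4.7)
The plan is to generalize the proof of Lemma \ref{lemma conc uniq} by working locally within each poly-interval $A_i$ rather than globally on $[-1,1]^K$. The existence, for $N$ large, of a unique maximizer $\boldsymbol{\mu}^i_{N,\mathbf{t}}$ of $f_{N,\mathbf{t}}$ on $A_i$ with $\boldsymbol{\mu}^i_{N,\mathbf{t}}\to\boldsymbol{\mu}^i$ and $\mathcal{H}_{f_{N,\mathbf{t}}}(\boldsymbol{\mu}^i_{N,\mathbf{t}})\prec 0$ follows from Lemma \ref{uniform convergence} applied on $A_i$: the uniform convergence of $f_{N,\mathbf{t}}$ and its first two derivatives to those of $f$ on compact subsets of $\text{int}([-1,1]^K)$, together with the strict negative-definiteness of $\mathcal{H}_f(\boldsymbol{\mu}^i)$, forces $f_{N,\mathbf{t}}$ to inherit a unique non-degenerate critical point in $\text{int}(A_i)$ (because $\boldsymbol{\mu}^i\in\text{int}(A_i)$), which is a strict local maximum.

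For the conditional concentration \eqref{conc2} I would write
\begin{equation*}
\mathcal{G}_{N,\mathbf{t}}\!\left( \mathbf{m}_N\in B_{N,\delta}^c(\boldsymbol{\mu}^i_{N,\mathbf{t}}) \,\big|\, \mathbf{m}_N\in A_i \right)
= \frac{\sum_{\mathbf{x}\in S_N\cap A_i\cap B_{N,\delta}^c(\boldsymbol{\mu}^i_{N,\mathbf{t}})} \prod_{l}A_{N_l}(x_l)\,e^{-H_{N,\mathbf{t}}(\mathbf{x})}}{\sum_{\mathbf{x}\in S_N\cap A_i} \prod_{l}A_{N_l}(x_l)\,e^{-H_{N,\mathbf{t}}(\mathbf{x})}}
\end{equation*}
and bound numerator and denominator exactly as in \eqref{prob_m*} via Lemma \ref{configuration count}. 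Since for $N$ large the shrinking neighbourhood $B_{N,\delta}(\boldsymbol{\mu}^i_{N,\mathbf{t}})$ lies strictly inside $A_i$, the supremum of $f_{N,\mathbf{t}}$ over $A_i\cap B_{N,\delta}^c(\boldsymbol{\mu}^i_{N,\mathbf{t}})$ is attained on $\partial B_{N,\delta}(\boldsymbol{\mu}^i_{N,\mathbf{t}})$ (the function decreases away from the unique interior maximum), and the third-order Taylor expansion \eqref{taylor conc1} around $\boldsymbol{\mu}^i_{N,\mathbf{t}}$ yields the factor $\exp\{\tfrac{1}{2}N^{2\delta}\lambda^i_{N,\mathbf{t}}\}$ with $\lambda^i_{N,\mathbf{t}}<0$ the largest eigenvalue of $\mathcal{H}_{f_{N,\mathbf{t}}}(\boldsymbol{\mu}^i_{N,\mathbf{t}})$, up to the polynomial prefactor $\mathcal{O}(N^{3K/2})$.

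The global bound \eqref{conc3} will then follow by a partition argument: decompose $[-1,1]^K=R\cup\bigcup_{i\leq n}A_i$ where $R$ is the complement. Because $f$ attains its maximum only at $\{\boldsymbol{\mu}^i\}\subset\bigcup_i\text{int}(A_i)$, there exists $\eta>0$ with $\sup_{\mathbf{x}\in R}f(\mathbf{x})\leq \max f-\eta$, so the contribution of $R$ to $Z_{N,\mathbf{t}}$ is exponentially smaller (of order $e^{-cN}$) than the contribution of each $A_i$; combining this with a union bound over $i\leq n$ applied to \eqref{conc2} gives \eqref{conc3}. Finally, the restricted expansion \eqref{z mi expansion} is obtained by reproducing the derivation of \eqref{part_fnx asymp exp} from \eqref{partition fnx exp}, but starting from $S_N\cap A_i$: use \eqref{conc2} to localize the sum to $S_N\cap B_{N,\delta}(\boldsymbol{\mu}^i_{N,\mathbf{t}})$ up to an error $(1+\mathcal{O}(e^{-cN^{2\delta}}))$, replace the Riemann sum by the integral via Lemma \ref{appendix riemann}, and apply Laplace's method (Lemma \ref{appendix laplace}) at the interior maximum $\boldsymbol{\mu}^i_{N,\mathbf{t}}$, picking up the usual Hessian-determinant and $1-(\mu^{i,(l)}_{N,\mathbf{t}})^2$ factors.

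The main obstacle is verifying that the steps of Lemma \ref{lemma conc uniq}, which exploit globality of the maximizer on $[-1,1]^K$, go through unchanged when the optimization is restricted to the proper subset $A_i$. The essential point is that for $N$ large the shrinking ball $B_{N,\delta}(\boldsymbol{\mu}^i_{N,\mathbf{t}})$ is compactly contained in $\text{int}(A_i)$, so that no boundary effect from $\partial A_i$ enters either the Taylor remainder in \eqref{taylor conc1} or the error estimates of Lemmas \ref{appendix riemann}--\ref{appendix laplace}; this compact containment, and the smoothness of $f$ around $\boldsymbol{\mu}^i$, are guaranteed by $\boldsymbol{\mu}^i_{N,\mathbf{t}}\to\boldsymbol{\mu}^i\in\text{int}(A_i)$ together with $\delta<\tfrac{1}{2}$.
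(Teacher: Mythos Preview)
Your proposal is correct and follows essentially the same route as the paper: invoke Lemma \ref{uniform convergence} on each $A_i$ for the existence and non-degeneracy of $\boldsymbol{\mu}^i_{N,\mathbf{t}}$, repeat the numerator/denominator bound of \eqref{prob_m*} and the Taylor expansion \eqref{taylor conc1} verbatim inside $A_i$ to get \eqref{conc2}, and rerun the Riemann--Laplace chain (Lemmas \ref{appendix riemann}--\ref{appendix laplace}) on $S_N\cap A_i$ to obtain \eqref{z mi expansion}. The only point where you deviate slightly is the derivation of \eqref{conc3}: the paper simply observes that for $N$ large $A_i\cap B_{N,\delta}^c(\boldsymbol{\mu}^i_{N,\mathbf{t}})=A_i\cap B_{N,\delta,n}^c$ and then applies the law of total probability $\mathcal{G}_{N,\mathbf{t}}(B_{N,\delta,n}^c)=\sum_i \mathcal{G}_{N,\mathbf{t}}(B_{N,\delta,n}^c\mid A_i)\,\mathcal{G}_{N,\mathbf{t}}(A_i)$, bounding each conditional factor by \eqref{conc2} with $\max_i\lambda^i_{N,\mathbf{t}}$. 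Your version instead isolates a residual region $R=[-1,1]^K\setminus\bigcup_i A_i$ and kills it with the strict gap $\sup_R f\le \max f-\eta$ before summing; this is slightly more explicit (it does not tacitly assume the $A_i$ partition the cube) but otherwise amounts to the same computation.
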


\begin{proof}
For $N$ large enough, $B_{N,\delta}({\boldsymbol\mu}_{N,\mathbf{t}}^i)\subset A_i$ for all $i\leq n$ and equation 
\eqref{conc2} is obtained following a step-by-step argument used to prove equation \eqref{conc1}. Hence, it follows that,

\begin{equation}\label{result conc2}
\mathcal{G}_{N,\mathbf{t}}(\mathbf{m}_N\in B_{N,\delta}^c({\boldsymbol\mu}_{N,\mathbf{t}}^i) |\mathbf{m}_N \in A_i)= \exp\bigg\{\frac{1}{2}N^{2\delta}\lambda^i_{N,\mathbf{t}}\bigg\}\mathcal{O}\left( N^{\frac{3K}{2}}\right) .
\end{equation}

Now, let's observe that, for $N$ large enough, $A_i\setminus B_{N,\delta}({\boldsymbol\mu}_{N,\mathbf{t}}^i) = A_i\setminus B_{N,\delta,n}$ for $i\leq n$. Hence, for all $1\leq i \leq n$ and $N$ large, it follows that 
\begin{equation}
\mathcal{G}_{N,\mathbf{t}}(\mathbf{m}_N\in B_{N,\delta}^c({\boldsymbol\mu}_{N,\mathbf{t}}^i) \vert \mathbf{m}_N \in A_i)= \mathcal{G}_{N,\mathbf{t}}\left( \mathbf{m}_N\in B_{N,\delta, n}^c \vert \mathbf{m}_N \in A_i\right) . 
\end{equation}
Therefore, we have that,
\begin{equation}\label{result conc3}
\begin{split}
\mathcal{G}_{N,\mathbf{t}}\left( \mathbf{m}_N\in B_{N,\delta, n}^c\right)  =& \sum_{1\leq i\leq n} \mathcal{G}_{N,\mathbf{t}}\left( \mathbf{m}_N\in B_{N,\delta, n}^c \vert \mathbf{m}_N \in A_i\right)  \mathcal{G}_{N,\mathbf{t}}\left( \mathbf{m}_N\in A_i\right)  \cr
\leq & \exp\bigg\{\frac{1}{2}N^{2\delta}\max_{1\leq i \leq n}\lambda^i_{N,\mathbf{t}}\bigg\}\mathcal{O}\left( N^{\frac{3K}{2}}\right)  \sum_{1\leq i\leq n}   \mathcal{G}_{N,\mathbf{t}}\left( \mathbf{m}_N\in A_i\right) \cr
=& \exp\bigg\{\frac{1}{2}N^{2\delta}\max_{1\leq i \leq n}\lambda^i_{N,\mathbf{t}}\bigg\}\mathcal{O}\left( N^{\frac{3K}{2}}\right) .
\end{split}
\end{equation}
This complete the result in \eqref{conc3} following from \eqref{result conc3}.

The proof for the asymptotic expansion of the partition function when there are multiple vectors of global maximizers of $f_{N,\mathbf{t}}$ follows exactly the same argument for the case with a unique vector of global maximizer when conditioned on an interval containing only one of the global maximizers. 
Notice that for fixed $i\leq n$ and $N$ large,  $\mathbf{m}_N$ concentrates around ${\boldsymbol\mu}_{N,\mathbf{t}}^i \in A_i$ as it is stated in equation \eqref{conc2}. Hence,
\begin{equation}
    \mathcal{G}_{N,\mathbf{t}}(\mathbf{m}_N\in B_{N,\delta}({\boldsymbol\mu}_{N,\mathbf{t}}^i) |\mathbf{m}_N \in A_i) = \dfrac{1}{Z_{N,\mathbf{t}}\big|_{A_i}} \sum_{\mathbf{x}\in S_N\cap B_{N,\delta}} \prod_{l=1}^K 
\binom{N_l}{\frac{N_l(1+x_l)}{2}} \exp\big\{-H_{N,\mathbf{t}}(\mathbf{x})\big\}.
\end{equation}
Now, following the exact computation and argument in the uniqueness case, the restricted partition function for each of the global maximizers $x_i$ can be expanded as
\begin{equation}\label{z mi}
    Z_{N,\mathbf{t}}\big|_{A_i}= \dfrac{e^{Nf_{N,\mathbf{t}}(\boldsymbol\mu_{N,\mathbf{t}}^i)}}{\sqrt{\det{\left(-\mathcal{H}_{{f}_{N,\mathbf{t}}} (\boldsymbol\mu_{N,\mathbf{t}}^i)\right)} \prod_{l=1}^K(1-((\mu_{N,\mathbf{t}}^{(l)})^{i})^2)}}\cdot \left(  1+\mathcal{O}\left( N^{-1/2+(K+2)\delta} \right)  \right) .
\end{equation}
This completes the proof of Lemma \ref{lemma ZN many max}.
\end{proof}

The conditional moment generating function for some parameter $\mathbf{t}\in \mathbb{R}^K$ can be computed as:
\begin{multline}\label{mgf 2}
\mathbb{E}\bigg[e^{\sqrt{N}( \mathbf{t},\sqrt{\boldsymbol{\alpha}_N} (\mathbf{m}_N-\boldsymbol\mu ^i))}\Big|\big\{\mathbf{m}_N \in A_i\big\} \bigg] = e^{-\sqrt{N} \left( \mathbf{t}, \sqrt{\boldsymbol\alpha_N} \boldsymbol\mu ^i\right)}\int_{\mathbb{R}^K} e^{-\sqrt{N} \left( \mathbf{t}, \sqrt{\boldsymbol\alpha_N} \mathbf{m}_N\right)} \mathcal{G}_{N}\left(\mathbf{m}_N\right) |_{\left\{\mathbf{m}_N \in A_i\right\}}d\mathbf{m}_N \\
= e^{-\sqrt{N} \left( \mathbf{t}, \sqrt{\boldsymbol\alpha_N} \boldsymbol\mu ^i\right)}\frac{Z_{{N,\mathbf{t}}|A_i}}{Z_{N|A_i}}\,.
\end{multline} 

Using the asymptotic expansion of the perturbed partition function in \eqref{z mi}, the proof follows the same arguments as in Theorem \ref{uniqness CLT}.
Hence,

\begin{equation}\label{mgf partition 2}
 \frac{Z_{{N,\mathbf{t}}|A_i}}{Z_{N|A_1}} \sim  \exp{\left( N[f_{N}(\boldsymbol\mu ^i_{N,\mathbf{t}})  - f_{N}(\boldsymbol\mu ^i_{N})] + \sqrt{N} \left( \mathbf{t}, \sqrt{\boldsymbol\alpha_N} \boldsymbol\mu ^i_{N,\mathbf{t}}\right) \right) }
\end{equation}
where $\boldsymbol\mu ^i_{N,\mathbf{t}} \in int(A_i)$ is the unique maximizer of $f_{N,\mathbf{t}}$ and $\boldsymbol\mu ^i_{N}\in int(A_i)$ is the unique maximizer of $f_{N}$. Now following the argument of Lemma \ref{generalized expansion pt} and the proof of Theorem \ref{uniqness CLT}, we have that:
\begin{multline}\label{clt limit expectation non uniq}
\mathbb{E}\Big[e^{\sqrt{N}( \mathbf{t},\sqrt{\boldsymbol{\alpha}_N} (\mathbf{m}_N-\boldsymbol\mu^i))}\Big|\big\{\mathbf{m}_N \in A_i\big\} \Big] \sim  e^{-\sqrt{N} \left( \mathbf{t}, \sqrt{\boldsymbol\alpha_N} \boldsymbol\mu^i\right)} \cdot e^{\sqrt{N} \left( \mathbf{t}, \sqrt{\boldsymbol\alpha_N} \boldsymbol\mu_{N}^i\right)} \cdot e^{- \frac{1}{2}\left( \mathbf{t}, \sqrt{\boldsymbol\alpha_{N}}\mathcal{H}^{-1}_{f_N}(\boldsymbol\mu_N^i) \sqrt{\boldsymbol\alpha_{N}} \mathbf{t}  \right) }\\
= e^{\sqrt{N}\Big[\mathbf{t} \sqrt{\boldsymbol\alpha_N} \left( \boldsymbol\mu_N ^i - \boldsymbol\mu ^i \right) . \Big]} \cdot e^{- \frac{1}{2}\left( \mathbf{t}, \sqrt{\boldsymbol\alpha_{N}}\mathcal{H}^{-1}_{f_N}(\boldsymbol\mu_N^i) \sqrt{\boldsymbol\alpha_{N}} \mathbf{t}  \right) }.
\end{multline}

Similarly, if we set $\boldsymbol{\alpha}_N\equiv\boldsymbol{\alpha}(\boldsymbol{\beta})= \boldsymbol{\alpha}+ N^{-\theta}\text{diag}(\boldsymbol{\beta})$, it follows from \eqref{taylor mu_n} that:
\begin{equation}\label{taylor mu_nl nonuniq}
\boldsymbol{\mu}_N^i - \boldsymbol{\mu}^i = -N^{-\theta}\,    \mathcal{H}_{f}^{-1}(\boldsymbol\mu ^i) \,\boldsymbol\alpha \, \mathbf{J}\, \text{diag}(\boldsymbol{\mu}^i)\boldsymbol{\beta} +  \mathcal{O}\left( {N^{-2\theta}} \right) 
\end{equation}
for $\boldsymbol{\beta}=(\beta_p)_{p\leq K}$  where  $0<\beta_p<\infty$ and $\theta \in \left[\frac{1}{2}, \infty\right) $. Hence, Theorem \ref{nonuniqness CLT} follows from \eqref{taylor mu_nl nonuniq} by taking the limit as $N\to \infty$ of \eqref{clt limit expectation non uniq}. This completes the proof of Theorem \ref{nonuniqness CLT}.

\section{Conclusion}
In this work, we introduced a generalized version of the multispecies Curie-Weiss model featuring arbitrary spins and a non-definite interaction matrix. Specifically, we computed the pressure per-particle  and established the validity of CLT for a suitably rescaled vector of global magnetization. Our results demonstrated that the rescaled vector of global magnetization follows either a centered or non-centered multivariate normal distribution, depending on the speed of convergence of the relative densities of particles to their limiting values as the system size approaches infinity.

This generalized framework holds significant promise for applications in systems where spin particles or units can exhibit arbitrary states, moving beyond traditional binary or discrete-valued models. Such flexibility is particularly relevant in fields like statistical physics, network theory, and social dynamics, where elements often have a continuum of possible states or operate under complex interactions. Furthermore, the inclusion of an indefinite interaction matrix enables the modeling of systems with both cooperative and antagonistic interactions, broadening the scope of potential applications to include meta-magnets, financial systems, and multi-agent interactions.

Future work could explore extending these results to models with more complex topologies or additional constraints, such as time-varying interaction matrices or non-equilibrium dynamics \cite{Andreis_Tovazzi_2018, Ayi_2021, Giacomin_Poquet_2015}. Additionally, investigating the robustness of the CLT under perturbations in the interaction matrix or in the distribution of spin states could yield further insights into the stability and universality of the model's behavior and broaden its applicability \cite{Collet_Formentin_Tovazzi_2016}.

\section*{Acknowledgment}
The authors thank Pierluigi Contucci for inspiring this collaboration. E. M. was supported by the EU H2020 ICT48 project Humane AI Net contract number 952026; by the Italian Extended Partnership PE01 - FAIR Future Artificial Intelligence Research - Proposal code PE00000013 under the MUR National Recovery and Resilience Plan. E. M. and G. O. were supported by the grant for the project PRIN22CONTUCCI, 2022B5LF52 “Boltzmann Machines beyond the “Independent Identically Distributed” Paradigm: a Mathematical Physics Approach”, CUP J53D23003690006, under the MUR National Recovery and Resilience Plan.
\appendix
\section{Technical tools}

\begin{lemma}\label{maximizer mu}
The maximizer $\boldsymbol{\mu}$ of $f(\mathbf{x})$ belongs to the interior of  $[-1,1]^K$. 
\end{lemma}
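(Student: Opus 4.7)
The plan is to exploit the infinite slope of the binary entropy $I$ at the endpoints $\pm 1$, which makes the ``entropy'' contribution $-(\hat{\boldsymbol{\alpha}}, I(\mathbf{x}))$ dominate the smooth quadratic and linear parts of $f$ near the boundary of $[-1,1]^K$.

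First I would note that, under the convention $0\log 0 = 0$, the function $I$ in \eqref{b entropy} is continuous on $[-1,1]$ with $I(\pm 1)=0$, so $f$ extends continuously to the compact cube $[-1,1]^K$ and therefore attains its maximum. The remaining issue is only to exclude boundary maximizers.

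Next I would compute
\begin{equation*}
\frac{\partial f}{\partial x_l}(\mathbf{x}) \;=\; (\Delta \mathbf{x})_l + \tilde h_l - \alpha_l\, I'(x_l), \qquad I'(x)=\tfrac{1}{2}\log\tfrac{1+x}{1-x}=\arctanh(x),
\end{equation*}
and observe that the first two terms remain bounded on $[-1,1]^K$ while $\alpha_l I'(x_l)\to \pm\infty$ as $x_l\to \pm 1$ (recall $\alpha_l>0$). So the gradient of $f$ is dominated by the entropy term near the boundary and points strictly inward there.

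Then I would argue by contradiction. Suppose some maximizer $\boldsymbol{\mu}$ has, say, $\mu_l=1$ for some $l$. For small $\varepsilon>0$ set $\mathbf{x}_\varepsilon = \boldsymbol{\mu} - \varepsilon\mathbf{e}_l$ and compare $f(\mathbf{x}_\varepsilon)$ with $f(\boldsymbol{\mu})$. The smooth (quadratic plus linear) part of $f$ changes by $\mathcal{O}(\varepsilon)$, whereas the entropy contribution changes by
\begin{equation*}
-\alpha_l\bigl[I(1-\varepsilon)-I(1)\bigr] \;=\; -\alpha_l\Bigl(\tfrac{\varepsilon}{2}\log\tfrac{\varepsilon}{2}+\mathcal{O}(\varepsilon)\Bigr) \;=\; \tfrac{\alpha_l}{2}\,\varepsilon\,|\log\varepsilon|\bigl(1+o(1)\bigr),
\end{equation*}
which is strictly positive of order $\varepsilon|\log\varepsilon|\gg \varepsilon$. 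Hence $f(\mathbf{x}_\varepsilon)>f(\boldsymbol{\mu})$ for $\varepsilon$ small enough, contradicting maximality. The case $\mu_l=-1$ is handled symmetrically by perturbing $\boldsymbol{\mu}$ in the direction $+\mathbf{e}_l$.

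The argument has no genuine obstacle: the only point that requires a bit of care is quantifying how the unbounded entropic derivative dominates the bounded smooth derivative, which is exactly the comparison of orders $\varepsilon|\log\varepsilon|$ versus $\varepsilon$ above. This both yields the contradiction and shows that $\boldsymbol{\mu}\in\mathrm{int}\,[-1,1]^K$.
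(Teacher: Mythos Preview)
Your proof is correct and follows essentially the same approach as the paper: both exploit the divergence of $I'(x)=\arctanh(x)$ at $x=\pm 1$ to show that near any boundary face the entropy term dominates the bounded quadratic/linear part, so moving slightly inward increases $f$. The paper phrases this via the one-variable restriction $x_l\mapsto f(\mu_1,\dots,x_l,\dots,\mu_K)$ being strictly increasing near $-1$ and strictly decreasing near $1$, whereas you make the same point through the explicit $\varepsilon|\log\varepsilon|$ versus $\varepsilon$ comparison; the content is the same.
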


\begin{proof}
Observe that the function  $f(\mathbf{x})$ 
satisfies
\begin{equation}\label{derivative_VP}
\frac{\partial f}{\partial x_l} = \sum_{p=1}^K \Delta_{l,p} x_p + \tilde{h}_l - \alpha_l \left(  \frac{1}{2}\log\left( \frac{1+x_l}{1-x_l}\right)  \right)  \quad \text{for} \quad l=1,...,K.    
\end{equation} 

Suppose that by contradiction $\mu_l^2=1$ for some $l\leq K$. Then if we consider the function $f(x_l)=f(\mathbf{x})\Big|_{x_j=\mu_j,j\neq l}$, it follows that $f(x_l)$ attains it maximum in at least one point $x_l \in (-1,1)$ which satisfy \eqref{Kgroup MFE}. Indeed, from \eqref{derivative_VP}, $\lim_{{x_l} \to -1^+}\frac{\partial }{\partial x_l} f({x_l}) = +\infty$ and $\lim_{{x_l} \to 1^-} \frac{\partial }{\partial x_l} f({x_l}) = -\infty$ . Therefore, there exists $\epsilon > 0$ such that $f(x_l)$ is strictly increasing on $[-1, -1 + \epsilon]$ and strictly decreasing on $[1 - \epsilon, 1]$. Since this is true for all $l=1, \ldots, K$, then the thesis follows.
\end{proof}

\begin{lemma}\label{uniform convergence}
Let $\boldsymbol{\mu}$ be a point on the interior of $[-1,1]^K$ and let $A$ be an open neighborhood of $\boldsymbol\mu$. Let $f:cl(A)\to \mathbb{R}$
and assume that $\boldsymbol{\mu}$  is the  unique global maximum point of $f$   and the Hessian $\mathcal{H}_{f}(\boldsymbol{\mu})$ is negative definite. Let 
$(f_N)$ be a sequence of functions  with bounded partial derivatives up to order 2 converging uniformly to those of $f$. Then for $N$ large enough, $f_N$ has a unique maximizer $\boldsymbol\mu_N\rightarrow \boldsymbol{\mu}$  and  $\mathcal{H}_{f_N}(\boldsymbol\mu_N)\prec 0$.
\end{lemma}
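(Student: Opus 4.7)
The plan is to establish the four claims---existence of maximizers, convergence $\boldsymbol\mu_N\to\boldsymbol\mu$, negative definiteness of the Hessian at $\boldsymbol\mu_N$, and uniqueness---in that order. The main ingredients are: compactness of $\mathrm{cl}(A)$, the uniform convergence of $f_N$ and of its first and second partials to those of $f$, the continuity of $\mathcal{H}_f$, and the strict-maximum property of $\boldsymbol\mu$.

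First, since $\mathrm{cl}(A)$ is compact and $f_N$ is continuous, $f_N$ attains its maximum on $\mathrm{cl}(A)$ at some $\boldsymbol\mu_N$. To show $\boldsymbol\mu_N\to\boldsymbol\mu$, I will extract an arbitrary convergent subsequence $\boldsymbol\mu_{N_k}\to\boldsymbol\mu^\star\in\mathrm{cl}(A)$ and use uniform convergence to pass to the limit in the inequality $f_{N_k}(\boldsymbol\mu_{N_k})\geq f_{N_k}(\boldsymbol\mu)$, obtaining $f(\boldsymbol\mu^\star)\geq f(\boldsymbol\mu)$; uniqueness of the global maximizer of $f$ forces $\boldsymbol\mu^\star=\boldsymbol\mu$, so the full sequence converges.

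Next, since $\mathcal{H}_f$ is continuous at $\boldsymbol\mu$ and negative definite there, there exist $r>0$ and $c>0$ such that $\mathcal{H}_f(\mathbf{x})\preceq -c\,\mathbf{I}$ for all $\mathbf{x}\in\overline{B(\boldsymbol\mu,r)}\subset A$. The uniform convergence of the second partials then gives $\mathcal{H}_{f_N}(\mathbf{x})\preceq -\tfrac{c}{2}\,\mathbf{I}$ on the same ball for $N$ large, so $f_N$ is strictly concave there; together with $\boldsymbol\mu_N\in B(\boldsymbol\mu,r)$ (from the previous step) this yields $\mathcal{H}_{f_N}(\boldsymbol\mu_N)\prec 0$.

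The step I expect to require most care is global uniqueness, since strict concavity only controls behavior in $B(\boldsymbol\mu,r)$. The idea is to exploit the gap
\begin{equation*}
\eta \;:=\; f(\boldsymbol\mu)\;-\;\sup_{\mathbf{x}\in\mathrm{cl}(A)\setminus B(\boldsymbol\mu,r)}f(\mathbf{x})\;>\;0,
\end{equation*}
which is positive and attained because $\boldsymbol\mu$ is the unique maximizer and the complement is compact. Choosing $N$ so large that $\|f_N-f\|_\infty<\eta/3$, any $\mathbf{x}\in\mathrm{cl}(A)\setminus B(\boldsymbol\mu,r)$ satisfies $f_N(\mathbf{x})<f(\boldsymbol\mu)-2\eta/3<f_N(\boldsymbol\mu)-\eta/3\leq f_N(\boldsymbol\mu_N)-\eta/3$. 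Hence every maximizer of $f_N$ lies in $B(\boldsymbol\mu,r)$, where strict concavity (equivalently, $\mathcal{H}_{f_N}\prec 0$) forces uniqueness. This will conclude the proof.
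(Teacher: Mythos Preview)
Your proposal is correct and follows essentially the same approach as the paper: the subsequence-plus-uniform-convergence argument for $\boldsymbol\mu_N\to\boldsymbol\mu$ matches the paper's proof almost verbatim. Your treatment of the Hessian and of global uniqueness (via the uniform negative-definiteness on a ball and the gap argument) is in fact more detailed than the paper's, which dispatches both claims in a single sentence without further justification.
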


\begin{proof}
Suppose that $\{\mathbf{x}_N\}$ is a sequence of  any maximizer of $f_N$ which exists since $cl(A)$ is compact.  Then there exists a subsequence $\{N_n\}_{n\geq 1}$  such that $\{\mathbf{x}_{N_n}\}$ converges to some $\mathbf{y}$. Clearly $f_{N_n}(\mathbf{x}_{N_n}) \geq f_{N_n}({\mathbf{x}})$ for all $\mathbf{x}\in cl(A)$.
Therefore by uniform convergence and taking the limit as $ n\rightarrow \infty$,  we obtain that $f(\mathbf{y})\geq f(\mathbf{x})$ for all $\mathbf{x}\in cl(A)$. This implies that $\mathbf{y}=\boldsymbol{\mu}$ by uniqueness  of the global maximizers of $f$ and therefore, $\mathcal{H}_f(\boldsymbol{\mu})\prec 0$.  Now since $f_N$ converges uniformly to $f$ one has that for $N$ large enough the maximizer $\boldsymbol\mu_N$ of $f_N$ is unique and  $\mathcal{H}_{f_N}(\boldsymbol\mu_N)\prec 0$.\\
\end{proof}

\begin{lemma}\label{sup derivative zeta'}
For $\delta\in (0,1/6]$, the following bound holds:
\begin{equation}\label{sup derivative zeta}
\sup_{\mathbf{x}\in B_{N,\delta}} |\nabla\zeta_{N,\mathbf{t}}(\mathbf{x})| \leq \zeta_{N, \mathbf{t}}(\boldsymbol\mu_{N,\mathbf{t}})\mathcal{O}\left( N^{1/2+\delta}\right)  .
\end{equation}
\end{lemma}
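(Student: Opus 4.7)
\smallskip

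\noindent\textbf{Proof plan for Lemma \ref{sup derivative zeta'}.} The idea is to write $\zeta_{N,\mathbf{t}}$ as a smooth prefactor times $\exp(Nf_{N,\mathbf{t}})$ and then use the identity $\nabla\zeta_{N,\mathbf{t}}=\zeta_{N,\mathbf{t}}\,\nabla\log\zeta_{N,\mathbf{t}}$ to transfer the problem to controlling $\nabla\log\zeta_{N,\mathbf{t}}$ on the shrinking box $B_{N,\delta}$. First, I extend the binomial coefficients to continuous arguments via the Gamma function, and apply Stirling's formula \eqref{stirlings} coordinatewise to write
\begin{equation*}
\zeta_{N,\mathbf{t}}(\mathbf{x}) \;=\; g_N(\mathbf{x})\,\exp\bigl(N f_{N,\mathbf{t}}(\mathbf{x})\bigr),\qquad g_N(\mathbf{x}) \;=\; \prod_{l=1}^K \sqrt{\frac{2}{\pi N_l(1-x_l^2)}}\bigl(1+\mathcal{O}(N^{-1})\bigr),
\end{equation*}
which is $C^1$ on any compact subset of the interior of $[-1,1]^K$, in particular on $B_{N,\delta}$ for $N$ large (since $\boldsymbol\mu_{N,\mathbf{t}}$ lies in the interior by Lemma \ref{maximizer mu} and Lemma \ref{uniform convergence}, and the box radius $N^{-1/2+\delta}$ shrinks).

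Next, I take logarithmic derivatives:
\begin{equation*}
\frac{\partial}{\partial x_l}\log \zeta_{N,\mathbf{t}}(\mathbf{x}) \;=\; \frac{\partial}{\partial x_l}\log g_N(\mathbf{x}) \;+\; N\,\frac{\partial f_{N,\mathbf{t}}}{\partial x_l}(\mathbf{x}).
\end{equation*}
The first summand equals $x_l/(1-x_l^2)$, hence is $\mathcal{O}(1)$ uniformly on $B_{N,\delta}$. For the second summand I exploit the fact that $\boldsymbol\mu_{N,\mathbf{t}}$ is a critical point of $f_{N,\mathbf{t}}$, so a first-order Taylor expansion around $\boldsymbol\mu_{N,\mathbf{t}}$ gives, for any $\mathbf{x}\in B_{N,\delta}$,
\begin{equation*}
\Bigl|\frac{\partial f_{N,\mathbf{t}}}{\partial x_l}(\mathbf{x})\Bigr| \;\leq\; \sup_{\mathbf{y}\in B_{N,\delta}}\|\mathcal{H}_{f_{N,\mathbf{t}}}(\mathbf{y})\|\cdot |\mathbf{x}-\boldsymbol\mu_{N,\mathbf{t}}| \;=\; \mathcal{O}(N^{-1/2+\delta}),
\end{equation*}
since $\mathcal{H}_{f_{N,\mathbf{t}}}$ is uniformly bounded on a neighborhood of $\boldsymbol\mu$. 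Multiplying by $N$ yields $N\,\partial_{x_l}f_{N,\mathbf{t}}(\mathbf{x})=\mathcal{O}(N^{1/2+\delta})$, which dominates the $\mathcal{O}(1)$ prefactor contribution.

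Finally, I bound $\zeta_{N,\mathbf{t}}(\mathbf{x})$ itself on $B_{N,\delta}$ by $\zeta_{N,\mathbf{t}}(\boldsymbol\mu_{N,\mathbf{t}})\cdot \mathcal{O}(1)$: the exponential factor satisfies $f_{N,\mathbf{t}}(\mathbf{x})\leq f_{N,\mathbf{t}}(\boldsymbol\mu_{N,\mathbf{t}})$ by the maximizer property, and the prefactor ratio $g_N(\mathbf{x})/g_N(\boldsymbol\mu_{N,\mathbf{t}})=\prod_l \sqrt{(1-(\mu_{N,\mathbf{t}}^{(l)})^2)/(1-x_l^2)}$ is uniformly bounded on $B_{N,\delta}$ (and tends to $1$). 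Combining this with the gradient estimate through $\nabla\zeta_{N,\mathbf{t}}=\zeta_{N,\mathbf{t}}\cdot\nabla\log\zeta_{N,\mathbf{t}}$ and taking the supremum over $B_{N,\delta}$ gives the claimed bound \eqref{sup derivative zeta}. The main (mild) obstacle is the discrete-to-continuous extension of the binomial at the outset; once Stirling's asymptotics are invoked in their full (non-uniform-in-$x_l$) form, the rest is a clean application of the critical point condition plus a crude maximum bound on $\zeta_{N,\mathbf{t}}$.
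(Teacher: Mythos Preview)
Your proposal is correct and follows essentially the same approach as the paper: both arrive at $\partial_{x_l}\zeta_{N,\mathbf{t}}=\zeta_{N,\mathbf{t}}\bigl(N\,\partial_{x_l}f_{N,\mathbf{t}}+x_l/(1-x_l^2)+\mathcal{O}(N^{-1})\bigr)$, bound $\partial_{x_l}f_{N,\mathbf{t}}=\mathcal{O}(N^{-1/2+\delta})$ via the mean value theorem around the critical point $\boldsymbol\mu_{N,\mathbf{t}}$, and control $\sup_{B_{N,\delta}}\zeta_{N,\mathbf{t}}$ by $\zeta_{N,\mathbf{t}}(\boldsymbol\mu_{N,\mathbf{t}})\,\mathcal{O}(1)$ using the maximizer property and the bounded prefactor ratio. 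The only technical difference is that the paper differentiates the exact Gamma-function expression via the digamma asymptotics rather than first applying Stirling and then differentiating, which cleanly sidesteps the (minor) issue of whether the $\mathcal{O}(N^{-1})$ Stirling remainder has a bounded $x$-derivative.
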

\begin{proof}
The proof is carried in two steps, the first is to compute $\nabla\zeta_{N}(\mathbf{x})$ and the second finds the supremum. 

\textbf{Step 1:} Let's recall from \eqref{partition fnx exp} that, 
\begin{equation}
\zeta_{N,\mathbf{t}}(\mathbf{x}) = \prod_{l=1}^K 
\binom{N_l}{\frac{N_l(1+x_l)}{2}} \exp\big\{-H_{N,\mathbf{t}}(\mathbf{x})\big\} .
\end{equation}
Suppose that $\frac{N_l(1+x_l)}{2}$ is any real number, then the binomial coefficient becomes a continuous binomial coefficient and can be expanded using the arguments of \cite{MSB21,Salwinski}. Now, using gamma functions, we have that for each $l=1,\ldots, K$:
\begin{equation}
A_{N_l}(x_l) =\binom{N_l}{\frac{N_l(1+x_l)}{2}}= \frac{\Gamma\left( N_l+1\right) }{\Gamma\left( \frac{N_l(1+x_l)}{2}+1\right) \Gamma\left( \frac{N_l(1-x_l)}{2}+1\right) }.
\end{equation}
For a given component $l$, differentiating with respect to $x_l$ gives:

\begin{equation}
\frac{\partial A_{N_l}(x_l)}{\partial x_l} = A_{N_l}(x_l) \left(  -\psi\left( \frac{N_l(1+x_l)}{2}+1\right)  \cdot \frac{N_l}{2} + \psi\left( \frac{N_l(1-x_l)}{2}+1\right)  \cdot \frac{N_l}{2} \right) .
\end{equation}

Here, $\psi(z)$ is the digamma function, the derivative of $\log \Gamma(z)$. Now, using the asymptotic expansion of $\psi$  and the properties of $\Gamma$, we have that
\begin{equation}
\begin{split}
\psi\left( \frac{N_l(1+x_l)}{2}+1\right)  =& \log\left( \frac{N_l(1+x_l)}{2}\right)  + \frac{1}{N_l(1+x_l)} + \mathcal{O}\left(  N_l^{-2}\right)  \quad \text{and} \cr
\psi\left( \frac{N_l(1-x_l)}{2}+1\right)  =& \log\left( \frac{N_l(1-x_l)}{2}\right)  + \frac{1}{N_l(1-x_l)} + \mathcal{O}\left(  N_l^{-2}\right) .
\end{split}
\end{equation}

Now, by the product and chain rule, we have that:
\begin{multline}\label{partial zeta}
\frac{\partial \zeta_{N,\mathbf{t}} (\mathbf{x})}{\partial x_l} = \frac{\partial A_{N_l}(x_l)}{\partial x_l}\; 
\prod_{p\neq l} A_{N_p}(x_p)
\cdot \exp{\left(  -H_{N,\mathbf{t}}(\mathbf{x}) \right) } +  A_{N_l}(x_l) \; \cdot \prod_{p\neq l} A_{N_p}(x_p)
\cdot \frac{\partial}{\partial x_l} \exp{\left(  -H_{N,\mathbf{t}}(\mathbf{x}) \right) }\\
= \prod_{p\le K} A_{N_p}(x_p)\frac{N_l}{2}\left(  \log\left( \frac{N_l(1-x_l)}{2}\right)  -\log\left( \frac{N_l(1+x_l)}{2}\right)   + \frac{1}{N_l(1-x_l)} -  \frac{1}{N_l(1+x_l)} + \mathcal{O}\left(  N_l^{-2}\right) \right)  \times\\ 
\times \exp{\left(  -H_{N,\mathbf{t}}(\mathbf{x}) \right) } 
+ \prod_{p\leq K} A_{N_p}(x_p) A_{N_l}(x_l) \; \cdot N\Big\{ \sum_{p=1}^K \Delta_{l,p} x_p + \tilde{h}_l + \frac{t_l \sqrt{N_l}}{N}\Big\} \cdot \exp{\left(  -H_{N,\mathbf{t}}(\mathbf{x}) \right) } \\
= \prod_{p\leq K}A_{N_p}(x_p)  \exp{\left(  -H_{N,\mathbf{t}}(\mathbf{x}) \right) } \left(  -N_l \arctanh{(x_l)} + \frac{x_l}{(1-x_l^2)} 
+ \mathcal{O}\left(  N_l^{-1}\right)  \right)  + \\ 
+ \prod_{p\leq K}A_{N_p}(x_p) \exp{\left(  -H_{N,\mathbf{t}}(\mathbf{x}) \right) } \cdot N\Big\{ \sum_{p=1}^K \Delta_{l,p} x_p + \tilde{h}_l + \frac{t_l \sqrt{N_l}}{N}\Big\}\\
= \underbrace{\prod_{p\leq K}A_{N_p}(x_p) \exp{\left(  -H_{N,\mathbf{t}}(\mathbf{x}) \right) }}_{=\zeta_{N,\mathbf{t}}(\mathbf{x})} \times \\
\times \Bigg[ N \; \underbrace{\Big\{ \sum_{p=1}^K \Delta_{l,p} x_p + \tilde{h}_l - \alpha_{N,l} \arctanh{(x_l)} + \frac{t_l\sqrt{N_l}}{N}\Big\}}_{=\frac{\partial f_{N,\mathbf{t}}}{\partial x_l}} + \frac{x_l}{(1-x_l^2)} + \mathcal{O}\left(  N_l^{-1}\right)  \Bigg]\\
= \zeta_{N,\mathbf{t}}(\mathbf{x}) \left( N \frac{\partial f_{N,\mathbf{t}}(\mathbf{x})}{\partial x_l} + \frac{x_l}{(1-x_l^2)} + \mathcal{O}\left(  N^{-1}\right)   \right) .
\end{multline}

\textbf{Step 2:} 
Observe from the last equality in \eqref{partial zeta} that
\begin{equation}\label{pdef sup zeta}
\sup_{\mathbf{x}\in B_{N,\delta}}|\nabla\zeta_{N}(\mathbf{x})| = \sup_{\mathbf{x}\in B_{N,\delta}} \Big\{ \left|\zeta_{N,\mathbf{t}}(\mathbf{x}) \left( N \frac{\partial f_{N,\mathbf{t}}(\mathbf{x})}{\partial x_l} + \frac{x_l}{(1-x_l^2)} + \mathcal{O}\left(  N^{-1}\right)   \right) \right|\Big\}_{l\leq K}.
\end{equation}
Now, for each $l\in\{1,\ldots,K\}$, by the Mean Value Theorem, there exists a point $\mathbf{c}$ in the line segment connecting $[\mathbf{x},\mu_{N,\mathbf{t}}]$  such that:
\begin{equation}
\begin{split}
\frac{\partial f_{N,{\mathbf{t}}}(\mathbf{x})}{\partial x_l} =& \frac{\partial f_{N,{\mathbf{t}}}(\mathbf{\mu}_{N,\mathbf{t}})}{\partial x_l} + \sum_{p=1}^K\frac{\partial^2 f_{N,{\mathbf{t}}}(\mathbf{c})}{\partial x_l\partial x_p} (x_p - \mu_{N,\mathbf{t},p})\cr
=& \sum_{p=1}^K\frac{\partial^2 f_{N,{\mathbf{t}}}(\mathbf{c})}{\partial x_l\partial x_p} (x_p - \mu_{N,\mathbf{t},p}).
\end{split}
\end{equation}
Hence,
\begin{equation}
   \sup_{\mathbf{x} \in B_{N,\delta}} \left|\frac{\partial f_{N,{\mathbf{t}}}(\mathbf{x})}{\partial x_l}\right| \leq \sup_{\mathbf{c} \in B_{N,\delta}} \sum_{p=1}^K\left|\frac{\partial^2 f_{N,{\mathbf{t}}}(\mathbf{c})}{\partial x_l\partial x_p}\right| N^{-\frac{1}{2}+\delta} .
\end{equation}
Therefore, 
\begin{equation}
\sup_{\mathbf{x} \in B_{N,\delta}} |\nabla f_{N,{\mathbf{t}}}(\mathbf{x})| = \mathcal{O}\left(  N^{-1/2+\delta} \right)     
\end{equation}
and using \eqref{stirlings} we have that 
\begin{equation}\label{sup zeta}
\sup_{\mathbf{x}\in B_{N,\delta}}\zeta_{N,\mathbf{t}}(\mathbf{x}) \leq (1+\mathcal{O}(N^{-1}))\zeta_{N,\mathbf{t}}(\boldsymbol\mu_{N,\mathbf{t}}) \sup_{\mathbf{x}\in B_{N,\delta}} \sqrt{\frac{1-\boldsymbol\mu_{N,\mathbf{t}}^2}{1-\mathbf{x}^2}} = \zeta_{N,\mathbf{t}}(\boldsymbol\mu_{N,\mathbf{t}}) \mathcal{O}(1).
\end{equation}
This implies that,
\begin{equation}
\sup_{\mathbf{x}\in B_{N,\delta}}\left|\frac{\partial \zeta_{N,\mathbf{t}} (\mathbf{x})}{\partial x_l} \right| \leq \zeta_{N,\mathbf{t}}(\boldsymbol\mu_{N,\mathbf{t}}) \mathcal{O}\left(  N^{1/2+\delta}\right)  .
\end{equation}
Hence,
\begin{multline}
\sup_{\mathbf{x}\in B_{N,\delta}}|\nabla \zeta_{N}(\mathbf{x})| \leq \max_{l \in \{1, \ldots, K\}} \left\{ \sup_{\mathbf{x}\in B_{N,\delta}}\zeta_{N,\mathbf{t}}(\mathbf{x}) \; \cdot N \cdot \sup_{\mathbf{c} \in B_{N,\delta}} \sum_{p=1}^K\left|\frac{\partial^2 f_{N,{\mathbf{t}}}(\mathbf{c})}{\partial x_l\partial x_p}\right| N^{-\frac{1}{2}+\delta} \right\}\\
= \zeta_{N,\mathbf{t}}(\boldsymbol\mu_{N,\mathbf{t}}) \mathcal{O}\left(  N^{1/2+\delta}\right) .
\end{multline}
\end{proof}

\begin{lemma}\label{generalized expansion pt}
Let $\Omega$ be a bounded  open set in $\mathbb{R}^K$. Let $f_N : \Omega \to \mathbb{R}$ be a sequence of functions such that for $N$ large enough it has a unique global maximum point $\boldsymbol\mu_{N}\in\Omega$ and $\mathcal{H}_{f_{N}}(\boldsymbol\mu_{N})\prec 0$ with bounded partial derivatives up to order 3. For any $\mathbf{t} \in \mathbb{R}^K$, consider the  function
\begin{equation}\label{gNt}
g_{N}(\mathbf{t},\mathbf{x}) = f_N (\mathbf{x}) + \frac{1}{\sqrt{N}} ( \mathbf{t}, \sqrt{\boldsymbol\alpha_N} \mathbf{x}).
\end{equation}
Then for sufficiently large $N$, the function $g_N(\mathbf{t},\mathbf{x})$ also has a unique global maximizer  with $\mathcal{H}_{g_N}(\boldsymbol\mu_{N,\mathbf{t}})\prec 0$ and $\boldsymbol\mu_{N,\mathbf{t}}\to \boldsymbol\mu$ as $N \to \infty$, the following expansion holds: 
\begin{equation}\label{asymptoticgenne}
g_N(\mathbf{t},\boldsymbol\mu_{N,\mathbf{t}}) - f_N(\boldsymbol\mu_{N}) = -\frac{1}{2N} ( \mathbf{t}, \sqrt{\boldsymbol\alpha_{N}}\mathcal{H}^{-1}_{f_N}(\boldsymbol\mu_N) \sqrt{\boldsymbol\alpha_{N}} \mathbf{t}  ) + \frac{1}{\sqrt{N}} (\mathbf{t}, \sqrt{\boldsymbol\alpha_N} \boldsymbol\mu_N) + \mathcal{O}\left( N^{-3/2}\right)  .
\end{equation}
\end{lemma}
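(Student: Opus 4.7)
The plan is to realize $\boldsymbol\mu_{N,\mathbf{t}}$ as a small perturbation of $\boldsymbol\mu_N$ caused by the linear term $\tfrac{1}{\sqrt N}(\mathbf{t},\sqrt{\boldsymbol\alpha_N}\mathbf{x})$, and then to Taylor-expand $g_N(\mathbf{t},\cdot)$ around $\boldsymbol\mu_N$ up to second order. Existence, uniqueness, and non-degeneracy of $\boldsymbol\mu_{N,\mathbf{t}}$ follow from Lemma \ref{uniform convergence} applied to the pair $(g_N(\mathbf{t},\cdot),f_N)$: the perturbation $g_N-f_N=\tfrac{1}{\sqrt N}(\mathbf{t},\sqrt{\boldsymbol\alpha_N}\mathbf{x})$ and all its derivatives vanish uniformly on $\Omega$ as $N\to\infty$, so for $N$ large $g_N(\mathbf{t},\cdot)$ inherits from $f_N$ a unique maximizer $\boldsymbol\mu_{N,\mathbf{t}}\to\boldsymbol\mu$ with $\mathcal{H}_{g_N}(\boldsymbol\mu_{N,\mathbf{t}})=\mathcal{H}_{f_N}(\boldsymbol\mu_{N,\mathbf{t}})\prec 0$. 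In particular $\mathcal{H}^{-1}_{f_N}(\boldsymbol\mu_N)\to\mathcal{H}^{-1}_{f}(\boldsymbol\mu)$, so its operator norm is $O(1)$ uniformly in $N$.

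The next step is to quantify the displacement $\boldsymbol\mu_{N,\mathbf{t}}-\boldsymbol\mu_N$ using the stationarity equation $\nabla f_N(\boldsymbol\mu_{N,\mathbf{t}})+\tfrac{1}{\sqrt N}\sqrt{\boldsymbol\alpha_N}\mathbf{t}=0$. Expanding $\nabla f_N$ to first order around $\boldsymbol\mu_N$ (where the gradient vanishes) with a Lagrange-type remainder controlled by the bounded third derivatives of $f_N$ gives
\begin{equation*}
\mathcal{H}_{f_N}(\boldsymbol\mu_N)\,(\boldsymbol\mu_{N,\mathbf{t}}-\boldsymbol\mu_N)=-\tfrac{1}{\sqrt N}\sqrt{\boldsymbol\alpha_N}\mathbf{t}+O\!\left(|\boldsymbol\mu_{N,\mathbf{t}}-\boldsymbol\mu_N|^{2}\right).
\end{equation*}
Inverting the Hessian and bootstrapping the remainder (which gives $|\boldsymbol\mu_{N,\mathbf{t}}-\boldsymbol\mu_N|=O(N^{-1/2})$ at first pass, hence $O(N^{-1})$ for the square) yields
\begin{equation*}
\boldsymbol\mu_{N,\mathbf{t}}-\boldsymbol\mu_N=-\tfrac{1}{\sqrt N}\,\mathcal{H}^{-1}_{f_N}(\boldsymbol\mu_N)\sqrt{\boldsymbol\alpha_N}\,\mathbf{t}+O(N^{-1}).
\end{equation*}

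Finally, write $g_N(\mathbf{t},\boldsymbol\mu_{N,\mathbf{t}})-f_N(\boldsymbol\mu_N)=\bigl[f_N(\boldsymbol\mu_{N,\mathbf{t}})-f_N(\boldsymbol\mu_N)\bigr]+\tfrac{1}{\sqrt N}(\mathbf{t},\sqrt{\boldsymbol\alpha_N}\boldsymbol\mu_{N,\mathbf{t}})$. Setting $v=\boldsymbol\mu_{N,\mathbf{t}}-\boldsymbol\mu_N$, a second-order Taylor expansion of $f_N$ at $\boldsymbol\mu_N$ with cubic Lagrange remainder gives
\begin{equation*}
f_N(\boldsymbol\mu_{N,\mathbf{t}})-f_N(\boldsymbol\mu_N)=\tfrac{1}{2}\bigl(v,\mathcal{H}_{f_N}(\boldsymbol\mu_N)v\bigr)+O(|v|^{3})=\tfrac{1}{2N}\bigl(\mathbf{t},\sqrt{\boldsymbol\alpha_N}\mathcal{H}^{-1}_{f_N}(\boldsymbol\mu_N)\sqrt{\boldsymbol\alpha_N}\mathbf{t}\bigr)+O(N^{-3/2}),
\end{equation*}
while substituting the displacement formula in the linear term yields
\begin{equation*}
\tfrac{1}{\sqrt N}(\mathbf{t},\sqrt{\boldsymbol\alpha_N}\boldsymbol\mu_{N,\mathbf{t}})=\tfrac{1}{\sqrt N}(\mathbf{t},\sqrt{\boldsymbol\alpha_N}\boldsymbol\mu_N)-\tfrac{1}{N}\bigl(\mathbf{t},\sqrt{\boldsymbol\alpha_N}\mathcal{H}^{-1}_{f_N}(\boldsymbol\mu_N)\sqrt{\boldsymbol\alpha_N}\mathbf{t}\bigr)+O(N^{-3/2}).
\end{equation*}
The two quadratic contributions combine as $\tfrac{1}{2N}-\tfrac{1}{N}=-\tfrac{1}{2N}$, producing exactly \eqref{asymptoticgenne}.

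The only real obstacle is bookkeeping: one must verify that both Lagrange remainders (in the expansion of $\nabla f_N$ and of $f_N$) are uniform in $N$, which is precisely why the hypothesis of uniformly bounded partial derivatives up to order $3$ is required, and that $\mathcal{H}^{-1}_{f_N}(\boldsymbol\mu_N)$ remains $O(1)$, which is guaranteed by the convergence $\mathcal{H}_{f_N}(\boldsymbol\mu_N)\to\mathcal{H}_f(\boldsymbol\mu)\prec 0$ established in the first step. No subtler analytic input is needed.
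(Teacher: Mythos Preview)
Your proposal is correct and follows essentially the same route as the paper: decompose $g_N(\mathbf{t},\boldsymbol\mu_{N,\mathbf{t}})-f_N(\boldsymbol\mu_N)$ into $\Phi_N=f_N(\boldsymbol\mu_{N,\mathbf{t}})-f_N(\boldsymbol\mu_N)$ plus the linear term, obtain the displacement formula $\boldsymbol\mu_{N,\mathbf{t}}-\boldsymbol\mu_N=-\tfrac{1}{\sqrt N}\mathcal{H}_{f_N}^{-1}(\boldsymbol\mu_N)\sqrt{\boldsymbol\alpha_N}\mathbf{t}+O(N^{-1})$, substitute into both pieces, and combine $\tfrac{1}{2N}-\tfrac{1}{N}=-\tfrac{1}{2N}$. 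The only cosmetic difference is that the paper derives the displacement by implicit differentiation of the stationarity condition in $\mathbf{t}$ (computing $\nabla_{\mathbf{t}}\boldsymbol\mu_{N,\mathbf{t}}$ and then Taylor-expanding $\boldsymbol\mu_{N,\mathbf{t}}$ around $\mathbf{t}=\mathbf{0}$), whereas you Taylor-expand $\nabla f_N$ in $\mathbf{x}$ around $\boldsymbol\mu_N$ and bootstrap; the two computations are equivalent and yield the same formula with the same error.
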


\begin{proof}
In order to prove \eqref{asymptoticgenne} let's start with
\begin{equation}
\nabla_{\mathbf{x}} g_N (\mathbf{t},\mathbf{x}) = \nabla_{\mathbf{x}} f_N (\mathbf{x}) + \frac{1}{\sqrt{N}} \sqrt{\boldsymbol\alpha_N} \mathbf{t}.
\end{equation}
Now, since $\boldsymbol\mu_{N,\mathbf{t}}$ is a  maximizer of $g_N (\mathbf{t},\mathbf{x})$ then
\begin{align}
\nabla_{\mathbf{x}} g_N (\mathbf{t},\mathbf{x})\big|_{\mathbf{x} = \boldsymbol\mu_{N,\mathbf{t}}} = \mathbf{0} = \nabla_{\mathbf{x}} f_N (\mathbf{\boldsymbol\mu_{N,\mathbf{t}}}) + \frac{1}{\sqrt{N}} \sqrt{\boldsymbol\alpha_N} \mathbf{t}.
\end{align} 
Now,  we take the gradient on both sides obtaining: 
\begin{equation}\label{derv mu_N}
\nabla_{\mathbf{t}}\boldsymbol\mu_{N,\mathbf{t}} = -\frac{1}{\sqrt{N}} \mathcal{H}_{f_N}^{-1} (\boldsymbol\mu_{N,\mathbf{t}}) \sqrt{\boldsymbol\alpha_N} .
\end{equation}

Let's notice from equation \eqref{gNt}, and using the fact that $\boldsymbol\mu_{N,\mathbf{t}}$ and $\boldsymbol\mu_N$ are the global maximizers of $g_N (\mathbf{t},\mathbf{x})$ and $f_N(\mathbf{x})$ respectively, we have that:
\begin{equation}
g_{N}(\mathbf{t},\boldsymbol\mu_{N,\mathbf{t}}) - f_N(\boldsymbol\mu_N) =  \underbrace{f_N(\boldsymbol\mu_{N,\mathbf{t}}) - f_N(\boldsymbol\mu_N)}_{=\Phi_N} + \frac{1}{\sqrt{N}} (\mathbf{t}, \sqrt{\boldsymbol\alpha_N} \boldsymbol\mu_{N,\mathbf{t}})  .
\end{equation}
By an application of Taylor's expansion of $f_N(\boldsymbol\mu_{N,\mathbf{t}})$ around $\boldsymbol\mu_{N}$:
\begin{align}\label{Phi N}
\Phi_N = \frac{1}{2} \left( \left( \boldsymbol\mu_{N,\mathbf{t}} - \boldsymbol\mu_{N} \right) , \mathcal{H}_{f_N}(\boldsymbol\mu_N) \left( \boldsymbol\mu_{N,\mathbf{t}} - \boldsymbol\mu_{N} \right)  \right) + \mathcal{O}(N^{-3/2}).
\end{align}
Now to compute $\Phi_N $, we need $\boldsymbol\mu_{N,\mathbf{t}} - \boldsymbol\mu_{N}$. To begin with, let's first observe from equation \eqref{gNt} that, when $\mathbf{t}=\mathbf{0}$: $g_{N}(\mathbf{0},\mathbf{x}) = f_N (\mathbf{x})$ and hence  $\boldsymbol\mu_{N,\mathbf{0}} = \boldsymbol\mu_{N}$. Therefore taking Taylor's expansion of  $\boldsymbol\mu_{N,\mathbf{t}}$ around $\mathbf{t}=\mathbf{0}$ and using \eqref{derv mu_N}:
\begin{equation}
\begin{split}
\boldsymbol\mu_{N,\mathbf{t}} - \boldsymbol\mu_{N,\mathbf{0}} =& \nabla_{\mathbf{t}}\boldsymbol\mu_{N,\mathbf{0}}  \mathbf{t} + \mathcal{O}\left( \frac{1}{N}\right)  \cr
=& -\frac{1}{\sqrt{N}} \mathcal{H}_{f_N}^{-1} (\boldsymbol\mu_N) \sqrt{\boldsymbol\alpha_N}  \mathbf{t} + \mathcal{O}\left( \frac{1}{N}\right)  .
\end{split}
\end{equation}
It now follows from \eqref{Phi N} that,

\begin{equation}
\begin{split}
\Phi_N =& \frac{1}{2N} \left( \mathcal{H}_{f_N}^{-1} (\boldsymbol\mu_N) \sqrt{\boldsymbol\alpha_N}  \mathbf{t},  \mathcal{H}_{f_N}(\boldsymbol\mu_N)\mathcal{H}_{f_N}^{-1} (\boldsymbol\mu_N) \sqrt{\boldsymbol\alpha_N}  \mathbf{t}  \right) + \mathcal{O}(N^{-3/2}) \cr
=& \frac{1}{2N} \left( \mathbf{t},  \sqrt{\boldsymbol\alpha_N}\mathcal{H}_{f_N}^{-1} (\boldsymbol\mu_N) \sqrt{\boldsymbol\alpha_N}  \mathbf{t}  \right) + \mathcal{O}(N^{-3/2})
\end{split}
\end{equation}
and therefore,
\begin{equation}
\begin{split}
g_{N}(\mathbf{t},\boldsymbol\mu_{N,\mathbf{t}}) - f_N(\boldsymbol\mu_N) =& \frac{1}{2N} \left( \mathbf{t},  \sqrt{\boldsymbol\alpha_N}\mathcal{H}_{f_N}^{-1} (\boldsymbol\mu_N) \sqrt{\boldsymbol\alpha_N}  \mathbf{t}  \right)  \cr
&+ \frac{1}{\sqrt{N}} \left(\mathbf{t}, \sqrt{\boldsymbol\alpha_N} \left( \boldsymbol\mu_N -\frac{1}{\sqrt{N}} \mathcal{H}_{f_N}^{-1} (\boldsymbol\mu_N) \sqrt{\boldsymbol\alpha_N}  \mathbf{t}\right)  \right) + \mathcal{O}(N^{-3/2}) \cr
=& \frac{1}{2N} \left( \mathbf{t},  \sqrt{\boldsymbol\alpha_N}\mathcal{H}_{f_N}^{-1} (\boldsymbol\mu_N) \sqrt{\boldsymbol\alpha_N}  \mathbf{t}  \right) + \frac{1}{\sqrt{N}} \left( \mathbf{t}, \sqrt{\boldsymbol\alpha_N} \boldsymbol\mu_N\right) \cr
&- \frac{1}{N} \left( \mathbf{t},  \sqrt{\boldsymbol\alpha_N}\mathcal{H}_{f_N}^{-1} (\boldsymbol\mu_N) \sqrt{\boldsymbol\alpha_N}  \mathbf{t}  \right) + \mathcal{O}(N^{-3/2}) \cr
=& -\frac{1}{2N} \left( \mathbf{t}, \sqrt{\boldsymbol\alpha_N}\mathcal{H}_{f_N}^{-1} (\boldsymbol\mu_N) \sqrt{\boldsymbol\alpha_N}  \mathbf{t}  \right) + \frac{1}{\sqrt{N}} \left( \mathbf{t}, \sqrt{\boldsymbol\alpha_N} \boldsymbol\mu_N\right) + \mathcal{O}(N^{-3/2}) .
\end{split}
\end{equation}
This completes the proof.

\end{proof}

\section{Approximation lemmas}
In this section, standard mathematical approximations which played a crucial role in the asymptotic expansion of the partition function is given.

\begin{lemma}[Multidimensional Riemann Approximation]\label{appendix riemann}
Let $Q = [a_1, b_1] \times [a_2, b_2] \times \ldots \times [a_K, b_K]$ be a rectangular domain in $\mathbb{R}^K$, and let $P=\{(x_{1,0}, \ldots, x_{K,0}), (x_{1,1}, \ldots, x_{K,1}), \ldots,$ $ (x_{1,n},  \ldots, x_{K,n})\}$ be any partition of $Q$, where $a_i = x_{i,0} < x_{i,1} < \ldots < x_{i,n} = b_i$ for each $i = 1, 2, \ldots, K$. Assume that $g$ has continuous partial derivatives $\frac{\partial g}{\partial x_i}$ on $Q$ for all $i = 1, 2, \ldots, K$. Let $\epsilon_i = \max_{1 \leq j \leq n} (x_{i,j} - x_{i,j-1})$ denote the mesh size of the partition along the $i$-th variable. Then:
\begin{multline}  \label{eq:approximation_Riemann_sum}
\left| \int_Q g(\mathbf{x}) \,d\mathbf{x} - \sum_{j_1,j_2,\dots,j_K=1}^{n}  g(\mathbf{c}_{j_1, j_2, \ldots, j_K}) \cdot \prod_{i=1}^{K} (x_{i,j_i} - x_{i,j_i-1}) \right| \leq \\
Kn^{K-1} \max_{i\leq K}(b^{(i)}-a^{(i)})\max_{\boldsymbol{\xi}\in Q}\|\nabla g(\boldsymbol{\xi})\|\prod_{i=1}^K\epsilon_i 
\end{multline}
where $\mathbf{c}_{j_1, j_2, \ldots, j_r}$ is any point in the $j_1$-th subinterval along the first variable, $j_2$-th subinterval along the second variable, and so on, up to the $j_K$-th subinterval along the $K$-th variable.
\end{lemma}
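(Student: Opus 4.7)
The plan is to partition $Q$ into the subrectangles induced by $P$, apply the multivariate mean value theorem on each piece, and sum the contributions; this is the standard route to multidimensional Riemann-sum estimates.

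First I would decompose $Q$ into the subrectangles $R_\mathbf{j} = \prod_{i=1}^K [x_{i, j_i-1}, x_{i, j_i}]$ indexed by $\mathbf{j} = (j_1, \ldots, j_K) \in \{1, \ldots, n\}^K$, with volumes $V_\mathbf{j} = \prod_{i=1}^K(x_{i, j_i} - x_{i, j_i-1})$. Additivity of both the integral and the finite sum then gives
\begin{equation*}
\int_Q g(\mathbf{x})\, d\mathbf{x} - \sum_{j_1, \ldots, j_K=1}^n g(\mathbf{c}_{j_1,\ldots,j_K}) V_\mathbf{j} = \sum_\mathbf{j} \int_{R_\mathbf{j}} \bigl[g(\mathbf{x}) - g(\mathbf{c}_\mathbf{j})\bigr] d\mathbf{x},
\end{equation*}
so it suffices to control each summand on the right-hand side.

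On each $R_\mathbf{j}$ I would invoke the multivariate mean value theorem: for every $\mathbf{x} \in R_\mathbf{j}$ there exists $\boldsymbol{\xi}$ on the segment from $\mathbf{c}_\mathbf{j}$ to $\mathbf{x}$ (hence in $Q$) such that $g(\mathbf{x}) - g(\mathbf{c}_\mathbf{j}) = \nabla g(\boldsymbol{\xi}) \cdot (\mathbf{x} - \mathbf{c}_\mathbf{j})$. Using $|x_i - c_{\mathbf{j},i}| \leq x_{i, j_i} - x_{i, j_i-1} \leq \epsilon_i \leq \max_i \epsilon_i$ for each $i$ together with an $\ell^1$ (or Cauchy--Schwarz) estimate yields the uniform bound
\begin{equation*}
|g(\mathbf{x}) - g(\mathbf{c}_\mathbf{j})| \leq K \max_{\boldsymbol{\xi} \in Q} \|\nabla g(\boldsymbol{\xi})\| \, \max_i \epsilon_i .
\end{equation*}
Integrating over $R_\mathbf{j}$ and summing over $\mathbf{j}$, and noting that $\sum_\mathbf{j} V_\mathbf{j} = \mathrm{vol}(Q) = \prod_i(b_i - a_i)$, I obtain the preliminary estimate
\begin{equation*}
\Biggl| \int_Q g(\mathbf{x})\, d\mathbf{x} - \sum_\mathbf{j} g(\mathbf{c}_\mathbf{j}) V_\mathbf{j} \Biggr| \leq K \max_{\boldsymbol{\xi} \in Q}\|\nabla g(\boldsymbol{\xi})\| \, \max_i \epsilon_i \cdot \prod_{i=1}^K (b_i - a_i).
\end{equation*}

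The final step is a purely algebraic repackaging to match the form of \eqref{eq:approximation_Riemann_sum}. Letting $i^{\ast\ast} = \arg\max_i \epsilon_i$ and singling out that factor in $\mathrm{vol}(Q)$, one writes $\prod_i(b_i - a_i) = (b_{i^{\ast\ast}} - a_{i^{\ast\ast}}) \prod_{i \neq i^{\ast\ast}}(b_i - a_i)$, then applies the two elementary inequalities $(b_{i^{\ast\ast}} - a_{i^{\ast\ast}}) \leq \max_i(b_i - a_i)$ and $(b_i - a_i) \leq n \epsilon_i$ for the remaining $K-1$ coordinates. This produces $\max_i \epsilon_i \cdot \mathrm{vol}(Q) \leq n^{K-1} \max_i(b_i - a_i) \prod_i \epsilon_i$, which is precisely the right-hand side of \eqref{eq:approximation_Riemann_sum}. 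The whole argument is elementary; the only mild subtlety is choosing the correct index $i^{\ast\ast}$ to factor out, so that $\max_i \epsilon_i$ gets absorbed into $\prod_i \epsilon_i$ rather than left dangling as an extra factor.
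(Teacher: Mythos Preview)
Your argument is correct and follows the same skeleton as the paper's proof (decompose into subrectangles, apply the mean value theorem, sum), but the two differ in how the sum is controlled after the mean value step. The paper first uses the \emph{integral} mean value theorem to replace $\int_{R_\mathbf{j}} g$ by $g(\boldsymbol{\tau}_\mathbf{j}) V_\mathbf{j}$, then bounds $V_\mathbf{j}\le \prod_i\epsilon_i$ and handles the remaining sum $\sum_\mathbf{j}\|\mathbf{c}_\mathbf{j}-\boldsymbol{\tau}_\mathbf{j}\|$ by telescoping one index, $\sum_{j_1}|x^{(i)}_{j_1}-x^{(i)}_{j_1-1}|=b^{(i)}-a^{(i)}$, which yields the factor $K\max_i(b_i-a_i)$ directly and leaves $n^{K-1}$ from the untouched sums. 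You instead bound $|g(\mathbf{x})-g(\mathbf{c}_\mathbf{j})|$ pointwise, integrate, and recover $K\max\|\nabla g\|\max_i\epsilon_i\cdot\mathrm{vol}(Q)$; the stated bound then follows from the purely algebraic inequality $\max_i\epsilon_i\cdot\mathrm{vol}(Q)\le n^{K-1}\max_i(b_i-a_i)\prod_i\epsilon_i$ via $(b_i-a_i)\le n\epsilon_i$. Your route is slightly more elementary (no integral mean value theorem, no telescoping), at the cost of the small bookkeeping step of choosing the right index $i^{\ast\ast}$ to factor out; the paper's telescoping trick avoids that repackaging but uses one more classical tool. Both arrive at the same constant.
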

\begin{proof}
To begin with, we can decompose the integral into summation of integrals over all the poly-intervals $Q_{j_1,j_2,\dots,j_K}$ constituting the mesh grid:
\begin{align}
    \int_Q g(\mathbf{x}) \,d\mathbf{x}=\sum_{j_1,j_2,\dots,j_K=1}^n\int_{Q_{j_1,j_2,\dots,j_K}} g(\mathbf{x})d\mathbf{x} .
\end{align}
    Using the fact the $g$ is continuous and that each poly-interval is compact, we can use the integral mean value theorem. Therefore for any $j_1,j_2,\dots,j_K$ there exists a $\boldsymbol{\tau}_{j_1,j_2,\dots,j_K}$ such that
    \begin{align}
       \int_{Q_{j_1,j_2,\dots,j_K}} g(\mathbf{x})d\mathbf{x}=g(\boldsymbol{\tau}_{j_1,j_2,\dots,j_K}) \prod_{i=1}^K(x_{i,j_i}-x_{i,j_i-1})\,.
    \end{align}
    This allows us to rewrite the l.h.s. of \eqref{eq:approximation_Riemann_sum} as a unique summation:
    \begin{align}
    \sum_{j_1,j_2,\dots,j_K=1}^{n}\left(g(\boldsymbol{\tau}_{j_1,j_2,\dots,j_K})-g(\mathbf{c}_{j_1,j_2,\dots,j_K})\right)\prod_{i=1}^K(x_{i,j_i}-x_{i,j_i-1}).
    \end{align}
    In order to bound its absolute value we use the triangular inequality, Cauchy-Schwartz inequality and the mutlivariate version of Lagrange's mean value theorem:
\begin{align}
\begin{split} \Bigg\vert\sum_{j_1,j_2,\dots,j_K=1}^{n}&\left(g(\boldsymbol{\tau}_{j_1,j_2,\dots,j_K})-g(\mathbf{c}_{j_1,j_2,\dots,j_K})\right)\prod_{i=1}^K(x_{i,j_i}-x_{i,j_i-1})\Bigg\vert\\
&\leq \sum_{j_1,j_2,\dots,j_K=1}^{n}\max_{\boldsymbol{\xi}\in Q_{j_1,j_2,\dots,j_K}}\|\nabla g(\boldsymbol{\xi})\|\|\mathbf{c}_{j_1,j_2,\dots,j_K}-\boldsymbol{\tau}_{j_1,j_2,\dots,j_K}\|\prod_{i=1}^K\Big\vert x_{i,j_i}-x_{i,j_i-1}\Big\vert\\
&\leq \max_{\boldsymbol{\xi}\in Q}\|\nabla g(\boldsymbol{\xi})\|\prod_{i=1}^K\epsilon_i\sum_{j_1,j_2,\dots,j_K=1}^{n}\|\mathbf{c}_{j_1,j_2,\dots,j_K}-\boldsymbol{\tau}_{j_1,j_2,\dots,j_K}\| \, .
\end{split}
\end{align}
Let us briefly focus on the last factor, and let us consider only the summation w.r.t., say, $j_1$:
\begin{align}
\sum_{j_1=1}^n\|\mathbf{c}_{j_1,j_2,\dots,j_K}-& \boldsymbol{\tau}_{j_1,j_2,\dots,j_K}\|\leq \sum_{i=1}^K\sum_{j_1=1}^n \Big\vert c^{(i)}_{j_1,j_2,\dots,j_K}-\tau^{(i)}_{j_1,j_2,\dots,j_K}\Big\vert\nonumber\\
&\quad \quad \leq \sum_{i=1}^K\sum_{j_1=1}^n\Big\vert x_{j_1}^{(i)}-x_{j_1-1}^{(i)} \Big\vert = \sum_{i=1}^K(b^{(i)}-a^{(i)})\leq K\max_{i\leq K}(b^{(i)}-a^{(i)})
\end{align}
where we used the standard $L^2-L^1$ norm inequality and $^{(i)}$ denotes the i-th component. Hence difference between the Riemann sum and the integral is controlled by
    \begin{align}
        Kn^{K-1} \max_{i\leq K}(b^{(i)}-a^{(i)})\max_{\boldsymbol{\xi}\in Q}\|\nabla g(\boldsymbol{\xi})\|\prod_{i=1}^K\epsilon_i \,.
    \end{align}
\end{proof}

\begin{rem}
Notice that if all $\epsilon_i=\frac{b^{(i)}-a^{(i)}}{n}$ then the above is still of order
\begin{align}
    N n^{K-1}\max_{i\leq K}(b^{(i)}-a^{(i)})^{K+1}\frac{1}{n^K}=\mathcal{O}\left( \frac{1}{n}\right) 
\end{align}which means it still vanishes when the decomposition is fine enough ($n\to\infty$), and if the dimension $K$ is not diverging.
\end{rem}

\begin{lemma}[Multivariate Laplace Approximation] \label{appendix laplace}
Let $f_N: Q\subset\mathbb{R}^K \longrightarrow \mathbb{R}$ be a differentiable sequence of function bounded away from the boundary of $Q = \bigtimes_{l=1}^K\Big[\mu_{N,l} + N_l^{-\frac{1}{2} - \delta}, \mu_{N,l} + N_l^{-\frac{1}{2} + \delta}\Big]$, satisfying $\nabla f_N(\boldsymbol\mu_N) = \mathbf{0}$ and $\mathcal{H}_{f_N}(\boldsymbol\mu_N) \prec 0$, such that $f_N(\boldsymbol\mu_N) > f_N(\mathbf{x})$ for all $\mathbf{x} \in Q$. Let $g(\mathbf{x})$ be analytic function in a neighborhood of $\boldsymbol\mu_N$, then for $\delta \in (0, \frac{1}{6})$ the following hold:
\begin{equation}\label{lemma laplace}
\int_{Q} g(\mathbf{x}) e^{Nf_N(\mathbf{x})} d\mathbf{x} =\sqrt{\frac{(2\pi)^K}{\prod_{l=1}^KN_l\det(-\mathcal{H}_{f_N}(\boldsymbol\mu_N))}} g(\boldsymbol\mu_N) e^{Nf_N(\boldsymbol\mu_N)} (1 + \mathcal{O}(N^{-\frac{1}{2} + \delta})).
\end{equation}

Here, $\nabla f_N$ is the gradient vector.
\end{lemma}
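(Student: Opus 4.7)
The plan is to execute the standard multivariate Laplace method. First, I would Taylor expand $f_N$ around its maximizer $\boldsymbol\mu_N$ up to second order with cubic remainder:
\[
f_N(\mathbf{x}) = f_N(\boldsymbol\mu_N) + \tfrac12 (\mathbf{x}-\boldsymbol\mu_N)^{T}\mathcal{H}_{f_N}(\boldsymbol\mu_N)(\mathbf{x}-\boldsymbol\mu_N) + R_N(\mathbf{x}),
\]
where the linear term vanishes since $\nabla f_N(\boldsymbol\mu_N)=\mathbf{0}$, and $R_N(\mathbf{x}) = O(\|\mathbf{x}-\boldsymbol\mu_N\|^3)$ uniformly on $Q$ by smoothness. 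Since $g$ is analytic at $\boldsymbol\mu_N$, I would also write $g(\mathbf{x}) = g(\boldsymbol\mu_N) + O(\|\mathbf{x}-\boldsymbol\mu_N\|)$.

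Next, I would perform the componentwise change of variables $y_l = \sqrt{N_l}\,(x_l - \mu_{N,l})$, whose Jacobian contributes the factor $\prod_l N_l^{-1/2}$. In these scaled coordinates the quadratic form becomes $\frac12 \mathbf{y}^{T}\mathcal{H}_{f_N}(\boldsymbol\mu_N)\mathbf{y}$, independent of $N$, while on $Q$ each $|y_l|$ stays bounded by $N_l^{\delta}$. The rescaled cubic remainder obeys
\[
N R_N(\mathbf{x}) = O(N\cdot\|\mathbf{x}-\boldsymbol\mu_N\|^{3}) = O(N^{-1/2+3\delta}),
\]
so that under the assumption $\delta<1/6$ one has $e^{NR_N(\mathbf{x})} = 1 + O(N^{-1/2+3\delta})$, uniformly on $Q$. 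Analogously the expansion of $g$ yields a factor $1+O(N^{-1/2+\delta})$ after rescaling.

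Having isolated the Gaussian integrand, I would then extend the integration from the rescaled box $\bigtimes_l [-N_l^{\delta},N_l^{\delta}]$ (or whatever the rescaled image of $Q$ is) to the whole of $\mathbb{R}^K$. Because $\mathcal{H}_{f_N}(\boldsymbol\mu_N)\prec 0$ uniformly in $N$ (by Lemma \ref{uniform convergence} and continuity in the limit), the Gaussian tail beyond $\|\mathbf{y}\|\sim N^{\delta}$ is of order $\exp(-c N^{2\delta})$ and thus absorbed into the overall multiplicative error. The resulting full-space Gaussian integral is computed via the standard identity
\[
\int_{\mathbb{R}^K} e^{\frac12 \mathbf{y}^{T}\mathcal{H}_{f_N}(\boldsymbol\mu_N)\mathbf{y}}\,d\mathbf{y} \;=\; \sqrt{\frac{(2\pi)^K}{\det(-\mathcal{H}_{f_N}(\boldsymbol\mu_N))}}.
\]
Collecting this with the Jacobian $\prod_l N_l^{-1/2}$, the prefactor $g(\boldsymbol\mu_N)e^{Nf_N(\boldsymbol\mu_N)}$, and the $(1+O(N^{-1/2+\delta}))$ multiplicative error inherited from the expansions above yields \eqref{lemma laplace}.

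The main technical obstacle is the uniform control of the cubic remainder on the whole rescaled domain: it is precisely the balance $N\cdot N^{-3/2+3\delta} = N^{-1/2+3\delta}=o(1)$ that forces the constraint $\delta<1/6$ and determines the size of the error term. The linear-in-$\mathbf{x}$ contribution of $g$ would naively produce a term of order $N^{-1/2+\delta}$, but its first-order Taylor coefficient integrates against an odd Gaussian and vanishes, leaving only a $O(1/N)$ quadratic contribution; I would nevertheless state the error as $O(N^{-1/2+\delta})$ to match the other terms and to make the bound compatible with the use of this lemma inside Lemma \ref{lemma conc uniq}.
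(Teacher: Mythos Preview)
Your proposal is correct and follows essentially the same route as the paper: Taylor expand $f_N$ and $g$ around $\boldsymbol\mu_N$, rescale via $t_l=\sqrt{N_l}(x_l-\mu_{N,l})$, and reduce to a Gaussian integral, with the cubic remainder producing the $O(N^{-1/2+3\delta})$ error that forces $\delta<1/6$. The only differences are cosmetic: you make the tail-extension to $\mathbb{R}^K$ and the parity cancellation of the linear term of $g$ explicit, whereas the paper absorbs both silently into its final display.
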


\begin{proof}
Observe from the left hand side of equation \eqref{lemma laplace} that:
\begin{multline}
\mathcal{L} = \int_{Q} g(\mathbf{x}) e^{Nf_N(\mathbf{x})} d\mathbf{x} = \int_{\mu_{N,1} - N_1^{-\frac{1}{2} + \delta}}^{\mu_{N,1} + N_1^{-\frac{1}{2} + \delta}} \cdots \int_{\mu_{N,K} - N_K^{-\frac{1}{2} + \delta}}^{\mu_{N,K} + N_K^{-\frac{1}{2} + \delta}} g(x_1,...,x_K) e^{Nf_N(x_1,...,x_K)} dx_1\cdots dx_K\,.
\end{multline}

Let's consider the following change of variables $t_l = \sqrt{N_l} (x_l - \mu_{N,l})$ for $l = 1, ..., K$. Then it follows that: $x_l = \frac{t_l}{\sqrt{N_l}} + \mu_{N,l}$ and $dx_l = \frac{dt_l}{\sqrt{N_l}}$. Now, the bounds on $\mathbf{x}\in Q$ becomes $t_l\in[-N_l^{\delta},N_l^{\delta}]$ and the integral transforms into:

\begin{multline}\label{laplace exp}
\mathcal{L} = \int_{- N_1^{\delta}}^{N_1^{\delta}} \cdots \int_{- N_K^{\delta}}^{N_K^{\delta}} g\bigg(\frac{t_1}{\sqrt{N_1}} + \mu_{N,1},...,\frac{t_K}{\sqrt{N_K}} + \mu_{N,K}\bigg) e^{Nf_N\left(\frac{t_1}{\sqrt{N_1}} + \mu_{N,1},...,\frac{t_K}{\sqrt{N_K}} + \mu_{N,K}\right)} \prod_{l=1}^K \frac{dt_l}{\sqrt{N_l}}.
\end{multline}
By an application of Taylor expansion of $f_N$ and $g$ around the vector $\boldsymbol\mu_{N}$, we have that: 

\begin{multline}\label{T exp Laplace}
e^{Nf_N\left(\frac{t_1}{\sqrt{N_1}} + \mu_{N,1},...,\frac{t_K}{\sqrt{N_K}} + \mu_{N,K}\right)} = e^{Nf_N(\boldsymbol\mu_{N}) + \frac{1}{2}(\mathbf{t}\mathcal{H}_{f_N}(\boldsymbol\mu_{N}), \mathbf{t} )} \left( 1+\mathcal{O}\left( N\left( \frac{N^{\delta}}{\sqrt{N}}\right) ^3\right)  \right)  \quad \text{and}\\
g\bigg(\frac{t_1}{\sqrt{N_1}} + \mu_{N,1},...,\frac{t_K}{\sqrt{N_K}} + \mu_{N,K}\bigg) = g(\boldsymbol\mu_{N}) + \nabla g(\boldsymbol\mu_{N})\bigg( \frac{N^{\delta}}{\sqrt{N}} \bigg) = \\
=g(\boldsymbol\mu_{N}) \bigg(1 + \mathcal{O}\bigg(N^{\delta - 1/2} \bigg) \bigg) .
\end{multline}
Now, following from \eqref{T exp Laplace}, the right side of \eqref{laplace exp} becomes:

\begin{multline}
\mathcal{L} = \prod_{l=1}^K \frac{1}{\sqrt{N_l}} \left( 1 + \mathcal{O}\left( N^{\delta - 1/2} \right)  \right)   \left( 1 + \mathcal{O}\left( N^{3\delta - 1/2} \right)  \right)  g(\boldsymbol\mu_{N}) e^{Nf_N\left(\boldsymbol\mu_{N}\right)} \cdot \\ \cdot \int_{- N_1^{\delta}}^{N_1^{\delta}} \cdots \int_{- N_K^{\delta}}^{N_K^{\delta}} e^{\frac{1}{2}( \mathbf{t}\mathcal{H}_{f_N}(\boldsymbol\mu_{N}), \mathbf{t} )} d\mathbf{t} \\
= \bigg(1 + \mathcal{O}\bigg(N^{3\delta - 1/2} \bigg) \bigg) \sqrt{\frac{(2\pi)^K}{\prod_{l=1}^KN_l\det{\left(-\mathcal{H}_{f_N}(\boldsymbol\mu_{N})\right)} }  } g(\boldsymbol\mu_{N}) e^{Nf_N\left(\boldsymbol\mu_{N}\right)} \, .
\end{multline}
Notice that we have bounded $t_l$ by its limit $N_l^{\delta}$. This completes the proof of Lemma \ref{appendix laplace}. 
\end{proof}

\bibliographystyle{abbrv}
\bibliography{ref}
\end{document}